\documentclass{article}
\usepackage{arxiv}

\usepackage[utf8]{inputenc} 
\usepackage[T1]{fontenc}    
\usepackage{hyperref}       
\usepackage{url}            
\usepackage{booktabs}       
\usepackage{amsfonts}       
\usepackage{nicefrac}       
\usepackage{microtype}      
\usepackage{graphicx}
\usepackage{enumerate}
\usepackage{animate}
\usepackage{scalerel}
\usepackage{natbib}
\usepackage[title]{appendix}
\usepackage{epstopdf}
\usepackage[flushleft]{threeparttable}
\usepackage{multirow}

\usepackage{amssymb}
\usepackage{booktabs}
\usepackage{mathtools}
\usepackage{makecell}
\usepackage[inline]{enumitem}
\usepackage[amsmath,thmmarks]{ntheorem}
\usepackage{multicol}
\usepackage{tikz} 
\usetikzlibrary{calc,backgrounds}
\usetikzlibrary{shapes.geometric, arrows, arrows.meta}
\usetikzlibrary{positioning}
\usetikzlibrary{decorations.pathreplacing,calligraphy}
\usetikzlibrary{fadings}
\usepackage{bbm}
\usepackage{bm}
\usepackage{xcolor}

\usepackage[caption=false,subrefformat=parens,labelformat=parens]{subfig}
\usepackage{algorithmic}
\usepackage{algorithm}
\ifpdf
  \DeclareGraphicsExtensions{.eps,.pdf,.png,.jpg}
\else
  \DeclareGraphicsExtensions{.eps}
\fi

\definecolor{vir1}{HTML}{440154}
\definecolor{vir2}{HTML}{443983}
\definecolor{vir3}{HTML}{31688E}
\definecolor{vir4}{HTML}{21918C}
\definecolor{vir5}{HTML}{35B779}
\definecolor{vir6}{HTML}{90D743}
\definecolor{vir7}{HTML}{FDE725}

\DeclareMathOperator*{\argmax}{\arg\!\max}

\newtheorem{theorem}{Theorem}[section]

\newtheorem{proposition}[theorem]{Proposition}

\theoremstyle{plain}

\theoremstyle{nonumberplain}
\theorembodyfont{\normalfont}
\theoremsymbol{\ensuremath{\square}}
\newtheorem{proof}{Proof}
\makeatother

\title{Generalized and Scalable Deep Gaussian Process Emulation}

\author{
  Deyu Ming\thanks{Corresponding author: \texttt{deyu.ming.16@ucl.ac.uk}.} \\
  School of Management\\
  University College London, UK \\
  \AND
  Daniel Williamson \\
  Land Environment Economics and Policy Institute \\
  University of Exeter, UK \\
}

\begin{document}
\maketitle

\begin{abstract}
Gaussian process (GP) emulators have become essential tools for approximating complex simulators, significantly reducing computational demands in optimization, sensitivity analysis, and model calibration. While traditional GP emulators effectively model continuous and Gaussian-distributed simulator outputs with homogeneous variability, they typically struggle with discrete, heteroskedastic Gaussian, or non-Gaussian data, limiting their applicability to increasingly common stochastic simulators. In this work, we introduce a scalable Generalized Deep Gaussian Process (GDGP) emulation framework designed to accommodate simulators with heteroskedastic Gaussian outputs and a wide range of non-Gaussian response distributions, including Poisson, negative binomial, and categorical distributions. The GDGP framework leverages the expressiveness of DGPs and extends them to latent GP structures, enabling it to capture the complex, non-stationary behavior inherent in many simulators while also modeling non-Gaussian simulator outputs. We make GDGP scalable by incorporating the Vecchia approximation for settings with a large number of input locations, while also developing efficient inference procedures for handling large numbers of replicates. In particular, we present methodological developments that further enhance the computation of the approach for heteroskedastic Gaussian responses. We demonstrate through a series of synthetic and empirical examples that these extensions deliver the practical application of GDGP emulators and a unified methodology capable of addressing diverse modeling challenges. The proposed GDGP framework is implemented in the open-source \texttt{R} package \texttt{dgpsi}.
\end{abstract}

\keywords{surrogate modeling, stochastic simulator, heteroskedasticity, non-Gaussian responses}
\section{Introduction}
Gaussian process (GP) emulators are widely used statistical surrogates for computationally expensive computer simulators that enable tasks such as optimization, sensitivity analysis and calibration to be performed without embedding the simulator directly. Although many simulators are deterministic, GPs express uncertainty in their outputs by assigning any finite collection of outputs a multivariate Gaussian distribution, with correlation controlled by a kernel function. However, many simulators have discrete outputs or other forms where Gaussian models are not appropriate, and the approach is generally to use a latent GP surface with a link function to transform to the outcomes in the likelihood. Examples from the literature have included binary responses~\citep{chang2016ice}, count modeling~\citep{salter2025poisson} and strictly-non negative functions~\citep{spiller2023zero}. Given the increasing use of stochastic simulators, such as agent-based models, \citet{baker2022analyzing} reviewed state-of-the-art GP-based methods for analyzing such simulators. However, most existing methods~\citep{goldberg1997regression,binois2018practical}, together with recent developments and applications~\citep{cole2022large,murph2024sensitivity,yi2024stochastic, patil2025vecchia}, focus on heteroskedastic GP emulation, in which simulator outputs are treated as continuous observations with heteroskedastic Gaussian noise.

Non-stationarity in GP emulation has been addressed separately in the literature through more flexible constructions such as deep Gaussian processes (DGPs; \citealp{salimbeni2017doubly, sauer2020active, ming2021deep}), treed GPs~\citep{gramacy2008bayesian}, and related extensions. However, these developments have largely been pursued independently of the ``non-Gaussian" simulator literature and have mostly focused on deterministic or continuous-response settings. To the best of our knowledge, only very recent work~\citep{cooper2026modernizing} has started to examine the combination of DGPs with binary outputs in a fully Bayesian framework. More generally, the existing literature has tended to produce model-specific methods designed for particular response types, rather than a unified approach applicable across a broad range of output distributions. This leaves an important gap in the emulation literature. There is currently no general framework that combines the flexibility of DGP emulators with the ability to handle diverse non-Gaussian and heteroskedastic Gaussian outputs. In this work, we propose a Generalized Deep Gaussian Process (GDGP) emulation framework for stochastic simulators whose outputs may follow a wide class of non-Gaussian distributions, including Poisson, negative binomial, and categorical distributions, among others. By exploiting the flexibility of DGP emulators, the proposed framework not only accommodates different response distributions, but also captures non-stationary behaviors when present.

From a computational perspective, our work further develops GDGP by enhancing its scalability. Two widely used approaches for scalable GP inference are the sparse inducing-point approximation~\citep{titsias2009variational} and the Vecchia approximation~\citep{vecchia1988estimation,katzfuss2020vecchia,katzfuss2021general}. The former was extended to DGPs in the variational framework of \citet{salimbeni2017doubly}, while the latter was shown by \citet{sauer2023vecchia} to achieve strong performance for DGP emulation in a fully Bayesian setting. Motivated by the computational advantages of Stochastic Imputation (SI; \citealp{ming2021deep}) as an approximation to fully Bayesian DGP inference, we derive Vecchia-based approximations for SI to enable efficient inference of GDGP for problems with large numbers of input locations, and further show that SI can efficiently accommodate large numbers of replicates. In addition, for the heteroskedastic Gaussian case, we derive a suite of closed-form expressions that provide further computational savings.

The remainder of the manuscript is organized as follows. Section~\ref{sec:dgp} reviews DGPs, followed by Section~\ref{sec:gdgp}, which introduces GDGPs and describes SI-based inference for prediction and training. Section~\ref{sec:scalability} presents scalability extensions of GDGP for settings with large numbers of input locations and replicates, together with further methodological developments for the heteroskedastic Gaussian case. Section~\ref{sec:experiments} reports a series of experiments covering heteroskedastic Gaussian, categorical, and count distributions. Finally, Section~\ref{sec:conclusion} concludes the manuscript.

\section{Review of Deep Gaussian Processes}
\label{sec:dgp}
In this section, we review the DGP formulation of \citet{ming2021deep}, since its inference extends naturally to the GDGP framework presented in Section~\ref{sec:gdgp}. A DGP, whose hierarchical architecture is shown in Figure~\ref{fig:dgp}, is defined as an $L$-layer feed-forward network of stationary GPs. The model takes $\mathbf{x}\in\mathbb{R}^{N\times D}$, consisting of $N$ input points in $D$ dimensions, and produces $\mathbf{F}\in\mathbb{R}^{N\times Q}$ as the output, where the $p$-th column $\mathbf{F}^{(p)}$ denotes the $p$-th output of the network for $p=1,\dots,Q$.

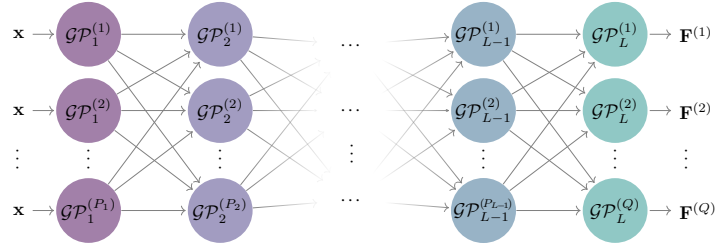
\begin{figure}[ht!]
\centering
\scalebox{0.7}{
\begin{tikzpicture}[shorten >=1pt,->,draw=black!50, node distance=4cm]
    \tikzstyle{every pin edge}=[<-,shorten <=1pt]
    \tikzstyle{neuron}=[circle,fill=black!25,minimum size=35pt,inner sep=0pt]
    \tikzstyle{layer1}=[neuron, fill=vir1!50];
    \tikzstyle{layer2}=[neuron, fill=vir2!50];
    \tikzstyle{layer3}=[neuron, fill=vir3!50];
    \tikzstyle{layer4}=[neuron, fill=vir4!50];
    \tikzstyle{innerlayer}=[neuron, fill=white];
    \tikzstyle{annot} = [text width=4em, text centered]

    \node[layer1, pin=left:$\mathbf{x}$] (l1-0) at (0,0.2) {$\mathcal{GP}^{(1)}_{1}$};
    \node[layer1, pin=left:$\mathbf{x}$] (l1-1) at (0,-1.25) {$\mathcal{GP}^{(2)}_{1}$};
    \node[layer1, pin=left:$\mathbf{x}$] (l1-2) at (0,-3.15) {$\mathcal{GP}^{(P_1)}_{1}$};
    \node[layer2] (l2-0) at (2.5,0.2) {$\mathcal{GP}^{(1)}_{2}$};
    \node[layer2] (l2-1) at (2.5,-1.25) {$\mathcal{GP}^{(2)}_{2}$};
    \node[layer2] (l2-2) at (2.5,-3.15) {$\mathcal{GP}^{(P_2)}_{2}$};
    \node[innerlayer] (I-0) at (5,0) {\ldots};
    \node[innerlayer] (I-1) at (5,-1.25) {\ldots};
    \node[innerlayer] (I-2) at (5,-2.95) {\ldots};
    \node[layer3] (ln-0) at (7.5,0.2) {$\mathcal{GP}^{(1)}_{L-1}$};
    \node[layer3] (ln-1) at (7.5,-1.25) {$\mathcal{GP}^{(2)}_{L-1}$};
    \node[layer3] (ln-2) at (7.5,-3.15) {$\mathcal{GP}^{\scaleto{(\!P_{L\!-\!1}\!)}{6.2pt}}_{L-1}$};
    \node[layer4,pin={[pin edge={->}]right:$\mathbf{F}^{(1)}$}] (f-0) at (10,0.2) {$\mathcal{GP}^{(1)}_{L}$};
    \node[layer4, pin={[pin edge={->}]right:$\mathbf{F}^{(2)}$}] (f-1) at (10,-1.25) {$\mathcal{GP}^{(2)}_{L}$};
    \node[layer4, pin={[pin edge={->}]right:$\mathbf{F}^{(Q)}$}] (f-2) at (10,-3.15) {$\mathcal{GP}^{(Q)}_{L}$};
\path (l1-1) -- (l1-2) node [black, midway, sloped] {$\dots$};
\path (l2-1) -- (l2-2) node [black, midway, sloped] {$\dots$};
\path (ln-1) -- (ln-2) node [black, midway, sloped] {$\dots$};
\path (I-1) -- (I-2) node [black, midway, sloped] {$\dots$};
\path (f-1) -- (f-2) node [black, midway, sloped] {$\dots$};
\path (l1-1) -- (l1-2) node [black, midway, sloped,transform canvas={xshift=-13.5mm}] {$\dots$};
\path (f-1) -- (f-2) node [black, midway, sloped,transform canvas={xshift=13.5mm}] {$\dots$};
\draw[->] (l1-0) -- (l2-0);
\draw[->] (l1-0) -- (l2-1);
\draw[->] (l1-0) -- (l2-2);
\draw[->] (l1-1) -- (l2-0);
\draw[->] (l1-1) -- (l2-1);
\draw[->] (l1-1) -- (l2-2);
\draw[->] (l1-2) -- (l2-0);
\draw[->] (l1-2) -- (l2-1);
\draw[->] (l1-2) -- (l2-2);
\draw[->,path fading=east] (l2-0) -- (I-0);
\draw[->,path fading=east] (l2-0) -- (I-1);
\draw[->,path fading=east] (l2-0) -- (I-2);
\draw[->,path fading=east] (l2-1) -- (I-0);
\draw[->,path fading=east] (l2-1) -- (I-1);
\draw[->,path fading=east] (l2-1) -- (I-2);
\draw[->,path fading=east] (l2-2) -- (I-0);
\draw[->,path fading=east] (l2-2) -- (I-1);
\draw[->,path fading=east] (l2-2) -- (I-2);
\draw[->,path fading=west] (I-0) -- (ln-0);
\draw[->,path fading=west] (I-0) -- (ln-1);
\draw[->,path fading=west] (I-0) -- (ln-2);
\draw[->,path fading=west] (I-1) -- (ln-0);
\draw[->,path fading=west] (I-1) -- (ln-1);
\draw[->,path fading=west] (I-1) -- (ln-2);
\draw[->,path fading=west] (I-2) -- (ln-0);
\draw[->,path fading=west] (I-2) -- (ln-1);
\draw[->,path fading=west] (I-2) -- (ln-2);
\draw[->] (ln-0) -- (f-0);
\draw[->] (ln-0) -- (f-1);
\draw[->] (ln-0) -- (f-2);
\draw[->] (ln-1) -- (f-0);
\draw[->] (ln-1) -- (f-1);
\draw[->] (ln-1) -- (f-2);
\draw[->] (ln-2) -- (f-0);
\draw[->] (ln-2) -- (f-1);
\draw[->] (ln-2) -- (f-2);
\end{tikzpicture}}
\caption{The hierarchy of DGP model that produces $\mathbf{F}$ given the input $\mathbf{x}$.}
\label{fig:dgp}
\end{figure}

Let $\{\mathcal{GP}^{(p)}_{l}\}_{p=1,\dots,P_l,l=1,\dots,L}$ with $P_L=Q$ be the stationary GPs that form the DGP and $\mathbf{W}^{(p)}_l\in\mathbb{R}^{N\times 1}$ be the output of $\mathcal{GP}^{(p)}_{l}$ (with $\mathbf{W}^{(p)}_L=\mathbf{F}^{(p)}$ for $p=1,\dots,Q$), then $\{\mathbf{W}^{(p)}_l\}_{p=1,\dots,P_l}$ at the given layer $l$ are assumed independent and multivariate normally distributed:
\begin{equation}
\label{eq:gp}
\mathbf{W}^{(p)}_l|\mathbf{W}_{l-1}\overset{\text{ind}}{\sim}\mathcal{N}(\mathbf{0},\,\boldsymbol{\Sigma}^{(p)}_l(\mathbf{W}_{l-1})),
\end{equation}
for all $p=1,\dots,P_l$, where $\mathbf{W}_{l-1}=(\mathbf{W}^{(1)}_{l-1},\dots,\mathbf{W}^{(P_{l-1})}_{l-1})\in\mathbb{R}^{N\times P_{l-1}}$ with $\mathbf{W}_{0}=\mathbf{x}$, and $\boldsymbol{\Sigma}^{(p)}_l(\mathbf{W}_{l-1})=(\sigma^{(p)}_l)^2\mathbf{R}^{(p)}_l(\mathbf{W}_{l-1})\in\mathbb{R}^{N\times N}$ is the covariance matrix with $\mathbf{R}^{(p)}_l(\mathbf{W}_{l-1})$ being the correlation matrix. The $ij$-th element of $\mathbf{R}^{(p)}_l(\mathbf{W}_{l-1})$ is specified by $
k_l^{(p)}(\mathbf{W}_{l-1,i*},\,\mathbf{W}_{l-1,j*})+\eta\mathbbm{1}_{\{i=j\}}$, where $k_l^{(p)}(\cdot,\cdot)$ is a known kernel function with $\eta$ being the nugget term and $\mathbbm{1}_{\{\cdot\}}$ being the indicator function. The kernel function $k_l^{(p)}(\cdot,\cdot)$ is specified in the following multiplicative form:
\begin{equation*}
k_l^{(p)}(\mathbf{W}_{l-1,i*},\,\mathbf{W}_{l-1,j*})=\prod_{d=1}^{P_{l-1}} k_{l,d}^{(p)}(W_{l-1,id},\,W_{l-1,jd}),
\end{equation*}
where $k_{l,d}^{(p)}(\cdot,\cdot)$ is a one-dimensional kernel function for the $d$-th dimension of input $\mathbf{W}_{l-1}$ to $\mathcal{GP}^{(p)}_{l}$. Two common choices of $k_{l,d}^{(p)}(\cdot,\cdot)$ include squared exponential and Matérn kernels. For notational convenience, in the remainder of the manuscript we write $\{\mathcal{GP}^{(p)}_{l}\}$ and $\{\mathbf{W}^{(p)}_l\}$ as shorthand for $\{\mathcal{GP}^{(p)}_{l}\}_{p=1,\dots,P_l,l=1,\dots,L}$ and $\{\mathbf{W}^{(p)}_l\}_{p=1,\dots,P_l,l=1,\dots,L}$, respectively.

\section{Generalized Deep Gaussian Processes}
\label{sec:gdgp}
The DGP formulation reviewed in Section~\ref{sec:dgp} is capable of emulating non-stationary computer models with deterministic responses, as well as stochastic responses with homogeneous Gaussian behavior. To generalize this framework to stochastic computer models with heteroskedastic or non-Gaussian responses, we draw on the idea underlying generalized linear models (GLMs) by introducing a likelihood layer at the top of the DGP hierarchy (see Figure~\ref{fig:dgp_non_Gaussian}), thereby explicitly modeling the stochasticity and distributional form of the outputs.

Let $\mathbf{Y}=(Y_1,\dots,Y_N)^\top$ denote a vector of $N$ outputs with corresponding inputs $\mathbf{x}$. We assume $Y_i$, for $i=1,\dots,N$, are conditionally independent with probability density function (PDF) $p(y_i \mid \boldsymbol{\phi}_i)$, given a vector of $Q$ distributional parameters $\boldsymbol{\phi}_i=(\phi_{i1},\dots,\phi_{iQ})$. Let $\boldsymbol{\Phi}=(\boldsymbol{\phi}_1^\top,\dots,\boldsymbol{\phi}_N^\top)^\top$, and let $g_q$, for $q=1,\dots,Q$, denote known monotonic link functions. Then the $q$-th column of $\boldsymbol{\Phi}$, denoted by $\boldsymbol{\Phi}_{*q}$, is linked to the DGP output through
\begin{equation}
g_q(\boldsymbol{\Phi}_{*q})=\mathbf{F}^{(q)},
\end{equation}
for $q=1,\dots,Q$, where $\mathbf{F}^{(q)}$ is the $p$-th column of $\mathbf{F}$, the output of the DGP network introduced in Section~\ref{sec:dgp}.
 
\begin{figure}[htbp]
\centering
\scalebox{0.7}{
\begin{tikzpicture}[shorten >=1pt,->,draw=black!50, node distance=4cm]
    \tikzstyle{every pin edge}=[<-,shorten <=1pt]
    \tikzstyle{neuron}=[circle,fill=black!25,minimum size=35pt,inner sep=0pt]
    \tikzstyle{layer1}=[neuron, fill=green!50];
    \tikzstyle{layer2}=[neuron, fill=red!50];
    \tikzstyle{layer3}=[neuron, fill=blue!50];
    \tikzstyle{layer4}=[neuron, fill=purple!50];
    \tikzstyle{dgp}=[neuron, fill=vir5!50];
    \tikzstyle{likelihood}=[neuron, fill=vir6!50];
    \tikzstyle{innerlayer}=[neuron, fill=white];
    \tikzstyle{annot} = [text width=4em, text centered]

    \node[dgp, pin=left:$\mathbf{x}$] (dgp-1) at (0,0.2) {$\mathcal{DGP}$};
    \node[likelihood, pin={[pin edge={->}]right:$\mathbf{Y}$}] (like-1) at (2.5,0.2) {$\mathcal{L}$};
\path [draw] (dgp-1) -- (like-1) node [black, midway, sloped, fill=white] {$\mathbf{F}$};
\end{tikzpicture}}
\caption{The hierarchy of GDGP. The $\mathcal{L}$ node is the likelihood layer that represents the distributional relation between $\mathbf{F}$ and $\mathbf{Y}$.}
\label{fig:dgp_non_Gaussian}
\end{figure}
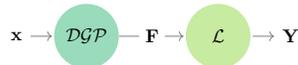

Since GDGP may be viewed as a DGP with a non-GP node in the final layer, inference can be implemented naturally within the SI framework developed for DGPs. The SI framework was introduced as a natural approximation to fully Bayesian inference for DGPs, analogous to the common practice in GP modeling of first estimating the correlation lengthscales and then conditioning on them for posterior inference and prediction. A comparison of SI with fully Bayesian inference~\citep{sauer2020active} and with variational inference~\citep{salimbeni2017doubly} for DGPs is provided in~\citet{ming2021deep}.

In the remainder of this section, we present the details of the three components of SI inference for GDGP, namely prediction (Section~\ref{sec:prediction}), imputation (Section~\ref{sec:imputation}), and training (Section~\ref{sec:training}).

\subsection{Prediction}
\label{sec:prediction}
The SI framework leverages the linked Gaussian process (LGP;~\citealp{kyzyurova2018coupling,ming2021linked}) to give a closed form expression to the posterior predictive distribution. Let $\boldsymbol{\theta}^{(p)}_l=\{(\sigma_l^{(p)})^2,\boldsymbol{\gamma}_l^{(p)}\}$ with $\boldsymbol{\gamma}_l^{(p)}=(\gamma_{l,1}^{(p)},\dots,\gamma_{l,P_{l-1}}^{(p)})^\top$ be the model parameters in $\mathcal{GP}^{(p)}_l$ and assume that $\boldsymbol{\theta}^{(p)}_l$ are known and distinct for all $p=1,\dots,P_l$ and $l=1,\dots,L$. Then given a realization $\mathbf{y}$ of output $\mathbf{Y}$, the posterior predictive distribution of the GDGP output $Y_0(\mathbf{x}_0)$ at a new input $\mathbf{x}_0$ can be approximated as follows:
\begin{align*}
\label{eq:prediction}
p(y_0|\mathbf{x}_0;\mathbf{y},\mathbf{x})=&\int p(y_0|\mathbf{x}_0;\mathbf{y},\mathbf{f},\{\mathbf{w}^{(p)}_{l}\},\mathbf{x})\,p(\mathbf{f},\{\mathbf{w}^{(p)}_{l}\}|\mathbf{y},\mathbf{x})\,\mathrm{d}\mathbf{f}\,\mathrm{d}\{\mathbf{w}^{(p)}_{l}\}\nonumber\\
=&\int\left(\int p(y_0|\mathbf{f}_0)\,p(\mathbf{f}_0|\mathbf{x}_0;\mathbf{f},\{\mathbf{w}^{(p)}_{l}\},\mathbf{x})\,\mathrm{d}\mathbf{f}_0\right)\,p(\mathbf{f},\{\mathbf{w}^{(p)}_{l}\}|\mathbf{y},\mathbf{x})\,\mathrm{d}\mathbf{f}\,\mathrm{d}\{\mathbf{w}^{(p)}_{l}\}\nonumber\\
=&\mathbb{E}_{\mathbf{F},\{\mathbf{W}^{(p)}_{l}\}|\mathbf{y},\mathbf{x}}\left[\int p(y_0|\mathbf{f}_0)\,p(\mathbf{f}_0|\mathbf{x}_0;\mathbf{F},\{\mathbf{W}^{(p)}_{l}\},\mathbf{x})\,\mathrm{d}\mathbf{f}_0\right]\nonumber\\
\mathrel{\dot=}&\frac{1}{K}\sum_{k=1}^K \int p(y_0|\mathbf{f}_0)\,p(\mathbf{f}_0|\mathbf{x}_0;\mathbf{f}_k,\{\mathbf{w}^{(p)}_{l}\}_k,\mathbf{x})\,\mathrm{d}\mathbf{f}_0\\
\mathrel{\dot=}&\frac{1}{K}\sum_{k=1}^K \int p(y_0|\mathbf{f}_0)\,\prod_{q=1}^{Q}\widehat{p}(f^{(q)}_0|\mathbf{x}_0;\mathbf{f}_k,\{\mathbf{w}^{(p)}_{l}\}_k,\mathbf{x})\,\mathrm{d}\mathbf{f}_0\\
=&\frac{1}{K}\sum_{i=1}^K \widehat{p}(y_0|\mathbf{x}_0;\mathbf{f}_k,\{\mathbf{w}^{(p)}_{l}\}_k,\mathbf{x}),
\end{align*}
where $\prod_{q=1}^{Q}\widehat{p}(f^{(q)}_0|\mathbf{x}_0;\mathbf{f}_k,\{\mathbf{w}^{(p)}_{l}\}_k,\mathbf{x})$ is the LGP approximation to $p(\mathbf{f}_0|\mathbf{x}_0;\mathbf{f}_k,\{\mathbf{w}^{(p)}_{l}\}_k,\mathbf{x})$. Given a realization $\mathbf{f}_k$ and $\{\mathbf{w}^{(p)}_{l}\}_k$ of $\mathbf{F}$ and $\{\mathbf{W}^{(p)}_{l}\}$ respectively, $\widehat{p}(f^{(q)}_0|\mathbf{x}_0;\mathbf{f}_k,\{\mathbf{w}^{(p)}_{l}\}_k,\mathbf{x})$ for $q=1,\dots,Q$ defines an univariate normal distribution with analytically tractable mean $\mu^{(q)}_{1\rightarrow L,k}(\mathbf{x}_0)$ and variance ${\sigma^2}^{(q)}_{1\rightarrow L,k}(\mathbf{x}_0)$ that can be obtained by iterating the following formulae:
\begin{align}
\label{eq:linkgp_mean}
\mu^{(q)}_{1\rightarrow l,k}(\mathbf{x}_0)=&\mathbf{I}_{l,k}^{(q)}(\mathbf{x}_0)^\top\left(\mathbf{R}^{(q)}_{l,k}\right)^{-1}\mathbf{w}^{(q)}_{l,k},\\
\label{eq:linkgp_var}
{\sigma^2}^{(q)}_{1\rightarrow l,k}(\mathbf{x}_0)=&\left(\mathbf{w}^{(q)}_{l,k}\right)^\top\left(\mathbf{R}^{(q)}_{l,k}\right)^{-1}\mathbf{J}^{(q)}_{l,k}(\mathbf{x}_0)\left(\mathbf{R}^{(q)}_{l,k}\right)^{-1}\mathbf{w}^{(q)}_{l,k}-\left(\mathbf{I}_{l,k}^{(q)}(\mathbf{x}_0)^\top\left(\mathbf{R}^{(q)}_{l,k}\right)^{-1}\mathbf{w}^{(q)}_{l,k}\right)^2\nonumber\\
&\quad+\left(\sigma^{(q)}_{l}\right)^2\left(1+\eta^{(q)}_{l}-\mathrm{tr}\left\{\left(\mathbf{R}^{(q)}_{l,k}\right)^{-1}\mathbf{J}^{(q)}_{l,k}(\mathbf{x}_0)\right\}\right)
\end{align}
for $l=2,\dots,L$ and $q=1,\dots,P_l$, where the $i$-th element of $\mathbf{I}^{(q)}_{l,k}(\mathbf{x}_0)\in\mathbb{R}^{N\times 1}$ is given by 
    $$
    \prod_{d=1}^{P_{l-1}}\xi^{(q)}_{l}\left({\mu}^{(d)}_{1\rightarrow(l-1),k}(\mathbf{x}_0),\,{{\sigma}^{2}}^{(d)}_{1\rightarrow(l-1),k}(\mathbf{x}_0),\,(\mathbf{w}^{(d)}_{l-1,k})_{i}\right),
    $$
the $ij$-th element of $\mathbf{J}^{(q)}_{l,k}(\mathbf{x}_0)\in\mathbb{R}^{N\times N}$ is given by 
    $$
   \prod_{d=1}^{P_{l-1}}\zeta^{(q)}_{l}\left({\mu}^{(d)}_{1\rightarrow(l-1),k}(\mathbf{x}_0),\,{{\sigma}^{2}}^{(d)}_{1\rightarrow(l-1),k}(\mathbf{x}_0),\,(\mathbf{w}^{(d)}_{l-1,k})_{i},\,(\mathbf{w}^{(d)}_{l-1,k})_{j}\right),
    $$
and ${\mu}^{(q)}_{1\rightarrow 1,k}(\mathbf{x}_0)$ and ${{\sigma}^{2}}^{(q)}_{1\rightarrow 1,k}(\mathbf{x}_0)$ are given by
\begin{align}
\label{eq:gp_mean}
{\mu}^{(q)}_{1\rightarrow 1,k}(\mathbf{x}_0)&=\mathbf{r}^{(q)}(\mathbf{x}_0)^\top\left(\mathbf{R}^{(q)}_{1}\right)^{-1}\mathbf{w}^{(q)}_{1,k}\\
\label{eq:gp_var}
{{\sigma}^{2}}^{(q)}_{1\rightarrow 1,k}(\mathbf{x}_0)&=\left(\sigma^{(q)}_1\right)^2\left(1+\eta^{(q)}_1-\mathbf{r}^{(q)}(\mathbf{x}_0)^\top\left(\mathbf{R}^{(q)}_{1}\right)^{-1}\mathbf{r}^{(q)}(\mathbf{x}_0)\right)   
\end{align}
respectively, for $q=1,\dots,P_1$, where $\mathbf{r}^{(q)}(\mathbf{x}_0)=[k^{(q)}_1(\mathbf{x}_0,\mathbf{x}_{1*}),\dots,k^{(q)}_1(\mathbf{x}_0,\mathbf{x}_{N*})]^\top$, and $\xi^{(q)}_{l}(\cdot,\cdot,\cdot)$ and $\zeta^{(q)}_{l}(\cdot,\cdot,\cdot,\cdot)$ are analytically tractable functions given in~\citet[Appendix A]{ming2021linked} for $\mathcal{GP}^{(q)}_{l}$ with squared exponential or Matérn kernels.

Let $\tilde{\mu}^{Y}_{0,k}$ and $(\tilde{\sigma}^{Y}_{0,k})^2$ denote the mean and variance of $\widehat{p}(y_0|\mathbf{x}_0;\mathbf{f}_k,\{\mathbf{w}^{(p)}_{l}\}_k,\mathbf{x})$, respectively. For many parametric distributions for the likelihood layer, $p(y_k|\boldsymbol{\phi}_k=\mathbf{g}^{-1}(\mathbf{f}_k))$, including the heteroskedastic Gaussian, Poisson, negative binomial, and zero-inflated Poisson and negative binomial distributions considered in the work, the LGP approximation yields closed-form expressions for $\tilde{\mu}^{Y}_{0,k}$ and $(\tilde{\sigma}^{Y}_{0,k})^2$; see Appendix~\ref{app:mean_variance} for a non-exhaustive list of distributions admitting such expressions. The mean and variance of $Y_0(\mathbf{x}_0)$ can therefore be approximated by

\begin{equation}
\label{eq:mean_variance}
    \tilde{\mu}^{Y}_0\mathrel{\dot=}\frac{1}{K}\sum_{k=1}^K\tilde{\mu}^{Y}_{0,k}\quad\mathrm{and}\quad
    (\tilde{\sigma}^{Y}_0)^2\mathrel{\dot=}\frac{1}{K}\sum_{k=1}^K((\tilde{\mu}^{Y}_{0,k})^2+(\tilde{\sigma}^{Y}_{0,k})^2)-(\tilde{\mu}^{Y}_0)^2
\end{equation}
respectively. When closed-form expressions for the relevant summaries are unavailable, or when other quantities of interest (QoIs) are needed to characterize $Y_0(\mathbf{x}_0)$, such as category probabilities in the categorical case, fast sample-based approximations can be obtained via the method of composition~\citep{tanner1993tools}. Specifically, noting that
\begin{align*}
p(y_0,\mathbf{f}_0,\mathbf{f},\{\mathbf{w}^{(p)}_{l}\}|\mathbf{x}_0;\mathbf{y},\mathbf{x})=& p(y_0|\mathbf{f}_0)\,p(\mathbf{f}_0|\mathbf{x}_0;\mathbf{f},\{\mathbf{w}^{(p)}_{l}\},\mathbf{x})\,p(\mathbf{f},\{\mathbf{w}^{(p)}_{l}\}|\mathbf{y},\mathbf{x})\\
\mathrel{\dot=}&p(y_0|\mathbf{f}_0)\,\prod_{q=1}^{Q}\widehat{p}(f^{(q)}_0|\mathbf{x}_0;\mathbf{f},\{\mathbf{w}^{(p)}_{l}\},\mathbf{x})\,p(\mathbf{f},\{\mathbf{w}^{(p)}_{l}\}|\mathbf{y},\mathbf{x})\,,
\end{align*}
we can generate $K$ approximate samples from $p(y_0|\mathbf{x}_0;\mathbf{y},\mathbf{x})$ by repeatedly applying Algorithm~\ref{alg:composition}. These samples can then be used to approximate the mean, variance, or other QoIs of $Y_0(\mathbf{x}_0)$. The overall prediction procedure for GDGP is summarized in Algorithm~\ref{alg:prediction}.

\begin{algorithm}[htbp]
\caption{Method of Composition}
\label{alg:composition}
\begin{algorithmic}[1]
    \STATE{Draw $\mathbf{f}_k$ and $\{\mathbf{w}^{(p)}_{l}\}_k$ from $p(\mathbf{f},\{\mathbf{w}^{(p)}_{l}\}|\mathbf{y},\mathbf{x})$;}
    \STATE{Draw $\mathbf{f}_{0,k}$ from the LGPs $\widehat{p}(f^{(q)}_0|\mathbf{x}_0;\mathbf{f}_k,\{\mathbf{w}^{(p)}_{l}\}_k,\mathbf{x})$ for all $q=1,\dots,Q$;}
    \label{step2}
    \STATE{ Draw $y_{0,k}$ from $p(y_0|\mathbf{f}_{0,k})$.}
    \label{step3}
\end{algorithmic} 
\end{algorithm}

\begin{algorithm}[htbp]
\caption{Prediction from the GDGP}
\label{alg:prediction}
\begin{algorithmic}[1]
\REQUIRE{\begin{enumerate*}[label=(\roman*)]
  \item Observations $\mathbf{x}$ and $\mathbf{y}$;
  \item Trained $\{\mathcal{GP}^{(p)}_l\}_{p=1,\dots,P_l,l=1,\dots,L}$ in the GDGP hierarchy;
  \item A new input position $\mathbf{x}_0$.
\end{enumerate*}}
\ENSURE{Mean, $\tilde{\mu}^{Y}_0$, and variance, $(\tilde{\sigma}^{Y}_0)^2$, of $Y_0(\mathbf{x}_0)$.}
\STATE{\label{alg:mi}Draw $K$ realizations $\mathbf{f}_1,\dots,\mathbf{f}_K$ and $\{\mathbf{w}^{(p)}_l\}_1,\dots,\{\mathbf{w}^{(p)}_l\}_K$ from $p(\mathbf{f},\{\mathbf{w}^{(p)}_l\}|\mathbf{y},\mathbf{x})$;}
\STATE{\label{alg:lgp}Construct $K$ LGPs $\widehat{p}(f^{(q)}_0|\mathbf{x}_0;\mathbf{f}_1,\{\mathbf{w}^{(p)}_{l}\}_1,\mathbf{x}),\dots,\widehat{p}(f^{(q)}_0|\mathbf{x}_0;\mathbf{f}_K,\{\mathbf{w}^{(p)}_{l}\}_K,\mathbf{x})$ for all $q=1,\dots,Q$;}
\STATE{Compute the mean, variance, or other QoIs of $Y_0(\mathbf{x}_0)$ using
\begin{enumerate*}[label=(\roman*)]
\item equations~\eqref{eq:mean_variance}, when the relevant closed-form expressions are available; or
\item the method of composition by executing Steps~\ref{step2}--\ref{step3} of Algorithm~\ref{alg:composition} for all $K$ LGPs.
\end{enumerate*}}
\end{algorithmic} 
\end{algorithm}

\subsection{Imputation}
\label{sec:imputation}
Simulations or imputations of $\mathbf{F}$ and $\{\mathbf{W}^{(p)}_{l}\}$ from $p(\mathbf{f},\{\mathbf{w}^{(p)}_{l}\}|\mathbf{y},\mathbf{x})$ in Step~\ref{alg:mi} of Algorithm~\ref{alg:prediction} cannot be achieved exactly due to the complexity introduced by the GDGP hierarchy. However, the conditional posteriors $p(\mathbf{w}^{(p)}_{l}|\{\mathbf{w}^{(p)}_l\}\setminus\mathbf{w}^{(p)}_{l},\mathbf{f},\mathbf{y},\mathbf{x})$ can be expressed as
\begin{equation}
\label{eq:dgp_pos}
p(\mathbf{w}^{(p)}_{l}|\{\mathbf{w}^{(p)}_l\}\setminus\mathbf{w}^{(p)}_{l},\mathbf{f},\mathbf{y},\mathbf{x})\propto \prod_{q=1}^{P_{l+1}}p(\mathbf{w}^{(q)}_{l+1}|\mathbf{w}^{(1)}_{l},\dots,\mathbf{w}^{(p)}_{l},\dots,\mathbf{w}^{(P_{l})}_{l})\,p(\mathbf{w}^{(p)}_{l}|\mathbf{w}^{(1)}_{l-1},\dots,\mathbf{w}^{(P_{l-1})}_{l-1})
\end{equation}
for $p=1,\dots,P_l$ and $l=1,\dots,L-1$, where $\mathbf{w}^{(q)}_{L}=\mathbf{f}^{(q)}$ and $P_L=Q$. Analogously,
\begin{align}
\label{eq:lik_pos}
p(\mathbf{f}^{(q)}|\mathbf{f}\setminus\mathbf{f}^{(q)},\{\mathbf{w}^{(p)}_l\},\mathbf{y},\mathbf{x})&\propto p(\mathbf{y}|\mathbf{f}^{(1)},\dots,\mathbf{f}^{(q)},\dots,\mathbf{f}^{(Q)})\,p(\mathbf{f}^{(q)}|\mathbf{w}^{(1)}_{L-1},\dots,\mathbf{w}^{(P_{L-1})}_{L-1})\nonumber\\
&=\prod_{i=1}^Np(y_i|g^{-1}_1(f_i^{(1)}),\dots,g^{-1}_q(f_i^{(q)}),\dots,g^{-1}_Q(f_i^{(Q)}))\,p(\mathbf{f}^{(q)}|\mathbf{w}^{(1)}_{L-1},\dots,\mathbf{w}^{(P_{L-1})}_{L-1})
\end{align}
for $q=1,\dots,Q$. Since $p(\mathbf{w}^{(p)}_{l}|\mathbf{w}^{(1)}_{l-1},\dots,\mathbf{w}^{(P_{l-1})}_{l-1})$ in Equation~\eqref{eq:dgp_pos} and $p(\mathbf{f}^{(q)}|\mathbf{w}^{(1)}_{L-1},\dots,\mathbf{w}^{(P_{L-1})}_{L-1})$ in Equation~\eqref{eq:lik_pos} are multivariate normal, one can impute latent layers $\mathbf{F}$ and $\{\mathbf{W}^{(p)}_{l}\}$ by utilizing the Elliptical Slice Sampling~\citep{murray2010elliptical} within a Gibbs (ESS-within-Gibbs) sampler. Algorithm~\ref{alg:ess} illustrates a single imputation of $\mathbf{F}$ and $\{\mathbf{W}^{(p)}_{l}\}$ from the ESS-within-Gibbs sampler. 

\begin{algorithm}[!ht]
\caption{One-step ESS-within-Gibbs sampler to impute $\mathbf{F}$ and $\{\mathbf{W}^{(p)}_{l}\}$}
\label{alg:ess}
\begin{algorithmic}[1]
\REQUIRE{A current imputation $\mathbf{f}_i$ and $\{\mathbf{w}^{(p)}_l\}_i$.}
\ENSURE{A new imputation $\mathbf{f}_{i+1}$ and $\{\mathbf{w}^{(p)}_l\}_{i+1}$.}
\FOR{$l=1,\dots,L$}
\IF{$l=L$}
\FOR{$q=1,\dots,Q$}
\STATE{\label{alg:lik_ess}Impute $\mathbf{F}^{(q)}$ by drawing $\mathbf{f}^{(q)}_{i+1}$ from $p(\mathbf{f}^{(q)}|\mathbf{f}\setminus\mathbf{f}^{(q)},\{\mathbf{w}^{(p)}_l\},\mathbf{y},\mathbf{x})$ in the form of~\eqref{eq:lik_pos} via an ESS update;}
\ENDFOR
\ELSE
\FOR{$p=1,\dots,P_l$}
\STATE{Impute $\mathbf{W}^{(p)}_l$ by drawing $\mathbf{w}^{(p)}_{l,i+1}$ from $p(\mathbf{w}^{(p)}_{l}|\{\mathbf{w}^{(p)}_l\}\setminus\mathbf{w}^{(p)}_{l},\mathbf{f},\mathbf{y},\mathbf{x})$ in the form of~\eqref{eq:dgp_pos} via an ESS update;}
\ENDFOR
\ENDIF
\ENDFOR
\end{algorithmic} 
\end{algorithm}

\subsection{Training}
\label{sec:training}
Following the SI framework introduced by \citet{ming2021deep}, the unknown GP parameters $\boldsymbol{\theta}_l^{(p)}$ in the GDGP structure are estimated using the Stochastic Expectation-Maximization (SEM) algorithm~\citep{celeux1985sem}, as summarized in Algorithm~\ref{alg:sem}. Once the GDGP hyperparameters have been estimated, the imputation module (see Section~\ref{sec:imputation}) generates a set of GDGP realizations, thereby enabling fast posterior prediction via the LGPs constructed according to Algorithm~\ref{alg:prediction}.

\begin{algorithm}[!ht]
\caption{Training for the GDGP via SEM}
\label{alg:sem}
\begin{algorithmic}[1]
\REQUIRE{\begin{enumerate*}[label=(\roman*)]
  \item Observations $\mathbf{x}$ and $\mathbf{y}$;
  \item initial values of model parameters $\{\widehat{\boldsymbol{\theta}}^{(p,1)}_l\}$;
  \item total number of iterations $T$ and burn-in period $B$ for SEM; 
  \item burn-in periods $C$ for ESS. 
\end{enumerate*}}
\ENSURE{Point estimates of model parameters.}
\FOR{$t=1,\dots,T$}
\STATE{\textbf{Imputation-step}: draw an imputation of $\mathbf{F}$ and $\{\mathbf{W}^{(p)}_l\}$ from $p(\mathbf{f},\{\mathbf{w}^{(p)}_{l}\}|\mathbf{y},\mathbf{x};\{\widehat{\boldsymbol{\theta}}^{(p,t)}_l\})$ by evaluating $C$ steps of ESS-within-Gibbs sampler in Algorithm~\ref{alg:ess};}
\STATE{\textbf{Maximization-step}: update model parameters by maximizing log-likelihoods of individual GPs:
\begin{align*}
    {\widehat{\boldsymbol{\theta}}^{(p,t+1)}_l}=&\argmax\log p(\mathbf{w}^{(p)}_{l}|\mathbf{w}^{(1)}_{l-1},\dots,\mathbf{w}^{(P_{l-1})}_{l-1};\boldsymbol{\theta}^{(p)}_l),
\end{align*}
for all $p=1,\dots,P_l$ and $l=1,\dots,L$, where $\mathbf{w}^{(p)}_{L}=\mathbf{f}^{(p)}$ and $P_L=Q$.}
\ENDFOR
\STATE{Compute point estimates ${\widehat{\boldsymbol{\theta}}^{(p)}_l}$ of model parameters by:
\begin{equation*}
\label{eq:estimates}
  \widehat{\boldsymbol{\theta}}^{(p)}_l=\frac{1}{T-B}\sum_{t=B+1}^{T}\widehat{\boldsymbol{\theta}}^{(p,t)}_l\quad\forall\,p,l.
\end{equation*}}
\end{algorithmic} 
\end{algorithm}

\section{Scalability}
\label{sec:scalability}
The computational complexity of performing inference for GDGP emulators using SI can become prohibitive as the training data size increases, primarily due to the well-known cubic complexity of covariance matrix inversion. In the remainder of this section, we show how to incorporate the Vecchia approximation into the SI framework, enabling scalable GDGP emulation that can efficiently handle large training datasets.

The main computational bottleneck of SI arises from the evaluation of likelihood functions and sampling from $\{\mathcal{GP}^{(p)}_{l}\}$, which require repeated inversion of covariance matrices during the ESS-within-Gibbs sampling in the Imputation step and the log-likelihood optimization in the Maximization step of Algorithm~\ref{alg:sem}. To address this with the Vecchia approximation, consider an elementary GP, $\mathcal{GP}^{(p)}_{l}$, in the GDGP hierarchy. The Vecchia's log-likelihood function~\citep{guinness2021gaussian} of $\mathcal{GP}^{(p)}_{l}$ can be written as:
\begin{align}
\label{eq:vecch_loglik}
\log\widehat{\mathcal{L}}(\boldsymbol{\theta}^{(p)}_l)=&\log \widehat{p}(\mathbf{w}^{(p)}_{l}|\mathbf{w}_{l-1})\nonumber\\
=&\sum_{i=1}^N\log p(w^{(p)}_{l,i}|\mathbf{w}^{(p)}_{l,g(-i)},\mathbf{w}_{l-1})\nonumber\\
=&\sum_{i=1}^N\left(\log p(\mathbf{w}^{(p)}_{l,g(i)}|\mathbf{w}_{l-1,g(i)*})-\log p(\mathbf{w}^{(p)}_{l,g(-i)}|\mathbf{w}_{l-1,g(-i)*})\right)\nonumber\\
=&-\frac{N}{2}\log(2\pi)-\frac{1}{2}\sum_{i=1}^N\left(\log\det\boldsymbol{\Sigma}^{(p)}_{l,g(i)}-\log\det\boldsymbol{\Sigma}^{(p)}_{l,g(-i)}\right)\nonumber\\
&-\frac{1}{2}\sum_{i=1}^N\left[\left(\mathbf{w}^{(p)}_{l,g(i)}\right)^{\top}\left(\boldsymbol{\Sigma}^{(p)}_{l,g(i)}\right)^{-1}\mathbf{w}^{(p)}_{l,g(i)} - \left(\mathbf{w}^{(p)}_{l,g(-i)}\right)^{\top}\left(\boldsymbol{\Sigma}^{(p)}_{l,g(-i)}\right)^{-1}\mathbf{w}^{(p)}_{l,g(-i)}
\right]\,,
\end{align}
where $g(-i)\subseteq\{1,2,\dots,i-1\}$ is a conditioning index set of size $|g(-i)|=\min(M, i-1)$ for all $i=2,\dots,N$ with $g(-1)=\varnothing$ and $g(i)=g(-i)\cup\{i\}$; $\boldsymbol{\Sigma}^{(p)}_{l,g(i)}\in\mathbb{R}^{(|g(-i)|+1)\times (|g(-i)|+1)}$ and $\boldsymbol{\Sigma}^{(p)}_{l,g(-i)}\in\mathbb{R}^{|g(-i)|\times |g(-i)|}$ are covariance matrices of $\mathbf{W}^{(p)}_{l,g(i)}$ and $\mathbf{W}^{(p)}_{l,g(-i)}$ respectively. The complexity of evaluating Vecchia's log-likelihood function~\eqref{eq:vecch_loglik} is $\mathcal{O}(NM^3)$, which is significantly lower than the $\mathcal{O}(N^3)$ complexity of evaluating the original log-likelihood function of $\mathcal{GP}^{(p)}_{l}$ when $M \ll N$. Notably, the summations in~\eqref{eq:vecch_loglik} are independent, allowing for parallel processing across different summands, which can effectively reduce the $N$ factor in $\mathcal{O}(NM^3)$. 

The Vecchia approximation implies that the probability distribution of $\mathbf{W}^{(p)}_{l}$, defined by the probability density function $\widehat{p}(\mathbf{w}^{(p)}_{l}|\mathbf{w}_{l-1})=\prod_{i=1}^N p(w^{(p)}_{l,i}|\mathbf{w}^{(p)}_{l,g(-i)},\mathbf{w}_{l-1})$, is itself a multivariate normal distribution:
\begin{equation}
\label{eq:vecch_gp}
\mathbf{W}^{(p)}_l|\mathbf{W}_{l-1}\sim\mathcal{N}\left(\mathbf{0},\,\left({\mathbf{P}^{(p)}_l}\right)^{-1}\right)\,,
\end{equation}
where $\mathbf{P}^{(p)}_l=\mathbf{U}^{(p)}_l{\mathbf{U}^{(p)}_l}^{\top}$ is the precision matrix of $\mathbf{W}^{(p)}_l|\mathbf{W}_{l-1}$ with $\mathbf{U}^{(p)}_l$ being a sparse upper triangular matrix whose $ji$-th element is given by~\citep{katzfuss2021general}:
\begin{equation}
\label{eq:u_matrix}
\mathbf{U}^{(p)}_{l,ji}=\begin{dcases}
\left(d^{(p)}_{l,i}\right)^{-\frac{1}{2}}, & j=i,\\
-b^{(p)}_{l,ik}\left(d^{(p)}_{l,i}\right)^{-\frac{1}{2}}, & j\in g(-i),\\
0, & \mathrm{otherwise},
\end{dcases}
\end{equation}
where 
\begin{itemize}
    \item $d^{(p)}_{l,i}=(\sigma_l^{(p)})^2\left(1+\eta-\left(\mathbf{r}^{(p)}_{l,i}\right)^{\top}\left(\mathbf{R}^{(p)}_{l,g(-i)}\right)^{-1}\mathbf{r}^{(p)}_{l,i}\right)$
    \item $\mathbf{b}^{(p)}_{l,i}=\left(\mathbf{r}^{(p)}_{l,i}\right)^{\top}\left(\mathbf{R}^{(p)}_{l,g(-i)}\right)^{-1}$
\end{itemize} 
with $\mathbf{r}^{(p)}_{l,i}=[k^{(p)}_{l}(\mathbf{w}_{l-1,i*},\mathbf{w}_{l-1,j*})]^{\top}_{j\in g(-i)}\in\mathbb{R}^{|g(-i)|\times 1}$ and $\mathbf{R}^{(p)}_{l,g(-i)}$ being the correlation matrix of $\mathbf{W}^{(p)}_{l,g(-i)}$; and $b^{(p)}_{l,ik}$ is the $k$-th element of $\mathbf{b}^{(p)}_{l,i}$ with $k=\mathrm{pos}_{g(-i)}(j)$ denoting the position of $j$ in the conditioning set $g(-i)$. Note that $\mathbf{U}^{(p)}_{l}$ is highly sparse, and its construction has a complexity at most of $\mathcal{O}(NM^3)$, as the construction of each column is independent (allowing for parallel computation), each column contains at most $M$ off-diagonal non-zero elements, and computing each column involves inverting the correlation matrix $\mathbf{R}^{(p)}_{l,g(-i)}\in\mathbb{R}^{|g(-i)|\times |g(-i)|}$. Given the fact that 
\begin{equation*}
\mathbf{W}^{(p)}_l \mid \mathbf{W}_{l-1} = \left({\mathbf{U}^{(p)}_{l}}^{\top}\right)^{-1}\mathbf{Z}\,,
\end{equation*}
where $\mathbf{Z} \in \mathbb{R}^{N \times 1}$ is a vector of i.i.d. univariate standard normal random variables, a realization $\mathbf{w}^{(p)}_l$ of $\mathbf{W}^{(p)}_l \mid \mathbf{W}_{l-1}$ can then be generated from the multivariate normal distribution~\eqref{eq:vecch_gp} by performing:
\begin{equation}
\label{eq:vecch_sampling}
\mathbf{w}^{(p)}_l = \left({\mathbf{U}^{(p)}_{l}}^{\top}\right)^{-1}\mathbf{z}\,,
\end{equation}
where $\mathbf{z}$ is a realization from $\mathbf{Z}$. Since $\mathbf{U}^{(p)}_{l}$ is a sparse upper triangular matrix, the computation of~\eqref{eq:vecch_sampling} can be performed via forward substitution with a complexity of $\mathcal{O}(NM)$ instead of $\mathcal{O}(N^2)$, leveraging the sparsity of $\mathbf{U}^{(p)}_{l}$.

Replacing the log-likelihood function in the ESS update in Algorithm~\ref{alg:ess} and the Maximization step in Algorithm~\ref{alg:sem} with Vecchia's form~\eqref{eq:vecch_loglik}, and replacing the sampling from $\{\mathcal{GP}^{(p)}_{l}\}$ in the ESS update in Algorithm~\ref{alg:ess} with~\eqref{eq:vecch_sampling}, then enables scalable training of GDGP via SEM.

It is worth noting that Vecchia's log-likelihood function~\eqref{eq:vecch_loglik} can equivalently be expressed in terms of $\mathbf{U}^{(p)}_{l}$ via~\eqref{eq:vecch_gp} as:
\begin{equation}
\label{eq:vecch_loglik_alter}
\log\widehat{\mathcal{L}}(\boldsymbol{\theta}^{(p)}_l) = -\frac{N}{2}\log(2\pi) - \frac{1}{2}\log\det\left(\mathbf{U}^{(p)}_l{\mathbf{U}^{(p)}_l}^{\top}\right)^{-1} - \frac{1}{2}\left(\mathbf{w}^{(p)}_{l}\right)^{\top}\left(\mathbf{U}^{(p)}_l{\mathbf{U}^{(p)}_l}^{\top}\right)\mathbf{w}^{(p)}_{l}.
\end{equation}

However, there are two main computational reasons why~\eqref{eq:vecch_loglik} is preferred over~\eqref{eq:vecch_loglik_alter} for SI. Firstly, the computation of~\eqref{eq:vecch_loglik_alter} requires storage of $\mathbf{U}^{(p)}_{l}$. Although $\mathbf{U}^{(p)}_{l}$ is sparse, it can still be memory-intensive when $N$ is large. In contrast,~\eqref{eq:vecch_loglik} is more memory-efficient, as each summand only requires storing a small covariance matrix $\boldsymbol{\Sigma}^{(p)}_{l,g(i)}$, and the evaluation of $\log\widehat{\mathcal{L}}(\boldsymbol{\theta}^{(p)}_l)$ can be performed by incrementally accumulating contributions of individual or a batch of summands with minimal memory overhead. Secondly, the form in~\eqref{eq:vecch_loglik} enables analytically trackable gradients with respect to $\boldsymbol{\theta}^{(p)}_l$, facilitating faster optimization of $\{\mathcal{GP}^{(p)}_{l}\}$ in the Maximization step of SEM in Algorithm~\ref{alg:sem}.

The ordering of $w^{(p)}_{l,1},\dots,w^{(p)}_{l,N}$ and the corresponding conditioning index set $g(-i)$ for $i=2,\dots,N$ affect the accuracy of Vecchia's log-likelihood function~\eqref{eq:vecch_loglik} and the multivariate normal distribution~\eqref{eq:vecch_sampling}. Different ordering approaches have been discussed in~\citet{katzfuss2021general}. In this study, we adopt a simple random ordering following~\citet{sauer2023vecchia}. Regarding the selection of the conditioning index set $g(-i)$, we employ nearest-neighbor (NN) conditioning, where $g(-i)$ consists of the indices of up to $M$ nearest variables that precede $w^{(p)}_{l,i}$ based on the Euclidean distance between their respective inputs and the input $\mathbf{w}_{l-1,i*}$ of $w^{(p)}_{l,i}$, with each input dimension scaled by the corresponding lengthscale $\gamma^{(p)}_{l,d}$ for $d\in\{1,\dots,P_{l-1}\}$~\citep{katzfuss2022scaled}.

To achieve scalable prediction of GDGP, we can replace the log-likelihood function and sampling of $\{\mathcal{GP}^{(p)}_{l}\}$ in the imputation step on Line~\ref{alg:mi} of Algorithm~\ref{alg:prediction} by~\eqref{eq:vecch_loglik} and~\eqref{eq:vecch_sampling} respectively, and the following proposition gives the Vecchia's expressions of~\eqref{eq:linkgp_mean} and~\eqref{eq:linkgp_var} for scalable LGP construction on Line~\ref{alg:lgp} of Algorithm~\ref{alg:prediction}:
\begin{proposition}
\label{prop:vecchia_lgp}
Under the Vecchia approximation, the mean $\mu^{(q)}_{1\rightarrow L,k}(\mathbf{x}_0)$ and variance ${\sigma^2}^{(q)}_{1\rightarrow L,k}(\mathbf{x}_0)$ of the univariate normal distribution defined by $\widehat{p}(f^{(q)}_0|\mathbf{x}_0;\mathbf{f}_k,\{\mathbf{w}^{(p)}_{l}\}_k,\mathbf{x})$ can be obtained by iterating the following formulae:
\begin{align}
\label{eq:vecchia_linkgp_mean}
\mu^{(q)}_{1\rightarrow l,k}(\mathbf{x}_0)=&\mathbf{I}_{l,k,\mathcal{C}}^{(q)}(\mathbf{x}_0)^\top\left(\mathbf{R}^{(q)}_{l,k,\mathcal{C}}\right)^{-1}\mathbf{w}^{(q)}_{l,k,\mathcal{C}},\\
\label{eq:vecchia_linkgp_var}
{\sigma^2}^{(q)}_{1\rightarrow l,k}(\mathbf{x}_0)=&\left(\mathbf{w}^{(q)}_{l,k,\mathcal{C}}\right)^\top\left(\mathbf{R}^{(q)}_{l,k,\mathcal{C}}\right)^{-1}\mathbf{J}^{(q)}_{l,k,\mathcal{C}}(\mathbf{x}_0)\left(\mathbf{R}^{(q)}_{l,k,\mathcal{C}}\right)^{-1}\mathbf{w}^{(q)}_{l,k,\mathcal{C}}-\left(\mathbf{I}_{l,k,\mathcal{C}}^{(q)}(\mathbf{x}_0)^\top\left(\mathbf{R}^{(q)}_{l,k,\mathcal{C}}\right)^{-1}\mathbf{w}^{(q)}_{l,k,\mathcal{C}}\right)^2\nonumber\\
&\quad+\left(\sigma^{(q)}_{l}\right)^2\left(1+\eta^{(q)}_{l}-\mathrm{tr}\left\{\left(\mathbf{R}^{(q)}_{l,k,\mathcal{C}}\right)^{-1}\mathbf{J}^{(q)}_{l,k,\mathcal{C}}(\mathbf{x}_0)\right\}\right)
\end{align}
for $l=2,\dots,L$ and $q=1,\dots,P_l$, with ${\mu}^{(q)}_{1\rightarrow 1,k}(\mathbf{x}_0)$ and ${{\sigma}^{2}}^{(q)}_{1\rightarrow 1,k}(\mathbf{x}_0)$ given by
\begin{align}
\label{eq:vecchia_gp_mean}
{\mu}^{(q)}_{1\rightarrow 1,k}(\mathbf{x}_0)&=\mathbf{r}^{(q)}_{\mathcal{C}}(\mathbf{x}_0)^\top\left(\mathbf{R}^{(q)}_{1,\mathcal{C}}\right)^{-1}\mathbf{w}^{(q)}_{1,k,\mathcal{C}}\\
\label{eq:vecchia_gp_var}
{{\sigma}^{2}}^{(q)}_{1\rightarrow 1,k}(\mathbf{x}_0)&=\left(\sigma^{(q)}_1\right)^2\left(1+\eta^{(q)}_1-\mathbf{r}^{(q)}_{\mathcal{C}}(\mathbf{x}_0)^\top\left(\mathbf{R}^{(q)}_{1,\mathcal{C}}\right)^{-1}\mathbf{r}^{(q)}_{\mathcal{C}}(\mathbf{x}_0)\right)   
\end{align}
respectively for $q=1,\dots,P_1$ where
\begin{itemize}
    \item $\mathcal{C}\subseteq\{1,2,\dots,N\}$ is a conditioning index set of size $|\mathcal{C}|$;
    \item $\mathbf{R}^{(q)}_{l,k,\mathcal{C}}\in\mathbb{R}^{|\mathcal{C}|\times |\mathcal{C}|}$ is the correlation matrix of $\mathbf{W}^{(q)}_{l,k,\mathcal{C}}=[(\mathbf{W}^{(q)}_{l,k})_i]^{\top}_{i\in \mathcal{C}}$;
    \item $\mathbf{I}_{l,k,\mathcal{C}}^{(q)}(\mathbf{x}_0) \in \mathbb{R}^{|\mathcal{C}| \times 1}$ is the sub-vector of $\mathbf{I}_{l,k}^{(q)}(\mathbf{x}_0)$ with its $i$-th element defined as
    $$
    \prod_{d=1}^{P_{l-1}}\xi^{(q)}_{l}\left({\mu}^{(d)}_{1\rightarrow(l-1),k}(\mathbf{x}_0),\,{{\sigma}^{2}}^{(d)}_{1\rightarrow(l-1),k}(\mathbf{x}_0),\,(\mathbf{w}^{(d)}_{l-1,k})_{\mathcal{C}_i}\right)\,;
    $$
    \item $\mathbf{J}^{(q)}_{l,k,\mathcal{C}}(\mathbf{x}_0)\in \mathbb{R}^{|\mathcal{C}| \times |\mathcal{C}|}$ is the sub-matrix of $\mathbf{J}^{(q)}_{l,k}(\mathbf{x}_0)$ with its $ij$-th element defined as 
    $$
   \prod_{d=1}^{P_{l-1}}\zeta^{(q)}_{l}\left({\mu}^{(d)}_{1\rightarrow(l-1),k}(\mathbf{x}_0),\,{{\sigma}^{2}}^{(d)}_{1\rightarrow(l-1),k}(\mathbf{x}_0),\,(\mathbf{w}^{(d)}_{l-1,k})_{\mathcal{C}_i},\,(\mathbf{w}^{(d)}_{l-1,k})_{\mathcal{C}_j}\right)\,;
    $$
    \item $\mathbf{R}^{(q)}_{1,\mathcal{C}}\in\mathbb{R}^{|\mathcal{C}|\times |\mathcal{C}|}$ is the correlation matrix of $\mathbf{W}^{(q)}_{1,\mathcal{C}}=[(\mathbf{W}^{(q)}_{1})_i]^{\top}_{i\in \mathcal{C}}$;
    \item $\mathbf{r}^{(q)}_{\mathcal{C}}(\mathbf{x}_0)=[k^{(q)}_1(\mathbf{x}_0,\mathbf{x}_{i*})]^\top_{i\in\mathcal{C}}$.
\end{itemize}
\end{proposition}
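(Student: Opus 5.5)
The plan is to interpret ``under the Vecchia approximation'' as applying the Vecchia principle to the prediction location $\mathbf{x}_0$ as well. The new output $W^{(q)}_{l}(\mathbf{x}_0)$ is appended at the end of the ordering, and its conditioning set $\mathcal{C}=g(-0)\subseteq\{1,\dots,N\}$ is chosen, for instance, as its $|\mathcal{C}|$ nearest neighbours in scaled input space. The joint Vecchia density of $(\mathbf{W}^{(q)}_{l},W^{(q)}_{l}(\mathbf{x}_0))$ then factorises as $\widehat{p}(\mathbf{w}^{(q)}_{l}\mid\mathbf{w}_{l-1})\,p(w^{(q)}_{l,0}\mid\mathbf{w}^{(q)}_{l,\mathcal{C}},\mathbf{w}_{l-1,\mathcal{C}*},\mathbf{w}_{l-1,0})$. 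Because $\mathbf{x}_0$ comes last, adding it does not alter the factors for the training points, so the Vecchia density of the training data is unchanged. The predictive conditional of $W^{(q)}_l(\mathbf{x}_0)$ given the imputed layer is therefore exactly an ordinary GP conditional on the sub-vector indexed by $\mathcal{C}$.

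First I treat the base layer $l=1$. Conditionally on $\mathbf{w}^{(q)}_{1,k}$, the Vecchia predictive of $W^{(q)}_1(\mathbf{x}_0)$ is the Gaussian conditional of the bivariate block $(\mathbf{W}^{(q)}_{1,\mathcal{C}},W^{(q)}_1(\mathbf{x}_0))$. The standard kriging formulas, i.e.\ \eqref{eq:gp_mean}--\eqref{eq:gp_var} with $\mathbf{R}^{(q)}_1$, $\mathbf{r}^{(q)}(\mathbf{x}_0)$ and $\mathbf{w}^{(q)}_{1,k}$ replaced by their $\mathcal{C}$-restrictions, give \eqref{eq:vecchia_gp_mean}--\eqref{eq:vecchia_gp_var} immediately. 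This is the same calculation that produces $d^{(p)}_{l,i}$ and $\mathbf{b}^{(p)}_{l,i}$ in \eqref{eq:u_matrix}, now applied at index $0$.

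For layers $l\ge 2$, I condition on the imputed values $\mathbf{w}_{l-1,k}$ at the training points. Under Vecchia, $W^{(q)}_l(\mathbf{x}_0)$ given its (random) input $\mathbf{W}_{l-1}(\mathbf{x}_0)$ is Gaussian with mean $\mathbf{r}_{\mathcal{C}}(\mathbf{W}_{l-1}(\mathbf{x}_0))^\top(\mathbf{R}^{(q)}_{l,k,\mathcal{C}})^{-1}\mathbf{w}^{(q)}_{l,k,\mathcal{C}}$ and variance $(\sigma^{(q)}_l)^2(1+\eta-\mathbf{r}_{\mathcal{C}}^\top(\mathbf{R}^{(q)}_{l,k,\mathcal{C}})^{-1}\mathbf{r}_{\mathcal{C}})$. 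The input $\mathbf{W}_{l-1}(\mathbf{x}_0)$ carries the layer-$(l-1)$ LGP approximation: independent normals with means $\mu^{(d)}_{1\rightarrow(l-1),k}$ and variances ${\sigma^2}^{(d)}_{1\rightarrow(l-1),k}$. I then repeat the moment-matching argument of the LGP in \citet{ming2021linked}, and it goes through verbatim with $N$ replaced by $|\mathcal{C}|$. The mean follows from the law of total expectation, where $\mathbb{E}[\mathbf{r}_{\mathcal{C}}]=\mathbf{I}^{(q)}_{l,k,\mathcal{C}}(\mathbf{x}_0)$ factorises over $d$ by the product kernel and independence, giving the $\xi$ terms. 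The variance follows from the law of total variance, using $\mathbb{E}[\mathbf{r}_{\mathcal{C}}\mathbf{r}_{\mathcal{C}}^\top]=\mathbf{J}^{(q)}_{l,k,\mathcal{C}}(\mathbf{x}_0)$ (the $\zeta$ terms) and the identity $\mathbb{E}[\mathbf{r}^\top\mathbf{A}\mathbf{r}]=\mathrm{tr}(\mathbf{A}\,\mathbb{E}[\mathbf{r}\mathbf{r}^\top])$. Collecting terms gives \eqref{eq:vecchia_linkgp_mean}--\eqref{eq:vecchia_linkgp_var}. Induction on $l$ from $1$ to $L$ finishes the proof.

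The main subtlety is not the algebra but justifying that a single conditioning set $\mathcal{C}$ may be used and that entries of $\mathbf{I}$ and $\mathbf{J}$ reference only the indices in $\mathcal{C}$. To settle this I would state explicitly that the Vecchia prediction for each node conditions only on $\mathcal{C}$. Because the LGP formulas depend on the training data solely through the restricted kriging weights, the restriction commutes with the expectation over the uncertain input. In this argument I would fix $\mathcal{C}$ once, independent of the random input $\mathbf{W}_{l-1}(\mathbf{x}_0)$, for example by choosing it from the predictive means, so that the expectations remain tractable.
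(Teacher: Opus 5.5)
Your proposal is correct and follows essentially the same route as the paper. The base case is the nearest-neighbour GP predictive restricted to $\mathcal{C}$; the paper takes this from \citet{katzfuss2020vecchia}, and you justify it directly by ordering $\mathbf{x}_0$ last. The case $l\ge 2$ is the laws of total expectation and variance applied to the $\mathcal{C}$-restricted kriging mean and variance, with $\mathbf{I}^{(q)}_{l,k,\mathcal{C}}(\mathbf{x}_0)=\mathbb{E}[\mathbf{r}_{\mathcal{C}}]$ and $\mathbf{J}^{(q)}_{l,k,\mathcal{C}}(\mathbf{x}_0)=\mathbb{E}[\mathbf{r}_{\mathcal{C}}\mathbf{r}_{\mathcal{C}}^\top]$ factorising over the product kernel. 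Your remark that $\mathcal{C}$ must be fixed independently of the random input, for instance chosen from the predictive means, matches how the paper selects $\mathcal{C}$ in practice, a point its proof leaves implicit.
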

\begin{proof}
A sketch of the proof is provided in Section~\ref{sec:vecchia_lgp} of the supplementary materials.
\end{proof}

Note that Proposition~\ref{prop:vecchia_lgp} yields a substantially cheaper construction of LGPs, with complexity of $\mathcal{O}(|\mathcal{C}|^3)$ when $|\mathcal{C}|\ll N$. In this construction, the NN conditioning is applied for selecting the conditioning index set $\mathcal{C}$. Specifically, $\mathcal{C}$ contains the row indices of the closest input positions in  $\left[\mathbf{x}_{*1}/\gamma^{(q)}_{1,1},\dots,\mathbf{x}_{*D}/\gamma^{(q)}_{1,D}\right]$ to $\left[x_{0,1}/\gamma^{(q)}_{1,1},\dots,x_{0,D}/\gamma^{(q)}_{1,D}\right]$ when $l=1$, and the row indices of the closest input positions in $\left[\mathbf{w}^{(1)}_{l-1,k}/\gamma^{(q)}_{l,1},\dots,\mathbf{w}^{(P_{l-1})}_{l-1,k}/\gamma^{(q)}_{l,P_{l-1}}\right]$ to $\left[\mu^{(1)}_{1\rightarrow(l-1),k}/\gamma^{(q)}_{l,1},\dots,\mu^{(P_{l-1})}_{1\rightarrow(l-1),k}/\gamma^{(q)}_{l,P_{l-1}}\right]$ for $l=2,\dots,L$.

\subsection{Replicates}
\label{sec:replicates}

A distinctive feature of stochastic simulators is that multiple outputs may be generated at the same input setting. Such replicates are often used to characterize the intrinsic stochasticity of the simulator more accurately, but they can also impose substantial additional computational cost on GDGP inference when the total number of observations becomes large relative to the number of unique input locations. In this subsection, we show that replicates can be incorporated into GDGP in a way that preserves the original computational complexity.

Let $\mathbf{Y}=\{\mathbf{Y}_1,\dots,\mathbf{Y}_N\}$ be the collection of outputs, where $\mathbf{Y}_i\in\mathbb{R}^{S_i}$ contains $S_i$ observations repeatedly generated at the $i$-th input $\mathbf{x}_{i*}$. Assume that $Y_{i,1},\dots,Y_{i,S_i}$ are conditionally independent given $\boldsymbol{\phi}_i$. In a direct formulation, increasing the number of replicates $S_i$ enlarges the row dimension of the imputed latent variables $\mathbf{w}^{(p)}_l$ to $\sum_{i=1}^N S_i$, which in turn yields correlation matrices $\mathbf{R}^{(p)}_l$ of size $(\sum_{i=1}^N S_i)\times (\sum_{i=1}^N S_i)$ for all $p=1,\dots,P_l$ and $l=1,\dots,L$. As a result, the likelihood evaluations and sampling of multivariate normal distributions specified in equations~\eqref{eq:dgp_pos} and~\eqref{eq:lik_pos} make Algorithm~\ref{alg:ess}, and consequently both training and prediction in GDGP, computationally burdensome when the numbers of replicates are large.

Note, however, that these replicates arise from repeated observations at the same input locations rather than from distinct inputs. Consequently, Equation~\eqref{eq:lik_pos} can be rewritten as
\begin{align}
\label{eq:lik_pos_rep}
p(\mathbf{f}^{(q)}|\mathbf{f}\setminus\mathbf{f}^{(q)},\{\mathbf{w}^{(p)}_l\},\mathbf{y},\mathbf{x})\propto \prod_{i=1}^N\prod_{s=1}^{S_i}p(y_{i,s}|g^{-1}_1(f_i^{(1)}),\dots,g^{-1}_q(f_i^{(q)}),\dots,g^{-1}_Q(f_i^{(Q)}))\,p(\mathbf{f}^{(q)}|\mathbf{w}^{(1)}_{L-1},\dots,\mathbf{w}^{(P_{L-1})}_{L-1})\,,
\end{align}
thereby avoiding the computational inefficiency that would otherwise be induced by treating replicates as independent observations over an expanded set of inputs. In particular, Equation~\eqref{eq:lik_pos_rep} implies that the correlation matrices arising in the DGP layers remain of size $N\times N$, defined only over the $N$ unique input locations. Consequently, the computational order for multivariate normal likelihood evaluation and sampling in the imputation step is unchanged from the no-replicate case: $N^3$ in the standard setting and $NM^3$ under the Vecchia approximation. At the likelihood layer, the presence of replicates only introduces additional product terms in Equation~\eqref{eq:lik_pos_rep}, which increases the cost linearly in the total number of replicated observations.

\subsection{The Heteroskedastic Gaussian Case}
\label{sec:hetero}

In addition to the general scalability improvements introduced above through the Vecchia approximation and the explicit treatment of replicates, the heteroskedastic Gaussian case permits further computational simplifications. This setting is of particular interest because heteroskedastic Gaussian emulation remains one of the most widely used approaches for stochastic simulators, while also illustrating how distribution-specific structure can be exploited within GDGP. In this subsection, we show that, under the heteroskedastic Gaussian likelihood, the conditional posterior distribution $p(\boldsymbol{\mu}|\log\boldsymbol{\sigma}^2, \{\mathbf{w}^{(p)}_l\}, \mathbf{y}, \mathbf{x})$ of the mean parameter admits closed-form expressions in the standard setting, as well as under the Vecchia approximation and in the presence of replicates. As a result, $\boldsymbol{\mu}$ can be sampled directly in the imputation step described in Section~\ref{sec:imputation}, avoiding ESS and providing additional computational gains.

\begin{proposition}
\label{prop:hetero_simple}
Given $\mathbf{Y}\in\mathbb{R}^N$, where $Y_i$ for $i=1,\dots,N$ are conditionally independent and distributed as $\mathcal{N}(\mu_i, \sigma_i^2)$ under the GDGP, the posterior distribution $p(\boldsymbol{\mu}|\log\boldsymbol{\sigma}^2, \{\mathbf{w}^{(p)}_l\}, \mathbf{y}, \mathbf{x})$ of $\boldsymbol{\mu} = \mathbf{F}^{(1)}$, given $\boldsymbol{\sigma}^2=\exp\{\mathbf{f}^{(2)}\}$, the imputed latent variables $\{\mathbf{w}^{(p)}_l\}$, and the observed inputs $\mathbf{x}$ and outputs $\mathbf{y}$, is a multivariate normal distribution given by:
\begin{equation}
\label{eq:hetero_simple}
\mathcal{N}\left(\boldsymbol{\Sigma}_L^{(1)}(\mathbf{w}_{L-1})\left(\boldsymbol{\Sigma}_L^{(1)}(\mathbf{w}_{L-1})+\boldsymbol{\Gamma}\right)^{-1}\mathbf{y},\, \boldsymbol{\Sigma}_L^{(1)}(\mathbf{w}_{L-1})\left(\boldsymbol{\Sigma}_L^{(1)}(\mathbf{w}_{L-1})+\boldsymbol{\Gamma}\right)^{-1}\boldsymbol{\Gamma}\right)\,,
\end{equation}
where $\boldsymbol{\Gamma}=\text{diag}(\sigma^2_1,\dots,\sigma^2_N)=\text{diag}(e^{f^{(2)}_1},\dots,e^{f^{(2)}_N})$.
\end{proposition}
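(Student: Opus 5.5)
The plan is a standard Gaussian conjugacy argument applied to the conditional posterior in Equation~\eqref{eq:lik_pos} with $q=1$. The heteroskedastic Gaussian likelihood layer has $Q=2$, with $g_1$ the identity (so $\boldsymbol{\mu}=\mathbf{F}^{(1)}$) and $g_2=\log$ (so $\boldsymbol{\sigma}^2=\exp\{\mathbf{f}^{(2)}\}$). Conditioning on $\mathbf{f}^{(2)}$ and $\{\mathbf{w}^{(p)}_l\}$, Equation~\eqref{eq:lik_pos} gives
\[
p(\boldsymbol{\mu}\mid\log\boldsymbol{\sigma}^2,\{\mathbf{w}^{(p)}_l\},\mathbf{y},\mathbf{x})\propto \prod_{i=1}^N \mathcal{N}(y_i;\mu_i,\sigma_i^2)\, p(\boldsymbol{\mu}\mid\mathbf{w}_{L-1}).
\]
The first factor is the density $\mathcal{N}(\mathbf{y};\boldsymbol{\mu},\boldsymbol{\Gamma})$, where $\boldsymbol{\Gamma}$ is diagonal because the $Y_i$ are conditionally independent. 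By Equation~\eqref{eq:gp}, the second factor is $\mathcal{N}(\boldsymbol{\mu};\mathbf{0},\boldsymbol{\Sigma})$ with $\boldsymbol{\Sigma}=\boldsymbol{\Sigma}^{(1)}_L(\mathbf{w}_{L-1})$. The factors $p(\mathbf{w}^{(p)}_l\mid\cdot)$ for the other nodes do not involve $\boldsymbol{\mu}$, and neither does the prior on $\mathbf{f}^{(2)}$, which is independent of $\mathbf{F}^{(1)}$ given $\mathbf{w}_{L-1}$. So only these two factors matter.

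Next I would complete the square in $\boldsymbol{\mu}$. The exponent is
\[
-\tfrac12\left[(\mathbf{y}-\boldsymbol{\mu})^\top\boldsymbol{\Gamma}^{-1}(\mathbf{y}-\boldsymbol{\mu})+\boldsymbol{\mu}^\top\boldsymbol{\Sigma}^{-1}\boldsymbol{\mu}\right].
\]
This is a quadratic in $\boldsymbol{\mu}$ with precision $\boldsymbol{\Sigma}^{-1}+\boldsymbol{\Gamma}^{-1}$ and linear term $\boldsymbol{\Gamma}^{-1}\mathbf{y}$. Hence the conditional is Gaussian with covariance $\mathbf{V}=(\boldsymbol{\Sigma}^{-1}+\boldsymbol{\Gamma}^{-1})^{-1}$ and mean $\mathbf{V}\boldsymbol{\Gamma}^{-1}\mathbf{y}$.

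The only real work is rewriting these in the stated form, using the matrix identity
\[
\boldsymbol{\Sigma}^{-1}+\boldsymbol{\Gamma}^{-1}=\boldsymbol{\Sigma}^{-1}(\boldsymbol{\Gamma}+\boldsymbol{\Sigma})\boldsymbol{\Gamma}^{-1}.
\]
Inverting gives $\mathbf{V}=\boldsymbol{\Gamma}(\boldsymbol{\Sigma}+\boldsymbol{\Gamma})^{-1}\boldsymbol{\Sigma}$. By symmetry of $\mathbf{V}$, this equals its transpose $\boldsymbol{\Sigma}(\boldsymbol{\Sigma}+\boldsymbol{\Gamma})^{-1}\boldsymbol{\Gamma}$, which is the stated covariance. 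Alternatively, the analogous factorisation $\boldsymbol{\Gamma}^{-1}(\boldsymbol{\Sigma}+\boldsymbol{\Gamma})\boldsymbol{\Sigma}^{-1}$ gives that form directly.

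For the mean, $\mathbf{V}\boldsymbol{\Gamma}^{-1}\mathbf{y}=\boldsymbol{\Sigma}(\boldsymbol{\Sigma}+\boldsymbol{\Gamma})^{-1}\boldsymbol{\Gamma}\boldsymbol{\Gamma}^{-1}\mathbf{y}=\boldsymbol{\Sigma}(\boldsymbol{\Sigma}+\boldsymbol{\Gamma})^{-1}\mathbf{y}$, matching Equation~\eqref{eq:hetero_simple}. The inverses used are all legitimate: $\boldsymbol{\Sigma}$ is positive definite thanks to the nugget $\eta>0$, $\boldsymbol{\Gamma}$ has strictly positive diagonal entries $e^{f^{(2)}_i}$, and so $\boldsymbol{\Sigma}+\boldsymbol{\Gamma}$ is positive definite.

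The main obstacle is only bookkeeping. The first point is confirming that no other factor of the joint density depends on $\boldsymbol{\mu}$, since $\mathbf{F}^{(1)}$ is a terminal node feeding only the likelihood layer. The second is choosing the factorisation so that the covariance appears as written. Equivalently, one can view the result as the GP posterior for a noisy-regression problem with noise covariance $\boldsymbol{\Gamma}$, and read off the familiar formulae $\boldsymbol{\Sigma}-\boldsymbol{\Sigma}(\boldsymbol{\Sigma}+\boldsymbol{\Gamma})^{-1}\boldsymbol{\Sigma}=\boldsymbol{\Sigma}(\boldsymbol{\Sigma}+\boldsymbol{\Gamma})^{-1}\boldsymbol{\Gamma}$.
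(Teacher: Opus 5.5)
Your proposal is correct and takes the same route as the paper. The paper's proof only observes that the conditional posterior is proportional to the product of the Gaussian likelihood $\prod_{i} p(y_i\mid\mu_i,\sigma_i^2)$ and the zero-mean Gaussian prior with covariance $\boldsymbol{\Sigma}^{(1)}_L(\mathbf{w}_{L-1})$, and leaves the conjugacy step implicit; your completion of the square and the factorisation $\boldsymbol{\Sigma}^{-1}+\boldsymbol{\Gamma}^{-1}=\boldsymbol{\Gamma}^{-1}(\boldsymbol{\Sigma}+\boldsymbol{\Gamma})\boldsymbol{\Sigma}^{-1}$ correctly supply the omitted algebra.
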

\begin{proof}
The proof is straightforward by noting that 
\begin{equation*}
p(\boldsymbol{\mu}|\log\boldsymbol{\sigma}^2, \{\mathbf{w}^{(p)}_l\}, \mathbf{y}, \mathbf{x}) \propto\prod_{i=1}^N p(y_i|\mu_i=f_i^{(1)},\sigma^2_i=\exp\{f_i^{(2)}\})\,p(\mathbf{f}^{(1)}|\mathbf{w}^{(1)}_{L-1},\dots,\mathbf{w}^{(P_{L-1})}_{L-1})\,,
\end{equation*}
where $p(y_i|\mu_i=f_i^{(1)}, \sigma_i^2=\exp\{f_i^{(2)}\})$ is a normal density, and $p(\mathbf{f}^{(1)}|\mathbf{w}^{(1)}_{L-1}, \dots, \mathbf{w}^{(P_{L-1})}_{L-1})$ is a multivariate normal density with zero mean and covariance matrix $\boldsymbol{\Sigma}_L^{(1)}(\mathbf{w}_{L-1})$.
\end{proof}

Applying Proposition~\ref{prop:hetero_simple} works well when $N$ is relatively small. However, for large $N$, the inversion of $\boldsymbol{\Sigma}_L^{(1)}(\mathbf{w}_{L-1}) + \boldsymbol{\Gamma}$ involved in~\eqref{eq:hetero_simple} can become computationally prohibitive. There are three scenarios in which $N$ may be large: (i) a large number of replicates, leading to $\sum_{i=1}^N S_i \gg N$ even when $N$ is moderate; (ii) a naturally large number of observations, i.e., large $N$ without replicates; and (iii) a large number of observations with varying degrees of replications, from small to large. To address this scalability challenge while retaining the desired sampling efficiency during imputation, Propositions~\ref{prop:hetero_case1}, \ref{prop:hetero_case2} and~\ref{prop:hetero_case3} present analytical forms of the posterior distribution $p(\boldsymbol{\mu} \mid \log \boldsymbol{\sigma}^2, \{\mathbf{w}^{(p)}_l\}, \mathbf{y}, \mathbf{x})$ for each of these three cases, leveraging linear algebra techniques and the Vecchia approximation.

\begin{proposition}[Small $N$, large $S_i$]
\label{prop:hetero_case1}
Let $\mathbf{Y}$ be a permutation of $[\mathbf{Y}_1,\dots,\mathbf{Y}_N]$, in which $\mathbf{Y}_i\in\mathbb{R}^{S_i}$ has $S_i$ observations that are repeatedly generated at the $i$-th input $\mathbf{x}_{i*}$, and that $Y_{i,1},\dots,Y_{i,S_i}$ are conditionally independent and distributed as $\mathcal{N}(\mu_i, \sigma_i^2)$ under the GDGP. The posterior distribution $p(\boldsymbol{\mu}|\log\boldsymbol{\sigma}^2, \{\mathbf{w}^{(p)}_l\}, \mathbf{y}, \mathbf{x})$ of $\boldsymbol{\mu} = \mathbf{F}^{(1)}$, given $\boldsymbol{\sigma}^2=\exp\{\mathbf{f}^{(2)}\}$, the imputed latent variables $\{\mathbf{w}^{(p)}_l\}$, and the observed inputs $\mathbf{x}$ and outputs $\mathbf{y}$, is a multivariate normal distribution given by:
\begin{equation*}
\label{eq:hetero_case1}
\mathcal{N}\left(\boldsymbol{\Sigma}_L^{(1)}(\mathbf{w}_{L-1})\left(\mathbf{I}+\mathbf{M}^\top\boldsymbol{\Lambda}^{-1}\mathbf{M}\boldsymbol{\Sigma}_L^{(1)}(\mathbf{w}_{L-1})\right)^{-1}\mathbf{M}^\top\boldsymbol{\Lambda}^{-1}\mathbf{y},\, \boldsymbol{\Sigma}_L^{(1)}(\mathbf{w}_{L-1})\left(\mathbf{I}+\mathbf{M}^\top\boldsymbol{\Lambda}^{-1}\mathbf{M}\boldsymbol{\Sigma}_L^{(1)}(\mathbf{w}_{L-1})\right)^{-1}\right)\,,
\end{equation*}
where $\mathbf{I}\in\mathbb{R}^{N\times N}$ is the identity matrix; $\mathbf{M} \in\mathbb{R}^{(\sum_{i=1}^N S_i)\times N} $ is a replication matrix whose rows are standard basis vectors (i.e. each row contains a single 1 and zeros elsewhere), so that $\mathbf{Mx}$ reorders and replicates the entries of $\mathbf{x}$ to align with $\mathbf{Y}$; and $\boldsymbol{\Lambda}=\mathbf{M}\boldsymbol{\Gamma}\mathbf{M}^\top$ with $\boldsymbol{\Gamma}=\text{diag}(\sigma^2_1,\dots,\sigma^2_N)=\text{diag}(e^{f^{(2)}_1},\dots,e^{f^{(2)}_N})$.
\end{proposition}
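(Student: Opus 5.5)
We take the same route as in Proposition~\ref{prop:hetero_simple}, but keep the replicated observations at the level of the $N$ unique inputs. By Equation~\eqref{eq:lik_pos_rep}, the conditional posterior of $\boldsymbol{\mu}=\mathbf{f}^{(1)}$ is proportional to $\prod_{i=1}^N\prod_{s=1}^{S_i}\mathcal{N}(y_{i,s};\mu_i,\sigma_i^2)$ times the Gaussian prior $\mathcal{N}(\mathbf{0},\boldsymbol{\Sigma})$, where $\boldsymbol{\Sigma}=\boldsymbol{\Sigma}_L^{(1)}(\mathbf{w}_{L-1})$. With the replication matrix $\mathbf{M}$ and the observations stacked in the permuted order of $\mathbf{y}$, the likelihood is exactly $\mathcal{N}(\mathbf{y};\mathbf{M}\boldsymbol{\mu},\boldsymbol{\Lambda})$. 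This holds because row $r$ of $\mathbf{M}$ selects the unique index $i(r)$ of observation $r$, so $(\mathbf{M}\boldsymbol{\mu})_r=\mu_{i(r)}$. It also holds because $\boldsymbol{\Lambda}=\mathbf{M}\boldsymbol{\Gamma}\mathbf{M}^\top$ is diagonal with entries $\sigma^2_{i(r)}$, since distinct rows of $\mathbf{M}$ are standard basis vectors and so $\mathbf{e}_a^\top\boldsymbol{\Gamma}\mathbf{e}_b=0$ for $a\neq b$. Positivity of the $\sigma_i^2$ makes $\boldsymbol{\Lambda}$ invertible.

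The problem is now a standard Gaussian linear model with prior $\boldsymbol{\mu}\sim\mathcal{N}(\mathbf{0},\boldsymbol{\Sigma})$ and observation $\mathbf{y}\mid\boldsymbol{\mu}\sim\mathcal{N}(\mathbf{M}\boldsymbol{\mu},\boldsymbol{\Lambda})$. Completing the square in $\boldsymbol{\mu}$ shows the exponent is
\begin{equation*}
-\tfrac12\boldsymbol{\mu}^\top(\boldsymbol{\Sigma}^{-1}+\mathbf{M}^\top\boldsymbol{\Lambda}^{-1}\mathbf{M})\boldsymbol{\mu}+\boldsymbol{\mu}^\top\mathbf{M}^\top\boldsymbol{\Lambda}^{-1}\mathbf{y}.
\end{equation*}
Hence the posterior is normal with precision $\boldsymbol{\Sigma}^{-1}+\mathbf{M}^\top\boldsymbol{\Lambda}^{-1}\mathbf{M}$ and mean $(\boldsymbol{\Sigma}^{-1}+\mathbf{M}^\top\boldsymbol{\Lambda}^{-1}\mathbf{M})^{-1}\mathbf{M}^\top\boldsymbol{\Lambda}^{-1}\mathbf{y}$.

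The last step rewrites these in the stated form, which avoids $\boldsymbol{\Sigma}^{-1}$. We factor
\begin{equation*}
\boldsymbol{\Sigma}^{-1}+\mathbf{M}^\top\boldsymbol{\Lambda}^{-1}\mathbf{M}=\boldsymbol{\Sigma}^{-1}(\mathbf{I}+\boldsymbol{\Sigma}\mathbf{M}^\top\boldsymbol{\Lambda}^{-1}\mathbf{M}).
\end{equation*}
Inverting gives $(\mathbf{I}+\boldsymbol{\Sigma}\mathbf{M}^\top\boldsymbol{\Lambda}^{-1}\mathbf{M})^{-1}\boldsymbol{\Sigma}$. By the push-through identity $(\mathbf{I}+\mathbf{A}\mathbf{B})^{-1}\mathbf{A}=\mathbf{A}(\mathbf{I}+\mathbf{B}\mathbf{A})^{-1}$, with $\mathbf{A}=\boldsymbol{\Sigma}$ and $\mathbf{B}=\mathbf{M}^\top\boldsymbol{\Lambda}^{-1}\mathbf{M}$, this equals $\boldsymbol{\Sigma}(\mathbf{I}+\mathbf{M}^\top\boldsymbol{\Lambda}^{-1}\mathbf{M}\boldsymbol{\Sigma})^{-1}$, which is the claimed covariance. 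Multiplying by $\mathbf{M}^\top\boldsymbol{\Lambda}^{-1}\mathbf{y}$ gives the claimed mean.

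The only point needing care is that $\mathbf{I}+\mathbf{M}^\top\boldsymbol{\Lambda}^{-1}\mathbf{M}\boldsymbol{\Sigma}$ is invertible. It is similar, via $\boldsymbol{\Sigma}^{1/2}$, to $\mathbf{I}+\boldsymbol{\Sigma}^{1/2}\mathbf{M}^\top\boldsymbol{\Lambda}^{-1}\mathbf{M}\boldsymbol{\Sigma}^{1/2}$, which is positive definite. Here $\boldsymbol{\Sigma}$ is positive definite thanks to the nugget $\eta>0$.

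It is also worth recording why this form is cheap. $\mathbf{M}^\top\boldsymbol{\Lambda}^{-1}\mathbf{M}=\mathrm{diag}(S_1/\sigma_1^2,\dots,S_N/\sigma_N^2)$, and $\mathbf{M}^\top\boldsymbol{\Lambda}^{-1}\mathbf{y}$ has $i$-th entry $\sum_s y_{i,s}/\sigma_i^2$. So only $N\times N$ inversions are required.
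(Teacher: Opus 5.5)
Your route is the paper's: you write the replicated likelihood as $\mathcal{N}(\mathbf{y};\mathbf{M}\boldsymbol{\mu},\boldsymbol{\Lambda})$, complete the square to get precision $\boldsymbol{\Sigma}^{-1}+\mathbf{B}$ with $\mathbf{B}=\mathbf{M}^\top\boldsymbol{\Lambda}^{-1}\mathbf{M}$, and factor out $\boldsymbol{\Sigma}^{-1}$. The paper factors on the right, $\boldsymbol{\Sigma}^{-1}+\mathbf{B}=(\mathbf{I}+\mathbf{B}\boldsymbol{\Sigma})\boldsymbol{\Sigma}^{-1}$, which gives the stated form at once. Your left factorization followed by the push-through identity is a valid detour. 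Your invertibility argument via $\boldsymbol{\Sigma}^{1/2}$ is also correct.

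The step that fails is your claim that $\boldsymbol{\Lambda}=\mathbf{M}\boldsymbol{\Gamma}\mathbf{M}^\top$ is diagonal. Distinct rows of $\mathbf{M}$ are \emph{not} distinct basis vectors: two replicates $r\neq s$ at the same input $i$ both have row $\mathbf{e}_i^\top$. Hence $(\mathbf{M}\boldsymbol{\Gamma}\mathbf{M}^\top)_{rs}=\mathbf{e}_i^\top\boldsymbol{\Gamma}\mathbf{e}_i=\sigma_i^2\neq 0$.

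After sorting by input, $\mathbf{M}\boldsymbol{\Gamma}\mathbf{M}^\top$ is block diagonal with rank-one blocks $\sigma_i^2\mathbf{1}_{S_i}\mathbf{1}_{S_i}^\top$. Its rank is therefore $N<\sum_i S_i$ whenever some $S_i>1$. Three things follow:
\begin{itemize}
\item the matrix is singular, so your claim that positivity of the $\sigma_i^2$ makes $\boldsymbol{\Lambda}$ invertible also fails;
\item it is the covariance of perfectly correlated replicates, not conditionally independent ones;
\item read literally, the product of normals is not $\mathcal{N}(\mathbf{y};\mathbf{M}\boldsymbol{\mu},\mathbf{M}\boldsymbol{\Gamma}\mathbf{M}^\top)$, and $\boldsymbol{\Lambda}^{-1}$ does not exist.
\end{itemize}

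The paper's proof has the same slip: it writes $\mathbf{Y}\mid\boldsymbol{\mu}\sim\mathcal{N}(\mathbf{M}\boldsymbol{\mu},\boldsymbol{\Lambda})$ with $\boldsymbol{\Lambda}=\mathbf{M}\boldsymbol{\Gamma}\mathbf{M}^\top$ and inverts it without comment. The repair is to define $\boldsymbol{\Lambda}=\mathrm{diag}(\mathbf{M}\boldsymbol{\sigma}^2)$, the diagonal matrix with $r$-th entry $\sigma^2_{i(r)}$, which is the diagonal part of $\mathbf{M}\boldsymbol{\Gamma}\mathbf{M}^\top$. This is evidently what the statement intends. With it, your product-of-normals identity holds exactly, and $\mathbf{M}^\top\boldsymbol{\Lambda}^{-1}\mathbf{M}=\mathrm{diag}(S_1/\sigma_1^2,\dots,S_N/\sigma_N^2)$ as you note. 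The rest of your argument then goes through unchanged. State this corrected definition explicitly instead of arguing that $\mathbf{M}\boldsymbol{\Gamma}\mathbf{M}^\top$ is diagonal.
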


\begin{proposition}[Large $N$, $S_i=1$]
\label{prop:hetero_case2}
Given $\mathbf{Y} \in \mathbb{R}^N$, where $Y_i$ for $i = 1, \dots, N$ are conditionally independent and distributed as $\mathcal{N}(\mu_i, \sigma_i^2)$, the posterior distribution $p(\boldsymbol{\mu} \mid \log \boldsymbol{\sigma}^2, \{\mathbf{w}^{(p)}_l\}, \mathbf{y}, \mathbf{x})$ of $\boldsymbol{\mu} = \mathbf{F}^{(1)}$ under the Vecchia approximation, given $\boldsymbol{\sigma}^2 = \exp\{\mathbf{f}^{(2)}\}$, the imputed latent variables $\{\mathbf{w}^{(p)}_l\}$, and the observed inputs $\mathbf{x}$ and outputs $\mathbf{y}$, is a multivariate normal distribution given by:
\begin{equation*}
\label{eq:hetero_case2}
\mathcal{N}\left(-\left(\mathbf{U}_{\mathbf{F}^{(1)}\mathbf{F}^{(1)}}^\top\right)^{-1}\mathbf{U}_{\mathbf{Y}\mathbf{F}^{(1)}}^\top\mathbf{y},\,  \left(\mathbf{U}_{\mathbf{F}^{(1)}\mathbf{F}^{(1)}}\mathbf{U}_{\mathbf{F}^{(1)}\mathbf{F}^{(1)}}^\top\right)^{-1}\right)\,,
\end{equation*}
where $\mathbf{U}_{\mathbf{F}^{(1)}\mathbf{F}^{(1)}}$ and $\mathbf{U}_{\mathbf{Y}\mathbf{F}^{(1)}}$ are block components of a sparse upper-triangular matrix
\begin{equation*}
  \mathbf{U}=\begin{bmatrix}
  \mathbf{U}_{\mathbf{Y}\mathbf{Y}} & \mathbf{U}_{\mathbf{Y}\mathbf{F}^{(1)}} \\
  \mathbf{0} & \mathbf{U}_{\mathbf{F}^{(1)}\mathbf{F}^{(1)}}
  \end{bmatrix}\,,
\end{equation*}
which is the upper-lower decomposition of the precision matrix $\mathbf{P}$ for the Vecchia-approximated joint distribution of $\mathbf{Z}=\begin{pmatrix}\mathbf{Y}|\mathbf{W}_{L-1},\log\boldsymbol{\sigma}^2\\ \mathbf{F}^{(1)}|\mathbf{W}_{L-1}\end{pmatrix}$, such that $\mathbf{P} = \mathbf{U}\mathbf{U}^\top$. Specifically, the $ji$-th element of $\mathbf{U}_{*\mathbf{F}^{(1)}}=[\mathbf{U}_{\mathbf{Y}\mathbf{F}^{(1)}}^\top,\mathbf{U}_{\mathbf{F}^{(1)}\mathbf{F}^{(1)}}^\top]^\top\in\mathbb{R}^{2N\times N}$ is given by:
\begin{equation}
\label{eq:joint_u_matrix}
\mathbf{U}_{*\mathbf{F}^{(1)},ji}=\begin{dcases}
\left(d_i\right)^{-\frac{1}{2}}, & j=i+N,\\
-b_{ik}\left(d_{i}\right)^{-\frac{1}{2}}, & j\in g(-(i+N)),\\
0, & \mathrm{otherwise},
\end{dcases}
\end{equation}
with
\begin{itemize}
    \item $d_{i}=(\sigma_L^{(1)})^2(1+\eta)-\mathbf{r}_{i}^{\top}\boldsymbol{\Sigma}_{g(-(i+N))}^{-1}\mathbf{r}_{i}$
    \item $\mathbf{b}_{i}=\mathbf{r}_{i}^{\top}\boldsymbol{\Sigma}_{g(-(i+N))}^{-1}$\,,
\end{itemize} 
where $g(-(i+N))\subseteq\{1,2,\dots,i+N-1\}$ is a conditioning index set of size $
\min(M, i+N-1)$ for all $i=1,\dots,N$; $\mathbf{r}_{i}=(\sigma_L^{(1)})^2[k^{(1)}_{L}(\mathbf{w}_{L-1,i*},[\mathbf{w}_{L-1}^\top,\mathbf{w}_{L-1}^\top]^\top_{j*})]^{\top}_{j\in g(-(i+N))}\in\mathbb{R}^{|g(-(i+N))|\times 1}$; $\boldsymbol{\Sigma}_{g(-(i+N))}$ is the covariance matrix of $\mathbf{Z}_{g(-(i+N))}$; and $b_{ik}$ is the $k$-th element of $\mathbf{b}_{i}$ with $k=\mathrm{pos}_{g(-(i+N))}(j)$ denoting the position of $j$ in the conditioning set $g(-(i+N))$.
\end{proposition}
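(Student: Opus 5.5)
We view $\mathbf{Z}$ as a single Gaussian vector of length $2N$: the $N$ observations first, then the $N$ latent means $\mathbf{F}^{(1)}$. We apply the Vecchia construction of~\eqref{eq:vecch_gp}--\eqref{eq:u_matrix} to $\mathbf{Z}$ with this ordering. This gives a joint density $\widehat{p}(\mathbf{y},\mathbf{f}^{(1)}\mid\mathbf{w}_{L-1},\log\boldsymbol{\sigma}^2)$ that is multivariate normal with zero mean and precision $\mathbf{P}=\mathbf{U}\mathbf{U}^\top$. Since the target posterior is proportional to this joint density viewed as a function of $\mathbf{f}^{(1)}$ with $\mathbf{y}$ fixed, it is Gaussian. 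Its precision is the $\mathbf{F}^{(1)}\mathbf{F}^{(1)}$ block of $\mathbf{P}$, and its mean is $-\mathbf{P}_{\mathbf{F}\mathbf{F}}^{-1}\mathbf{P}_{\mathbf{F}\mathbf{Y}}\mathbf{y}$, by the standard conditioning formula in precision form.

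First I verify the entries of the columns of $\mathbf{U}$ indexed by $\mathbf{F}^{(1)}$. The covariance of $\mathbf{Z}$ has blocks $\mathrm{Cov}(\mathbf{Y})=\boldsymbol{\Sigma}^{(1)}_L+\boldsymbol{\Gamma}$, $\mathrm{Cov}(\mathbf{Y},\mathbf{F}^{(1)})=\mathrm{Cov}(\mathbf{F}^{(1)})=\boldsymbol{\Sigma}^{(1)}_L$, because $\mathbf{Y}=\mathbf{F}^{(1)}+\boldsymbol{\epsilon}$ with $\boldsymbol{\epsilon}\sim\mathcal{N}(\mathbf{0},\boldsymbol{\Gamma})$ independent of $\mathbf{F}^{(1)}$. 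Hence the cross-covariance of $F^{(1)}_i$ with any earlier entry $j$ of $\mathbf{Z}$, whether a $Y$ or an $F$, is $(\sigma^{(1)}_L)^2 k^{(1)}_L(\mathbf{w}_{L-1,i*},[\mathbf{w}_{L-1}^\top,\mathbf{w}_{L-1}^\top]^\top_{j*})$. The nugget convention only affects diagonal terms, so this vector is $\mathbf{r}_i$. The marginal variance of $F^{(1)}_i$ is $(\sigma^{(1)}_L)^2(1+\eta)$. Writing the Vecchia conditional $p(z_{i+N}\mid\mathbf{z}_{g(-(i+N))})$ as $\mathcal{N}(\mathbf{b}_i\mathbf{z}_{g(-(i+N))},d_i)$ and applying the column construction of~\citet{katzfuss2021general} exactly as in~\eqref{eq:u_matrix} yields~\eqref{eq:joint_u_matrix}. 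The only differences from~\eqref{eq:u_matrix} are that we use covariances rather than correlations, and that the row index is shifted by $N$.

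Next comes the block algebra. Since $\mathbf{U}$ is upper triangular with the stated block form, we have $\mathbf{P}_{\mathbf{F}\mathbf{F}}=\mathbf{U}_{\mathbf{F}^{(1)}\mathbf{F}^{(1)}}\mathbf{U}_{\mathbf{F}^{(1)}\mathbf{F}^{(1)}}^\top$ and $\mathbf{P}_{\mathbf{F}\mathbf{Y}}=\mathbf{U}_{\mathbf{F}^{(1)}\mathbf{F}^{(1)}}\mathbf{U}_{\mathbf{Y}\mathbf{F}^{(1)}}^\top$. The zero lower-left block kills the other term. Substituting, the factor $\mathbf{U}_{\mathbf{F}^{(1)}\mathbf{F}^{(1)}}$ cancels and the mean becomes $-(\mathbf{U}_{\mathbf{F}^{(1)}\mathbf{F}^{(1)}}^\top)^{-1}\mathbf{U}_{\mathbf{Y}\mathbf{F}^{(1)}}^\top\mathbf{y}$, while the covariance is $(\mathbf{U}_{\mathbf{F}^{(1)}\mathbf{F}^{(1)}}\mathbf{U}_{\mathbf{F}^{(1)}\mathbf{F}^{(1)}}^\top)^{-1}$, as claimed. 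The triangular factor is invertible because its diagonal entries $d_i^{-1/2}$ are positive.

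The main obstacle is the bookkeeping in the first step. I must check that the ordering places $\mathbf{Y}$ before $\mathbf{F}^{(1)}$, so that every conditioning set $g(-(i+N))$ refers only to earlier entries. I must also make sure that the triangularity convention, with $\mathbf{P}=\mathbf{U}\mathbf{U}^\top$ and $\mathbf{U}$ upper triangular, really produces the stated zero block in the lower left. I would confirm this by writing $\widehat{p}(\mathbf{z})=\prod_j p(z_j\mid\mathbf{z}_{g(-j)})$, so that the exponent equals $-\tfrac12\|\mathbf{U}^\top\mathbf{z}\|^2$. Row $j$ of $\mathbf{U}^\top$ involves only $z_j$ and earlier entries. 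This makes $\mathbf{U}^\top$ lower triangular, so the lower-left block of $\mathbf{U}$ vanishes. I would also note that the columns of $\mathbf{U}$ indexed by $\mathbf{Y}$ never enter the final formula, so their form need not be specified.
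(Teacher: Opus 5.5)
Your proposal is correct and is essentially the paper's argument. Like the paper, you build $\boldsymbol{\Sigma}_{\mathbf{Z}}$ from $\mathbf{Y}=\mathbf{F}^{(1)}+\boldsymbol{\epsilon}$ with $\boldsymbol{\epsilon}\sim\mathcal{N}(\mathbf{0},\boldsymbol{\Gamma})$ independent of $\mathbf{F}^{(1)}$, Vecchia-approximate $\mathbf{Z}$ under response-first ordering so that $\mathbf{P}=\mathbf{U}\mathbf{U}^\top$ with $\mathbf{U}$ block upper-triangular, and read off the conditional of $\mathbf{F}^{(1)}$ given $\mathbf{Y}$ in precision form; the only difference is that you do explicitly the block algebra the paper takes from a handbook theorem.

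One small point, which the paper shares: $\mathrm{Cov}(F^{(1)}_i,Y_i)$ is the diagonal entry $(\sigma^{(1)}_L)^2(1+\eta)$ of $\boldsymbol{\Sigma}^{(1)}_L(\mathbf{w}_{L-1})$. So your claim that the nugget ``only affects diagonal terms'' fails for the $Y_i$ entry of $\mathbf{r}_i$, which nearest-neighbour conditioning will usually select, and the stated $\mathbf{r}_i$ is off there by $(\sigma^{(1)}_L)^2\eta$; this is negligible for a jitter-sized nugget.
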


\begin{proposition}[Large $N$, $S_i>1$]
\label{prop:hetero_case3}
Let $\mathbf{Y}$ be a permutation of $[\mathbf{Y}_1,\dots,\mathbf{Y}_N]$, in which $\mathbf{Y}_i\in\mathbb{R}^{S_i}$ has $S_i$ observations that are repeatedly generated at the $i$-th input $\mathbf{x}_{i*}$, and that $Y_{i,1},\dots,Y_{i,S_i}$ are conditionally independent and distributed as $\mathcal{N}(\mu_i, \sigma_i^2)$ under the GDGP. Define $\tilde{\mathbf{Y}}=(\mathbf{M}^T\boldsymbol{\Lambda}^{-1}\mathbf{M})^{-1}\mathbf{M}^T\boldsymbol{\Lambda}^{-1}\mathbf{Y}$, where $\mathbf{M} \in\mathbb{R}^{(\sum_{i=1}^N S_i)\times N} $ is a replication matrix whose rows are standard basis vectors, so that $\mathbf{Mx}$ reorders and replicates the entries of $\mathbf{x}$ to align with $\mathbf{Y}$; and $\boldsymbol{\Lambda}=\mathbf{M}\boldsymbol{\Gamma}\mathbf{M}^\top$ with $\boldsymbol{\Gamma}=\text{diag}(\sigma^2_1,\dots,\sigma^2_N)=\text{diag}(e^{f^{(2)}_1},\dots,e^{f^{(2)}_N})$. Then, the posterior distribution $p(\boldsymbol{\mu} \mid \log \boldsymbol{\sigma}^2, \{\mathbf{w}^{(p)}_l\}, \mathbf{y}, \mathbf{x})$ of $\boldsymbol{\mu} = \mathbf{F}^{(1)}$ under the Vecchia approximation, given $\boldsymbol{\sigma}^2 = \exp\{\mathbf{f}^{(2)}\}$, the imputed latent variables $\{\mathbf{w}^{(p)}_l\}$, and the observed inputs $\mathbf{x}$ and outputs $\mathbf{y}$, is a multivariate normal distribution given by:
\begin{equation*}
\mathcal{N}\left(-\left(\mathbf{U}_{\mathbf{F}^{(1)}\mathbf{F}^{(1)}}^\top\right)^{-1}\mathbf{U}_{\tilde{\mathbf{Y}}\mathbf{F}^{(1)}}^\top\tilde{\mathbf{y}},\,  \left(\mathbf{U}_{\mathbf{F}^{(1)}\mathbf{F}^{(1)}}\mathbf{U}_{\mathbf{F}^{(1)}\mathbf{F}^{(1)}}^\top\right)^{-1}\right)\,,
\end{equation*}
where $\tilde{\mathbf{y}} = (\mathbf{M}^T\boldsymbol{\Lambda}^{-1}\mathbf{M})^{-1}\mathbf{M}^T\boldsymbol{\Lambda}^{-1}\mathbf{y}$; $\mathbf{U}_{\mathbf{F}^{(1)}\mathbf{F}^{(1)}}$ and $\mathbf{U}_{\tilde{\mathbf{Y}}\mathbf{F}^{(1)}}$ are block components of a sparse upper-triangular matrix
\begin{equation*}
  \mathbf{U}=\begin{bmatrix}
  \mathbf{U}_{\tilde{\mathbf{Y}}\tilde{\mathbf{Y}}} & \mathbf{U}_{\tilde{\mathbf{Y}}\mathbf{F}^{(1)}} \\
  \mathbf{0} & \mathbf{U}_{\mathbf{F}^{(1)}\mathbf{F}^{(1)}}
  \end{bmatrix}\,,
\end{equation*}
which is the upper-lower decomposition of the precision matrix $\mathbf{P}$ for the Vecchia-approximated joint distribution of $\mathbf{Z}=\begin{pmatrix}\tilde{\mathbf{Y}}|\mathbf{W}_{L-1},\log\boldsymbol{\sigma}^2\\ \mathbf{F}^{(1)}|\mathbf{W}_{L-1}\end{pmatrix}$, such that $\mathbf{P} = \mathbf{U}\mathbf{U}^\top$. The $ji$-th element of $\mathbf{U}_{*\mathbf{F}^{(1)}}=[\mathbf{U}_{\tilde{\mathbf{Y}}\mathbf{F}^{(1)}}^\top,\mathbf{U}_{\mathbf{F}^{(1)}\mathbf{F}^{(1)}}^\top]^\top\in\mathbb{R}^{2N\times N}$ is obtained following the Equation~\eqref{eq:joint_u_matrix}.
\end{proposition}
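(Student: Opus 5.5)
The plan is to reduce Proposition~\ref{prop:hetero_case3} to Proposition~\ref{prop:hetero_case2} by showing that, given $\log\boldsymbol{\sigma}^2$ and $\{\mathbf{w}^{(p)}_l\}$, the replicated data enter the conditional posterior of $\boldsymbol{\mu}=\mathbf{F}^{(1)}$ only through the weighted averages $\tilde{\mathbf{y}}$. Proposition~\ref{prop:hetero_case2} is then applied verbatim with $\tilde{\mathbf{Y}}$ in place of $\mathbf{Y}$.

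\textbf{Step 1: reduce the replicated likelihood to $\tilde{\mathbf{y}}$.} I first write the replicated likelihood in matrix form as $\mathbf{Y}\mid\boldsymbol{\mu},\log\boldsymbol{\sigma}^2\sim\mathcal{N}(\mathbf{M}\boldsymbol{\mu},\boldsymbol{\Lambda})$. Here $\boldsymbol{\Lambda}=\mathbf{M}\boldsymbol{\Gamma}\mathbf{M}^\top$ is diagonal, because each row of $\mathbf{M}$ is a standard basis vector. This agrees with the product form in~\eqref{eq:lik_pos_rep}. Next I expand the quadratic form
\[
(\mathbf{y}-\mathbf{M}\boldsymbol{\mu})^\top\boldsymbol{\Lambda}^{-1}(\mathbf{y}-\mathbf{M}\boldsymbol{\mu})
\]
by completing the square in $\boldsymbol{\mu}$, using $\mathbf{A}=\mathbf{M}^\top\boldsymbol{\Lambda}^{-1}\mathbf{M}$. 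This gives
\[
(\tilde{\mathbf{y}}-\boldsymbol{\mu})^\top\mathbf{A}(\tilde{\mathbf{y}}-\boldsymbol{\mu})+c(\mathbf{y}),
\]
where $c(\mathbf{y})$ does not depend on $\boldsymbol{\mu}$.

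\textbf{Step 2: identify $\tilde{\mathbf{Y}}$ as an unreplicated heteroskedastic model.} One checks directly that $\mathbf{A}=\mathrm{diag}(S_1/\sigma_1^2,\dots,S_N/\sigma_N^2)$. It follows that $\tilde{y}_i$ is the sample mean of the $i$-th replicate block. Since $\mathbf{A}\tilde{\mathbf{y}}=\mathbf{M}^\top\boldsymbol{\Lambda}^{-1}\mathbf{y}$, we have, as functions of $\boldsymbol{\mu}$,
\[
p(\mathbf{y}\mid\boldsymbol{\mu},\log\boldsymbol{\sigma}^2)\propto\mathcal{N}(\tilde{\mathbf{y}};\boldsymbol{\mu},\mathbf{A}^{-1}).
\]
Equivalently, $\tilde{\mathbf{Y}}$ is a sufficient statistic for $\boldsymbol{\mu}$, with
\[
\tilde{\mathbf{Y}}\mid\boldsymbol{\mu},\log\boldsymbol{\sigma}^2\sim\mathcal{N}\bigl(\boldsymbol{\mu},\mathrm{diag}(\sigma_i^2/S_i)\bigr).
\]
Therefore
\[
p(\boldsymbol{\mu}\mid\log\boldsymbol{\sigma}^2,\{\mathbf{w}^{(p)}_l\},\mathbf{y},\mathbf{x})=p(\boldsymbol{\mu}\mid\log\boldsymbol{\sigma}^2,\{\mathbf{w}^{(p)}_l\},\tilde{\mathbf{y}},\mathbf{x}).
\]
The right-hand side is exactly the setting of Proposition~\ref{prop:hetero_case2}: $N$ conditionally independent Gaussian observations with means $\mu_i$ and known variances $\sigma_i^2/S_i$.

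\textbf{Step 3: apply Proposition~\ref{prop:hetero_case2}.} I form the joint vector $\mathbf{Z}=(\tilde{\mathbf{Y}},\mathbf{F}^{(1)})$ with the $\tilde{\mathbf{Y}}$ block ordered first, and Vecchia-approximate it to get $\mathbf{P}=\mathbf{U}\mathbf{U}^\top$. The joint covariance has blocks $\boldsymbol{\Sigma}+\mathbf{A}^{-1}$, $\boldsymbol{\Sigma}$ and $\boldsymbol{\Sigma}$, where $\boldsymbol{\Sigma}=\boldsymbol{\Sigma}_L^{(1)}(\mathbf{w}_{L-1})$; only the diagonal noise term differs from Proposition~\ref{prop:hetero_case2}. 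Conditioning a Gaussian given in precision form on its first block then gives conditional precision $\mathbf{P}_{\mathbf{F}\mathbf{F}}=\mathbf{U}_{\mathbf{F}\mathbf{F}}\mathbf{U}_{\mathbf{F}\mathbf{F}}^\top$. This holds because $\mathbf{U}$ is upper triangular, so the lower-left block of $\mathbf{U}$ is zero. The conditional mean is
\[
-\mathbf{P}_{\mathbf{F}\mathbf{F}}^{-1}\mathbf{P}_{\mathbf{F}\tilde{\mathbf{Y}}}\tilde{\mathbf{y}}=-\bigl(\mathbf{U}_{\mathbf{F}\mathbf{F}}^\top\bigr)^{-1}\mathbf{U}_{\tilde{\mathbf{Y}}\mathbf{F}}^\top\tilde{\mathbf{y}},
\]
using $\mathbf{P}_{\mathbf{F}\tilde{\mathbf{Y}}}=\mathbf{U}_{\mathbf{F}\mathbf{F}}\mathbf{U}_{\tilde{\mathbf{Y}}\mathbf{F}}^\top$. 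The entries of the $\mathbf{F}^{(1)}$-columns of $\mathbf{U}$ come from the same regression formulas $d_i,\mathbf{b}_i$ as in~\eqref{eq:joint_u_matrix}. The only change is that $\boldsymbol{\Sigma}_{g(-(i+N))}$ now uses the noise variances $\sigma_i^2/S_i$ on its $\tilde{\mathbf{Y}}$ diagonal, while $\mathbf{r}_i$ is unchanged, since $\mathrm{Cov}(\tilde{Y}_j,F_i)=\mathrm{Cov}(F_j,F_i)$.

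\textbf{Main obstacle.} The step needing the most care is Step 2: justifying that replacing $\mathbf{y}$ by $\tilde{\mathbf{y}}$ leaves the Vecchia-approximated posterior of $\boldsymbol{\mu}$ unchanged. The sufficiency argument is exact for the likelihood, but the Vecchia approximation is imposed on the joint $(\tilde{\mathbf{Y}},\mathbf{F}^{(1)})$. I would therefore interpret the claim as follows: the approximation is constructed on the reduced model. This is legitimate because the likelihood factor $p(\tilde{\mathbf{y}}\mid\boldsymbol{\mu})$ is diagonal and multiplies the Vecchia prior of $\mathbf{F}^{(1)}$, so no approximation is introduced beyond that of Proposition~\ref{prop:hetero_case2}.
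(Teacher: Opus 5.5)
Your proposal is correct and follows the paper's proof essentially step for step. The paper likewise shows that the replicated likelihood depends on $\mathbf{y}$ only through $\mathbf{M}^\top\boldsymbol{\Lambda}^{-1}\mathbf{y}$; it does this via a factorization $h(\mathbf{y})\,g(s(\mathbf{y}),\boldsymbol{\mu})$ where you complete the square, which is equivalent. It then notes $\tilde{\mathbf{Y}}\mid\boldsymbol{\mu},\log\boldsymbol{\sigma}^2\sim\mathcal{N}\bigl(\boldsymbol{\mu},(\mathbf{M}^\top\boldsymbol{\Lambda}^{-1}\mathbf{M})^{-1}\bigr)$ and reruns the argument of Proposition~\ref{prop:hetero_case2} on $\mathbf{Z}=(\tilde{\mathbf{Y}},\mathbf{F}^{(1)})$, imposing the Vecchia approximation on this reduced joint exactly as you interpret it.

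One correction: $\mathbf{M}\boldsymbol{\Gamma}\mathbf{M}^\top$ is \emph{not} diagonal when some $S_i>1$. Its $(r,s)$ entry is $\sigma_i^2$ whenever rows $r$ and $s$ of $\mathbf{M}$ both select input $i$, so it has rank $N$ and is singular. The diagonal, invertible $\boldsymbol{\Lambda}$ you need must therefore be read as $\mathrm{diag}\bigl(\mathbf{M}(\sigma_1^2,\dots,\sigma_N^2)^\top\bigr)$. This is what conditional independence of replicates actually gives, and what the paper tacitly uses. With that reading, $\mathbf{A}=\mathrm{diag}(S_i/\sigma_i^2)$ and all your later steps are right.

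Finally, the reason you give in your last paragraph is a little off. Under the response-first ordering you use in Step~3, the Vecchia joint of $\mathbf{Z}$ does not factor as the exact likelihood $p(\tilde{\mathbf{y}}\mid\boldsymbol{\mu})$ times a Vecchia prior on $\mathbf{F}^{(1)}$; that factorization is the latent-first scheme. Building the approximation on the reduced vector is therefore a definitional choice, the same one the paper makes, rather than something that follows from that factorization.
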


\begin{proof}
The proofs of Propositions~\ref{prop:hetero_case1}, ~\ref{prop:hetero_case2} and~\ref{prop:hetero_case3} are in Sections~\ref{sec:hetero_case1}, \ref{sec:hetero_case2}, and~\ref{sec:hetero_case3} of the supplementary materials, respectively.
\end{proof}

Note that Proposition~\ref{prop:hetero_case1} enables analytical sampling from $p(\boldsymbol{\mu} \mid \log \boldsymbol{\sigma}^2, \{\mathbf{w}^{(p)}_l\}, \mathbf{y}, \mathbf{x})$ with computational complexity $\mathcal{O}(N^3)$. This is computationally feasible when $N$ is small, even in the presence of a large number of replicates. When $N$ is large, in contrast, Propositions~\ref{prop:hetero_case2} and~\ref{prop:hetero_case3} admit analytical sampling with complexity $\mathcal{O}(NM^3+NM)$, regardless of the number of replicates. If Proposition~\ref{prop:hetero_simple} were applied naively in these settings, the sampling complexity would scale as $\mathcal{O}((\sum_{i=1}^N S_i)^3)$, which is computationally prohibitive, particularly when both 
$N$ and the numbers of replicates $S_i$ are large. 

The Vecchia approximations underlying Propositions~\ref{prop:hetero_case2} and~\ref{prop:hetero_case3} are constructed using the \emph{response-first} ordering of~\citet{katzfuss2020vecchia}, in which $\mathbf{Y}$ and $\tilde{\mathbf{Y}}$ are always ordered before $\mathbf{F}^{(1)}$, respectively. Following~\citet{katzfuss2020vecchia}, in implementation the conditioning index set $g(-(i+N))$ is formed by selecting up to $M$ variables in $\mathbf{Z}$ that precede ${Z}_{i+N}$ and whose corresponding inputs in $\mathbf{w}_{L-1}$ are nearest neighbors of $\mathbf{w}_{L-1,i*}$. When a nearest neighbor corresponds to two candidate conditioning variables in $\mathbf{Z}$, which occurs because, conditional on $\mathbf{W}_{L-1}$, $\mathbf{Y}$ and $\tilde{\mathbf{Y}}$ have the same input locations as $\mathbf{F}^{(1)}$, preference is given to the conditioning variable in $\mathbf{F}^{(1)}$.

\section{Experiments}
\label{sec:experiments}
In this section, we compare the emulation performance of GDGP under a selected set of likelihoods with several competing approaches across a series of experiments. Specifically, for GDGP we employ a two-layer DGP architecture (i.e., $L=2$), in which the first layer contains a number of GP nodes equal to the input dimension, and the second layer contains a number of GP nodes equal to the number of parameters of the associated likelihood. Each GP node uses an isotropic Matérn-2.5 kernel. This two-layer DGP architecture has been shown~\citep{sauer2020active, ming2021deep} to provide a good balance between computational efficiency and the modeling flexibility of DGPs, and is used in all subsequent experiments.

For consistency, all GP-based competitors are also specified using the Matérn-2.5 kernel. All GDGP emulators are implemented using the \texttt{R} package \texttt{dgpsi}, available at \url{https://github.com/mingdeyu/dgpsi-R}, and all experiments are conducted on a Mac Studio with a 24-core Apple M2 Ultra processor and 128\,GB of RAM.

\subsection{Heteroskedastic Gaussian Likelihood}
In this section, we demonstrate the capabilities of GDGP under a heteroskedastic Gaussian likelihood. For benchmarking, we compare GDGP with specialized heteroskedastic Gaussian process models, including the maximum-likelihood-based approach
of \citet{binois2018practical}, hereinafter denoted as \emph{hetGP} and implemented in the \texttt{R} package \texttt{hetGP}, and the fully Bayesian framework of \citet{patil2025vecchia}, hereinafter denoted as \emph{bhetGP} and implemented in the \texttt{R} package \texttt{bhetGP}. We have chosen functions that exhibit non-stationarity to demonstrate the efficacy of the GDGP framework, which is designed for these settings.

\subsubsection{Heteroskedastic step function}
\label{sec:1d-het}
Consider a synthetic simulator whose output $y$ at a given location $x$ follows a Gaussian distribution $\mathcal{N}(\mu(x), \sigma^2(x))$, where
\begin{equation*}
\mu(x) =
\begin{cases}
-1, & x < 0.5,\\
\;\,1, & x \ge 0.5,
\end{cases}
\end{equation*}
and
\begin{equation*}
\sigma^2(x) = 
\frac{1}{600}
\left[
\sin(4x-2) +10\exp\{-1200(2x-1)^2\} +1
\right].
\end{equation*}

In this experiment, we evaluate the performance of different models by assessing their ability to reconstruct the true mean function $\mu(x)$ and log-variance function $\log \sigma^2(x)$ when $x\in[0,1]$. Performance is quantified using the normalized root mean squared error (NRMSE; \citealp{ming2021deep}, Section~4) and the negative continuous ranked probability score (NCRPS; \citealp{gneiting2007strictly}, Section~4.2). The former measures the deterministic accuracy of model predictions, while the latter evaluates the quality of uncertainty quantification. 

Specifically, we select $100$ input locations uniformly spaced over $[0,1]$ as the training inputs. At each training input location, we generate $R \in \{20, 40, 60, 80, 100\}$ replicates from the synthetic simulator as training outputs. For each value of $R$, we train each of the three model candidates and repeat the training procedure $100$ times. NRMSE and NCRPS are then evaluated on a test set obtained by evaluating $\mu(x)$ and $\log \sigma^2(x)$ at $1{,}000$ evenly spaced input locations over $[0,1]$.

Figure~\ref{fig:1d-het} compares the performance of hetGP, bhetGP, and GDGP. The results show that GDGP achieves the best overall performance in terms of both NRMSE and NCRPS. In addition, for GDGP, both NRMSE and NCRPS decrease steadily as the number of replicates increases. In particular, for the mean function $\mu(x)$, GDGP substantially outperforms both hetGP and bhetGP, owing to its ability to capture non-stationary structures, such as the step function in this example. For the log-variance function $\log\sigma^2(x)$, both hetGP and GDGP achieve lower NRMSE and NCRPS than bhetGP across all replicate settings. Although hetGP attains lower NRMSE than GDGP when the number of replicates is small (i.e., $R=20$ and $40$, likely due to SI finding it hard to identify weaker non-stationarity in the log-variance process when the number of replicates is small), GDGP yields lower NRMSE on average as the number of replicates increases, and consistently achieves lower NCRPS across different numbers of replicates.

\begin{figure}[ht!]
\centering 
\subfloat[NRMSE of $\mu(x)$]{\label{fig:1d-het-nrmse-mu}\includegraphics[width=0.5\linewidth]{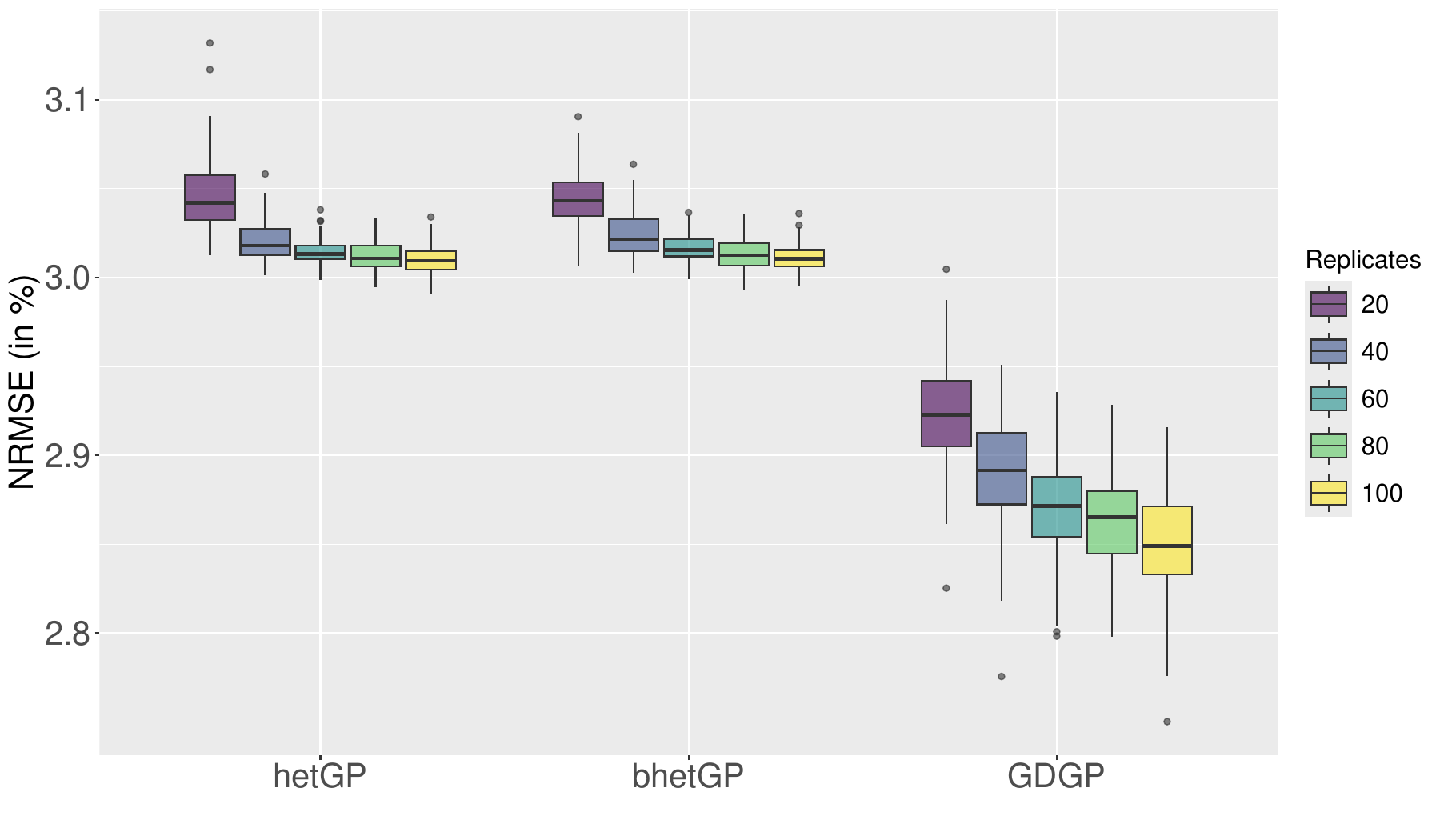}}
\subfloat[NCRPS of $\mu(x)$]{\label{fig:1d-het-ncrps-mu}\includegraphics[width=0.5\linewidth]{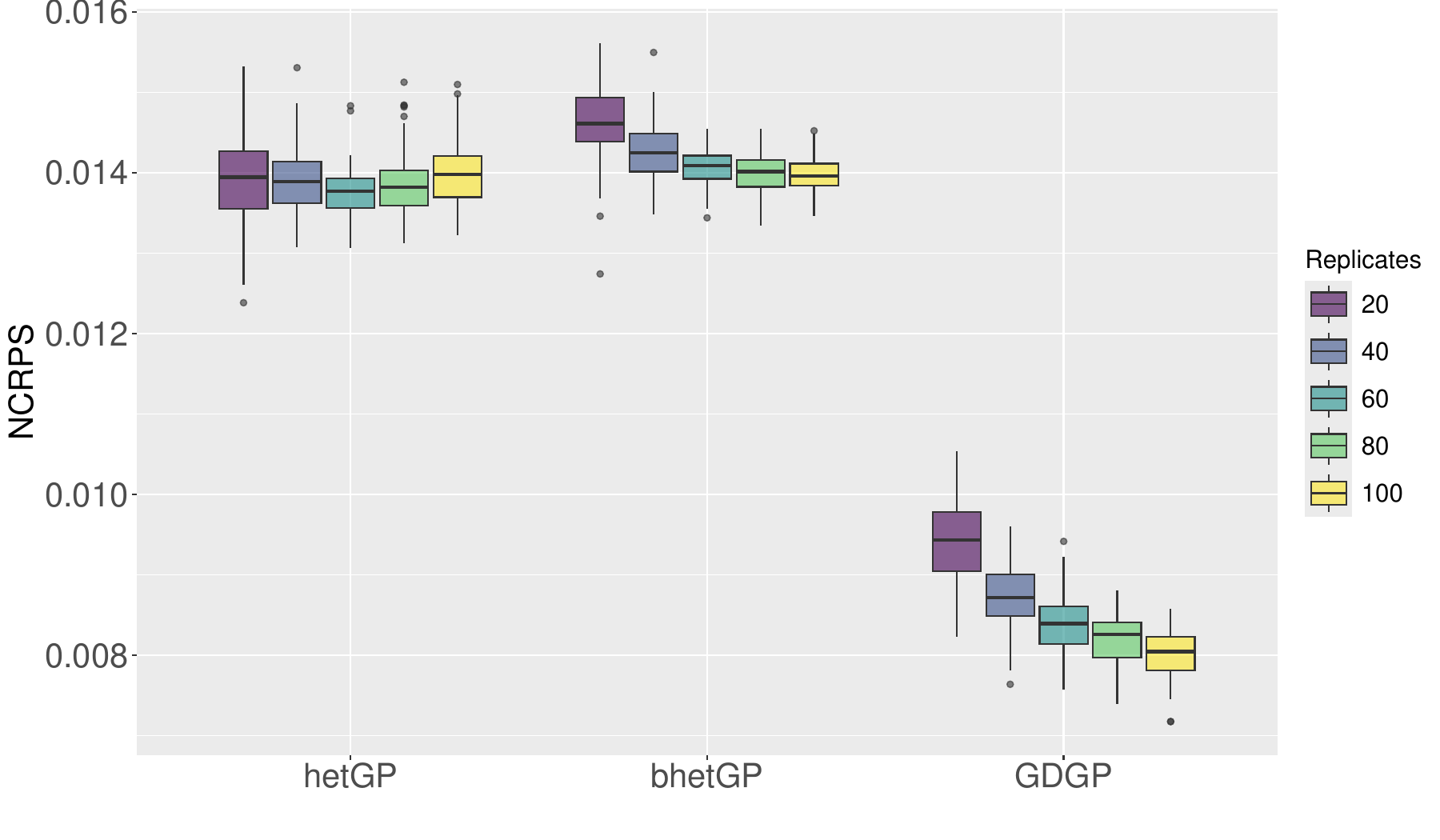}}\\
\subfloat[NRMSE of $\log\sigma^2(x)$]{\label{fig:1d-het-nrmse-logvar}\includegraphics[width=0.5\linewidth]{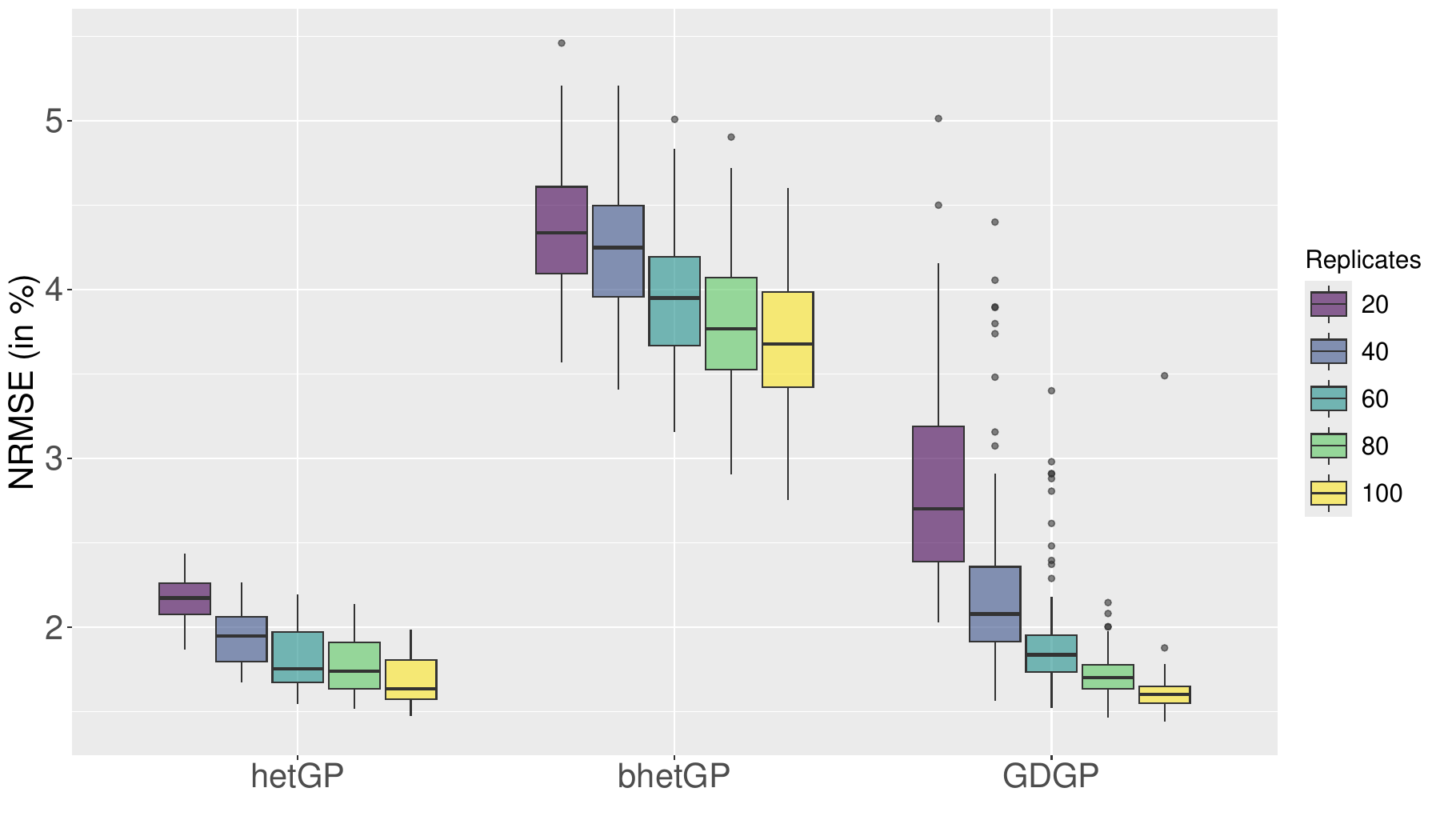}}
\subfloat[NCRPS of $\log\sigma^2(x)$]{\label{fig:1d-het-ncrps-logvar}\includegraphics[width=0.5\linewidth]{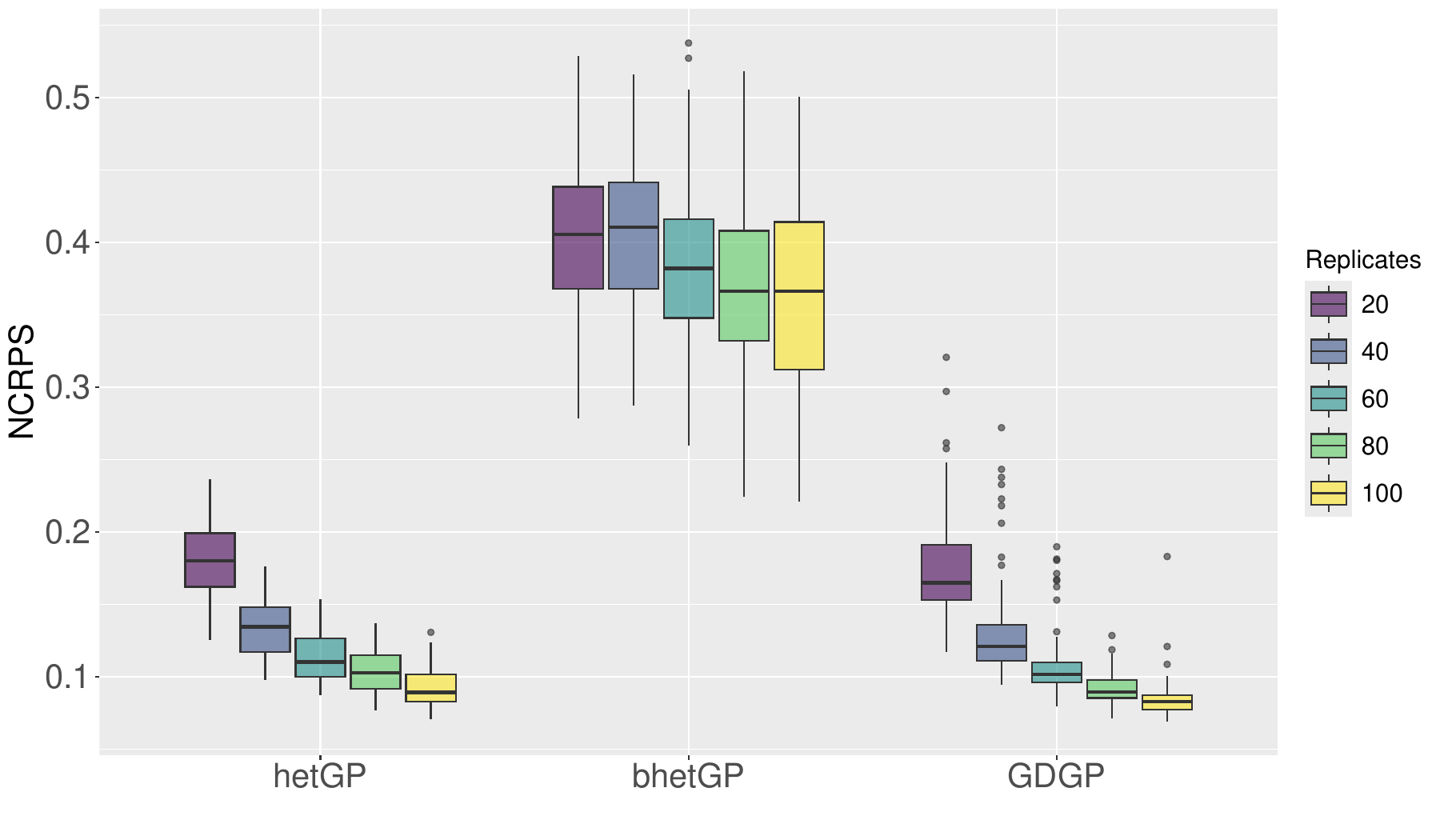}}
\caption{NRMSEs (lower is better) and NCRPSs (lower is better) for hetGP, bhetGP, and GDGP, trained on $100$ unique input locations with $R\in\{20,40,60,80,100\}$ replicates per location, for emulating the mean function $\mu(x)$ and log-variance function $\log \sigma^2(x)$ of the synthetic simulator described in Section~\ref{sec:1d-het}. Results are evaluated using $1{,}000$ validation points and summarized over $100$ independent training trials.}
\label{fig:1d-het}
\end{figure}

\subsubsection{Susceptible, Infective, and Recovered (SIR) model}
\label{sec:sir}
In this experiment, we consider a stochastic Susceptible-Infective-Recovered (SIR) simulator obtained by extending the benchmark model in~\citet{binois2021hetgp} distributed with the \texttt{hetGP} package. 
Relative to the original two-input formulation, which varies only the initial susceptible and infected populations, the modified simulator is defined on a five-dimensional unit hypercube, with inputs determining the initial susceptible population ($S_0$), the initial infected population ($I_0$), the transmission rate ($\beta$), the recovery rate ($\gamma$), and the external connectivity index ($c_{\mathrm{ext}}$), which controls the external infectious pressure. When $c_{\mathrm{ext}} < 0.5$, the simulator yields an effectively isolated population and hence a closed-population epidemic regime with no importation, whereas when $c_{\mathrm{ext}} \ge 0.5$, it yields an externally seeded epidemic regime in which higher values of $c_{\mathrm{ext}}$ correspond to greater exogenous infection pressure. Consequently, the simulator combines smooth variation in epidemic parameters with a regime change associated with the onset of external seeding, and outputs the cumulative attack proportion ($I_{\mathrm{pop}}$) at the time horizon of $100$, thereby providing a useful test problem for evaluating emulator performance under non-stationarity.

We use increasing numbers of Latin hypercube design points, $n\in\{100,200,300,400,500\}$, as the unique training input locations. For each given $n$, the SIR simulator is evaluated with a randomly selected number of replicates between $1$ and $100$ at each input location. The resulting emulators are then assessed on a test set of size $N=60{,}000$, consisting of $2{,}000$ unique space-filling input locations with $30$ replicates per location. Performance is measured using the average scoring rule (Score; \citealp{gneiting2007strictly}, Equation~(27)):
\begin{equation*}
\mathrm{Score} =
-\frac{1}{N}\sum_{i=1}^N
\left[
\frac{(y_i-\tilde{\mu}(\mathbf{x}_i))^2}{\tilde{\sigma}^2(\mathbf{x}_i)}
+\log \tilde{\sigma}^2(\mathbf{x}_i)
\right],
\end{equation*}
where $y_i$, for $i=1,\dots,N$, denote the test outputs, and $\tilde{\mu}(\mathbf{x}_i)$ and $\tilde{\sigma}^2(\mathbf{x}_i)$ are the predictive mean and variance produced by an emulator at the test input $\mathbf{x}_i$. For each $n$, we repeat the above procedure $20$ times.

Figure~\ref{fig:sir-comparison} compares the performance of the three candidate models across different numbers of unique training locations. As the training size increases, the performance of all models improves. Notably, GDGP substantially outperforms both hetGP and bhetGP: with only $200$ training locations, GDGP achieves scores comparable to those obtained by hetGP and bhetGP using $500$ locations. Moreover, GDGP exhibits smaller variability in scores across repeated trials.

\begin{figure}[ht!]
\centering 
\includegraphics[width=0.75\linewidth]{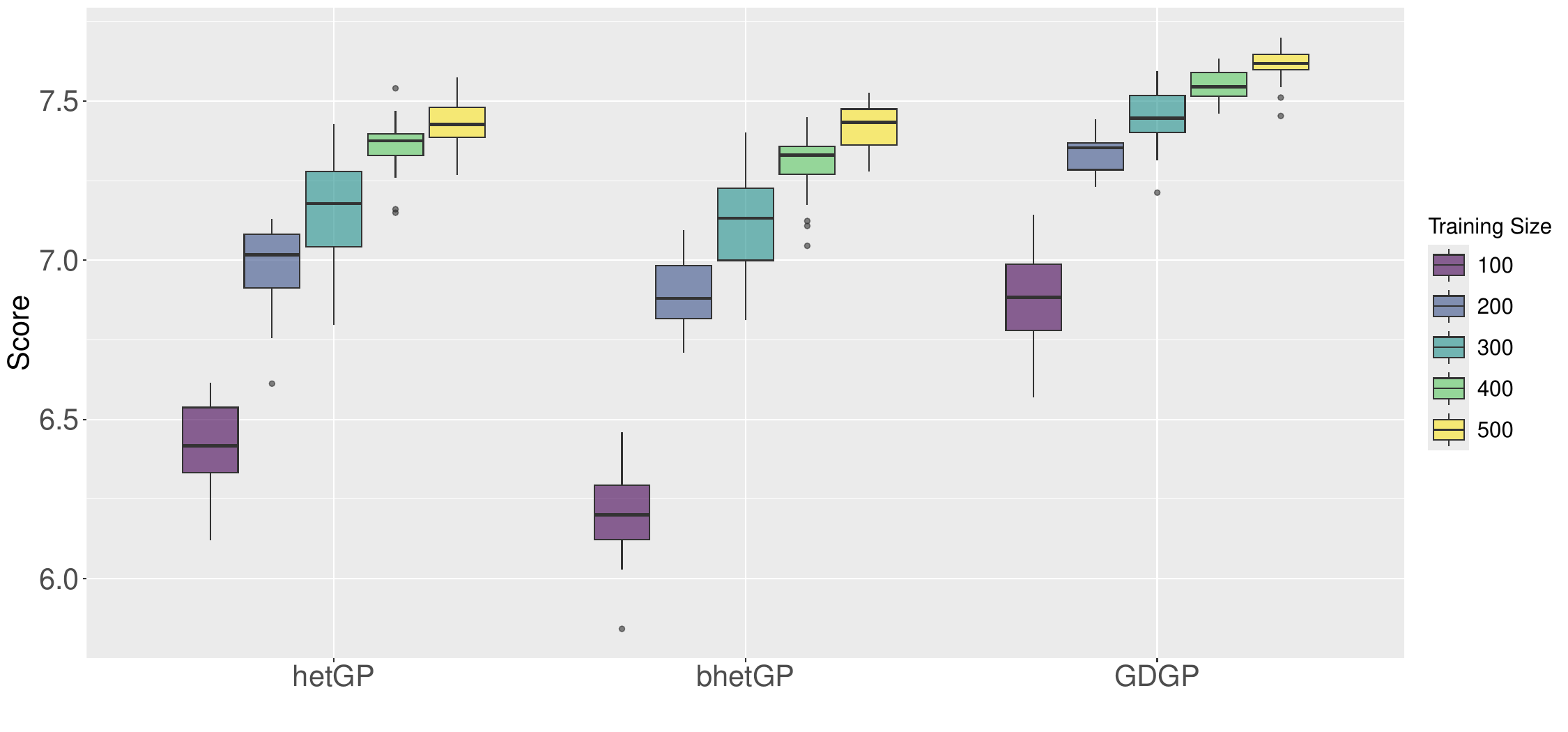}
\caption{Scores (higher is better) for hetGP, bhetGP, and GDGP, trained on $n\in\{100,200,300,400,500\}$ unique input locations with a random number of replicates between $1$ and $100$ at each location, for emulating the cumulative attack proportion of the SIR simulator in Section~\ref{sec:sir} at a time horizon of $100$. Scores are evaluated on a test set of size $N=60{,}000$, consisting of $2{,}000$ space-filling input locations with $30$ replicates at each location, and summarized over $20$ independent training trials for each $n$.}
\label{fig:sir-comparison}
\end{figure}

Figure~\ref{fig:sir-vecchia} illustrates the computational benefits and predictive performance of our Vecchia-based scalability development for GDGP, utilizing in particular Proposition~\ref{prop:hetero_case3} in Section~\ref{sec:hetero}. As shown in Figure~\subref*{fig:sir-vecchia-score}, increasing the conditioning set size for training (while keeping the conditioning set size for prediction sufficiently large at $200$) improves predictive performance, and the Vecchia-approximated GDGP approaches the full GDGP (i.e., the non-Vecchia version, in which all unique training inputs are included in the conditioning set and inference is implemented using Proposition~\ref{prop:hetero_case1} in Section~\ref{sec:hetero}). Notably, conditioning set sizes of $25$ and $50$ already yield satisfactory performance close to that of the full GDGP, respectively, across different numbers of training sizes. Together with Figure~\subref*{fig:sir-vecchia-computation}, we observe that the Vecchia-approximated GDGP with small conditioning set sizes can substantially reduce computation while achieving accuracy comparable to that of the full GDGP as the training size increases. 

\begin{figure}[ht!]
\centering 
\subfloat[]{\label{fig:sir-vecchia-score}\includegraphics[width=0.5\linewidth]{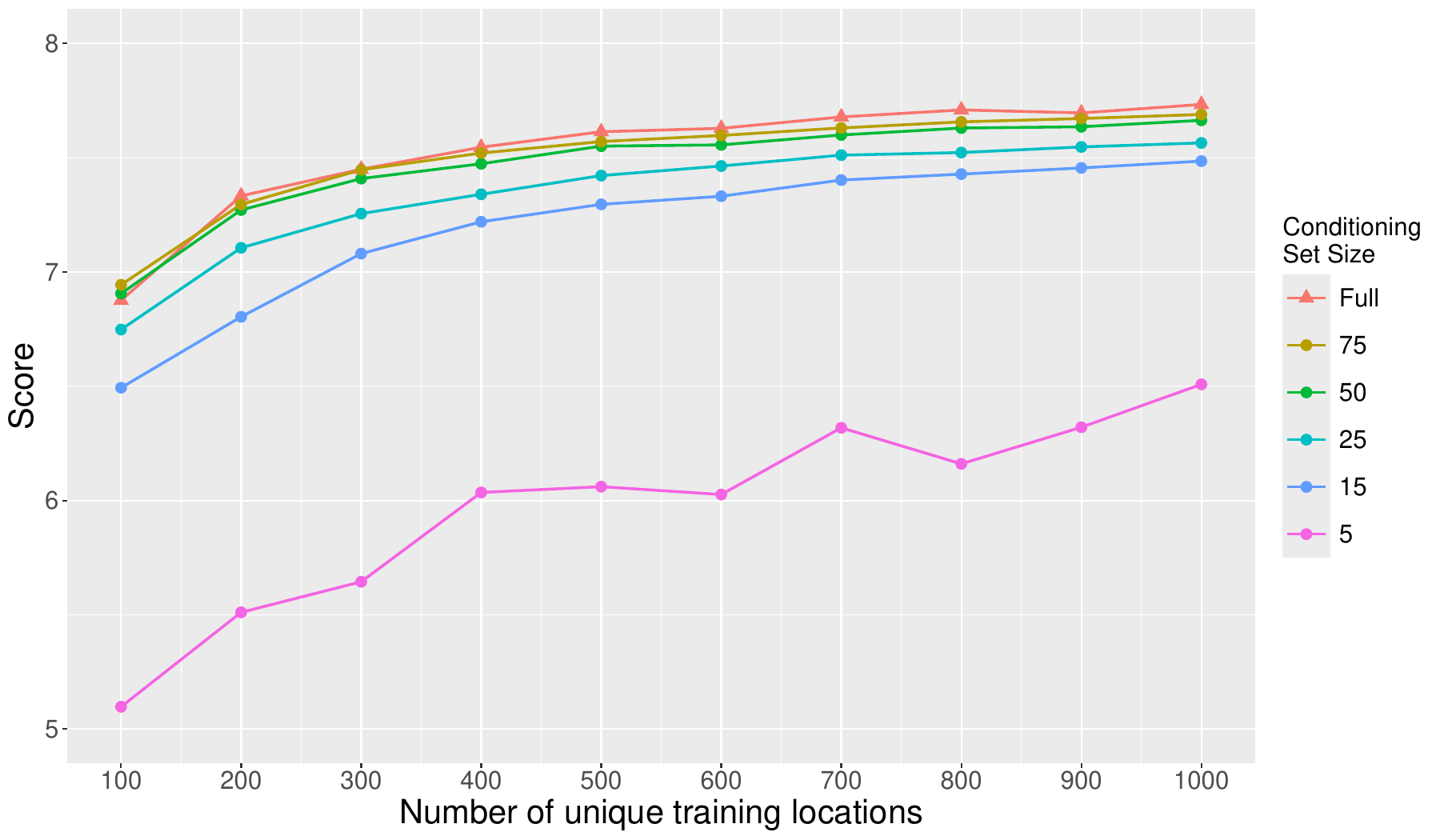}}
\subfloat[]{\label{fig:sir-vecchia-computation}\includegraphics[width=0.5\linewidth]{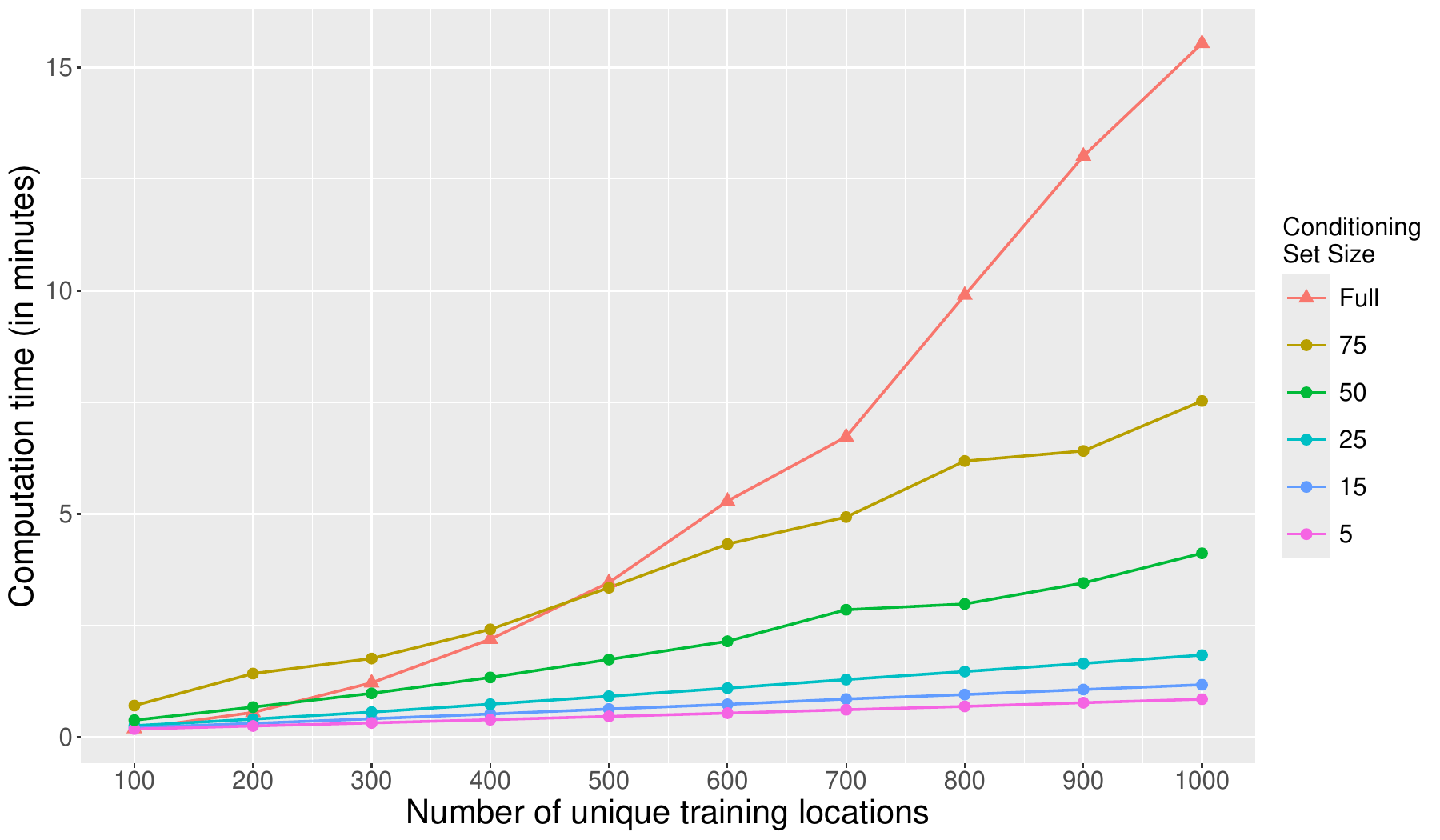}}
\caption{Scores \emph{(a)} and computation times \emph{(b)}, averaged over $20$ repeated training trials, for the full (i.e., non-Vecchia-approximated) GDGP and Vecchia-approximated GDGPs with different training conditioning set sizes in $\{5, 15, 25, 50, 75\}$ and prediction conditioning set size fixed at $200$. Models are trained on datasets with $n=100,200,\dots,1000$ unique input locations, where the outputs at each location are generated a random number of times (between $1$ and $100$) by the SIR simulator (in Section~\ref{sec:sir}). Scores are evaluated on a test set of size $N=60{,}000$, consisting of $2{,}000$ space-filling input locations with $30$ replicates at each location.}
\label{fig:sir-vecchia}
\end{figure}

\subsection{Categorical Likelihood}
\label{sec:classification}

In this section, we assess GDGP emulation in classification settings using a categorical likelihood and a softmax inverse link. We compare several approaches, including GDGP, the Vecchia-approximated GDGP (vGDGP) with conditioning set sizes of $50$ for training and $200$ for prediction, the Bayesian GP classifier (bGPC; \citealp{williams1998bayesian}) implemented in the \texttt{R} package \texttt{kernlab}, and the sparse variational GP (SVGP; \citealp{hensman2013gaussian}) implemented in the \texttt{Python} package \texttt{gpflow}. To provide a broader reference point beyond GP-based methods, we also report results for a neural network (NNet; implemented in the \texttt{R} package \texttt{nnet}) and a random forest (RF; implemented in the \texttt{R} package \texttt{randomForest}) classifier, which are widely used in practical classification tasks.

For comparison, we use a selection of widely adopted open benchmark datasets (summarized in Table~\ref{tab:dataset}), which we view as observations generated by black-box simulators. To evaluate the performance of different models, we remove duplicated data points from the raw data extracted from the corresponding \texttt{R} packages and create $20$ random $90\%/10\%$ training/testing partitions, with class distributions approximately balanced within the splits. For each partition, we train the models on the training set and evaluate performance on the test set using the accuracy and logloss metrics, defined as follows:
\begin{align}
\label{eq:accuracy}
\mathrm{Accuracy} &= \frac{1}{|\mathcal{C}|}
\sum_{c\in\mathcal{C}}
\frac{1}{n_c}
\sum_{i:\,y_i=c}
\mathbbm{1}_{\left\{\hat{y}_i\,=\,c\right\}} \\
\label{eq:logloss}
\mathrm{Logloss} &= -\frac{1}{|\mathcal{C}|}\sum_{c\in\mathcal{C}}
\frac{1}{n_c}
\sum_{i:\,y_i=c}
\log\tilde{\mu}^{p_{c}}(\mathbf{x}_i)\,,
\end{align}
where $\mathcal{C}$ denotes the set of all classes; $y_i$ is the true class of the $i$-th test observation; $n_c = \lvert\{i : y_i = c\}\rvert$ is the number of test observations whose true class is $c\,$; $\hat{y}_i = \argmax_{k\in\mathcal{C}} \tilde{\mu}^{p_{k}}(\mathbf{x}_i)$ is the predicted class for the $i$-th test observation; and $\tilde{\mu}^{p_{k}}(\mathbf{x}_i)$ is the predicted mean probability of class $k$ at input location $\mathbf{x}_i$.

\begin{table}[htbp] 
\caption{Characteristics of the datasets used in the experiments of Section~\ref{sec:classification}. The \emph{Source} column indicates the \texttt{R} packages from which the datasets are obtained.}
\label{tab:dataset}	
\centering
\begin{tabular}{lcccc}
\toprule
\multicolumn{1}{c}{\textbf{Dataset}} & \textbf{\# Instances} & \textbf{\# Dimensions} & \textbf{\# Classes} & \textbf{Source}\\
\midrule
iris     & 149     & 4  & 3 & \texttt{datasets}\\
pima     & 768     & 8  & 2 & \texttt{mlbench}\\
thyroid  & 215     & 5  & 3 & \texttt{mclust}\\
vehicle  & 846     & 18 & 4 & \texttt{mlbench}\\
\bottomrule
\end{tabular}
\end{table}

Table~\ref{tab:class-results} reports the means and standard deviations (across the $20$ random partitions) of accuracy and logloss for all models. GDGP achieves the best overall performance on \emph{iris}, \emph{pima}, \emph{thyroid}, and \emph{vehicle}, with consistently higher accuracy and lower logloss than the competing methods. As expected, vGDGP exhibits some degradation due to the Vecchia approximation, but it remains competitive across all datasets. We also note that, although bGPC and NNet attain comparable accuracy on some datasets, their logloss values are substantially larger, indicating poorer probabilistic calibration. 

\begin{table}[htbp]
\caption{Means and standard deviations (in parentheses), computed across the $20$ random partitions of the datasets listed in Table~\ref{tab:dataset}, for accuracy (higher is better; see Equation~\eqref{eq:accuracy}) and logloss (lower is better; see Equation~\eqref{eq:logloss}) for GDGP, Vecchia-approximated GDGP (vGDGP), Bayesian GP classifier (bGPC), sparse variational GP (SVGP), neural network (NNet), and random forest (RF). The highest accuracy and lowest logloss for each dataset are highlighted in bold.}
\label{tab:class-results}	
\centering
\resizebox{\ifdim\width>\linewidth\linewidth\else\width\fi}{!}{
\begin{tabular}{llcccc}
\toprule
\multicolumn{1}{c}{\textbf{\multirow{2}{*}{Model}}} & \multicolumn{1}{c}{\textbf{\multirow{2}{*}{Metric}}} & \multicolumn{4}{c}{\textbf{Dataset}} \\
\cmidrule(l{3pt}r{3pt}){3-6}
 &  & iris & pima & thyroid & vehicle \\
\midrule
\midrule
\multirow{2}{*}{GDGP} & Accuracy & \textbf{96.00 (4.63)} & \textbf{73.27 (4.55)} & \textbf{98.22 (3.72)} & \textbf{80.14 (3.19)} \\
& Logloss & \textbf{0.1153 (.0712)} & \textbf{0.5262 (.0560)} & \textbf{0.0717 (.0469)} & \textbf{0.4353 (.0554)} \\
\midrule
\multirow{2}{*}{vGDGP} & Accuracy & 95.00 (5.82) & 71.38 (4.53) & 95.00 (5.67) & 78.07 (3.65) \\
& Logloss & 0.1489 (.1028) & 0.5438 (.0489) & 0.1268 (.0638) & 0.4830 (.0596) \\
\midrule
\multirow{2}{*}{bGPC} & Accuracy & 94.50 (5.25) & 72.50 (4.30) & 94.89 (6.65) & 72.49 (4.39) \\
 & Logloss & 0.3844 (.0624) & 0.5299 (.0455) & 0.6176 (.0429) & 0.8389 (.0317) \\
\midrule
\multirow{2}{*}{SVGP} & Accuracy & 94.42 (5.28) & 72.39 (4.58) & 93.67 (7.64) & 75.51 (5.72) \\
 & Logloss & 0.1529 (.0539) & 0.5429 (.0381) & 0.2334 (.1239) & 0.6232 (.0839) \\
\midrule
\midrule
\multirow{2}{*}{NNet} & Accuracy & 95.42 (4.77) & 67.55 (6.17) & 95.89 (5.11) & 79.31 (3.32) \\
 & Logloss & 1.0027 (1.3361) & 2.0777 (1.2169) & 1.0545 (1.5007) & 5.2156 (1.1464) \\
\midrule
\multirow{2}{*}{RF} & Accuracy & 94.17 (5.09) & 73.15 (4.37) & 96.33 (5.06) & 74.74 (4.32) \\
 & Logloss & 0.1516 (.1059) & 0.5394 (.0601) & 0.1184 (.0658) & 0.4971 (.0404) \\
\midrule
\bottomrule
\end{tabular}}
\end{table}

\subsection{Count Likelihoods}
\label{sec:count}
In this section, we examine the performance of GDGP with count likelihoods for simulators whose outputs are non-negative integer-valued counts, and consider the Rosenzweig-MacArthur predator-prey model, which incorporates density-dependent prey growth and a nonlinear Type-II functional response for the predator~\citep{rosenzweig1963graphical,pineda2008gillespiessa}. In particular, the simulator 
is evaluated over four inputs: the predator death rate ($d_F\in[0.1,2.0]$), the prey death rate ($d_R\in[0.1,1.8]$), the predation efficiency ($\alpha\in[0.01,0.02]$), and the degree of predator saturation ($w\in[0,0.04]$), and returns the prey population count ($R$) at time $100$, given fixed initial populations of $R_0=50$ prey and $F_0=5$ predators. We generate training data by evaluating the simulator at $300$ unique input design points, with a random number of replicates (i.e., runs) between $1$ and
$30$ assigned to each design point. The resulting training set is used to fit GDGP, SVGP, and a generalized linear model (GLM) with a Poisson likelihood. We evaluate the three models using the negative log-likelihood (NLL) on a test set consisting of simulator evaluations at $1{,}000$ unique Latin hypercube test sites, with $50$ replicates per site. Figure~\subref*{fig:count-poisson} compares the performance of the three models across $50$ training trials and shows that GDGP substantially outperforms both the GLM and SVGP in emulating the simulator, achieving noticeably lower NLL overall and smaller variation across trials. 

\begin{figure}[htbp]
\centering 
\subfloat[]{\label{fig:count-poisson}\includegraphics[width=0.48\linewidth]{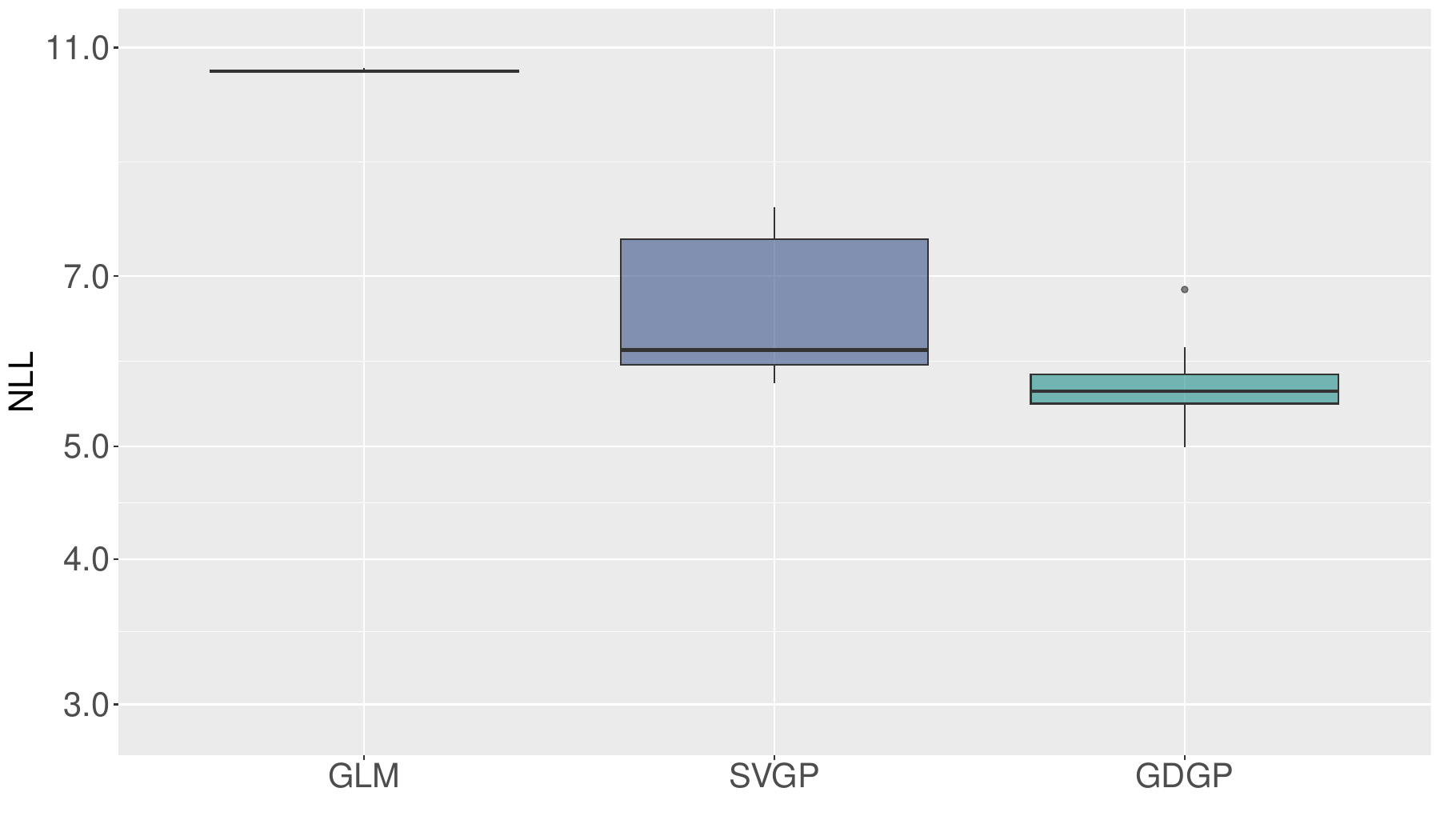}}
\hspace{0.5em}
\subfloat[]{\label{fig:count-gdgp}\includegraphics[width=0.48\linewidth]{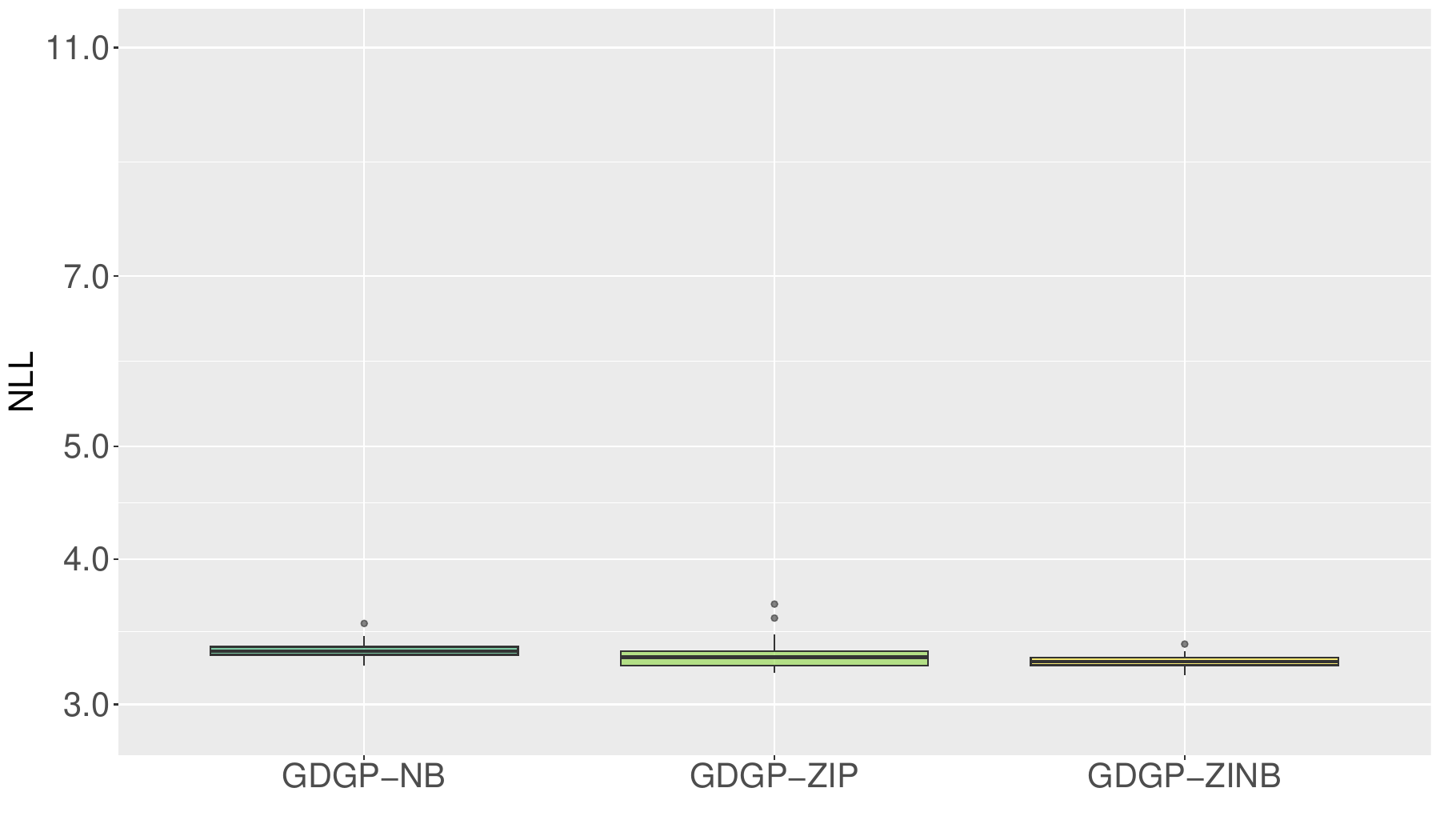}}
\caption{Negative log-likelihood (NLL in log-scale; lower is better) for \emph{(a)} the generalized linear model (GLM), SVGP, and GDGP with a Poisson likelihood, and \emph{(b)} GDGPs with negative binomial (NB), zero-inflated Poisson (ZIP), and zero-inflated negative binomial (ZINB) likelihoods, trained on $300$ unique input locations with up to $30$ replicates per location, for emulating the prey population count of the predator-prey simulator in Section~\ref{sec:count}. NLLs are evaluated on a test set of size $N=50{,}000$, consisting of $1{,}000$ space-filling input locations with $50$ replicates per location, and summarized over $50$ independent training trials.}
\label{fig:count}
\end{figure}

To demonstrate the flexibility of GDGP and assess sensitivity to the choice of count likelihood, we extend the experiment to include GDGP with negative binomial (NB), zero-inflated Poisson (ZIP) and zero-inflated negative binomial (ZINB) likelihoods. This is motivated by the fact that the discrete birth-death and predation reactions in the simulator can induce substantial replicate-to-replicate variability in the terminal counts (e.g., some trajectories may reach absorbing extinction states by the end of the simulation, yielding zero predator or prey counts), resulting in over-dispersion and occasional excess zeros in the simulated outputs. As shown in Figure~\subref*{fig:count-gdgp}, GDGPs with NB, ZIP, and ZINB likelihoods consistently achieve significantly lower NLL than GDGP with a Poisson likelihood. Among the three, GDGP with ZIP attains slightly lower NLL than that with NB, suggesting that departures from the Poisson assumption are driven primarily by an excess of zeros. GDGP with ZINB achieves the lowest NLL, implying that it provides the best probabilistic fit by jointly modeling zero inflation and residual over-dispersion beyond the extra zeros present in the simulated counts.

\section{Conclusion}
\label{sec:conclusion}
In this work, we proposed a Generalized Deep Gaussian Process (GDGP) framework for emulating nonstationary computer models with non-Gaussian outputs. By introducing a likelihood layer at the top of the DGP hierarchy, the proposed framework extends DGP emulation beyond deterministic and homogeneous Gaussian settings to accommodate a broad class of response distributions. This construction preserves the flexibility of DGPs for modeling non-stationary simulator behavior, while providing a unified approach to handling diverse simulator output types, including heteroskedastic, categorical, count, and zero-inflated outputs.

Inference for GDGP was developed within the Stochastic Imputation (SI) framework. To improve practicality in large-scale settings, we incorporated the Vecchia approximation to enable scalable inference for large numbers of input locations, and showed that SI can accommodate large numbers of replicates (within stochastic simulators) without increasing the computational complexity of inference. We also demonstrated that, in the heteroskedastic Gaussian case, additional analytical structure can be exploited to obtain closed-form conditional updates in the imputation step, yielding further computational gains. Through a series of synthetic and empirical examples covering heteroskedastic Gaussian, categorical, and count responses, we showed that these developments make GDGP a practically useful and flexible emulation framework.

Although GDGP was developed in the context of computer model emulation, it may also be applied more broadly to general regression problems as a flexible alternative to traditional GLMs, for which the linear predictor may be too restrictive to capture complex nonlinear and non-stationary relationships.

As part of this work, we also provide a careful implementation of the full GDGP framework in the open-source \texttt{R} package \texttt{dgpsi}.

Several directions for future work remain. On the methodological side, it would be of interest to extend the framework to more complex output structures, such as multivariate or mixed-type responses. From a computational perspective, further gains may be possible by identifying additional likelihoods that admit analytical conditional updates, in the same spirit as the heteroskedastic Gaussian case. Another promising direction is to develop hybrid inference schemes at the likelihood layer, in which only parameters exhibiting clear input dependence are modeled through DGP outputs, while others are treated as global parameters and estimated by maximum likelihood. Such a strategy could reduce the number of latent GPs and further improve computational efficiency for multi-parameter likelihoods. Finally, although our package \texttt{dgpsi} already implements sequential design and Bayesian optimization for (D)GP emulators, developing these methods tailored specifically to the GDGP setting would be a natural and interesting next step.

\section*{Acknowledgments}
The authors acknowledge support from the ADD-TREES project, funded by the Engineering and Physical Sciences Research Council (EPSRC) under grant EP/Y005597/1.

\bibliographystyle{agsm}  
\bibliography{references}  

\begin{appendices}
\setcounter{equation}{0}
\setcounter{section}{0}
\setcounter{figure}{0}
\setcounter{table}{0}
\makeatletter
\renewcommand{\theequation}{A\arabic{equation}} 
\renewcommand{\thefigure}{A.\arabic{figure}} 
\renewcommand{\thetable}{A.\arabic{table}} 

\section{Closed-form Predictive Means and Variances}
\label{app:mean_variance}

Let $\mu^{(q)}_{0,k}$ and ${{\sigma}^{2}}^{(q)}_{0,k}$ denote the mean and variance of the LGP approximation, $\widehat{p}(f^{(q)}_0|\mathbf{x}_0;\mathbf{f}_k,\{\mathbf{w}^{(p)}_{l}\}_k,\mathbf{x})$, at input location $\mathbf{x}_0$, for $q=1,\dots,Q$. Then, under a range of parametric choices for $p(y|\boldsymbol{\phi})$, the predictive mean $\tilde{\mu}^{Y}_{0,k}$ and variance $(\tilde{\sigma}^{Y}_{0,k})^2$ admit closed-form expressions. Table~\ref{tab:dist} lists several representative distributions for which such expressions are available. The corresponding closed-form results are given below:

\subsection{Poisson}
\begin{align*}
    \tilde{\mu}^{Y}_{0,k} =& \exp\left\{\mu^{(1)}_{0,k}+\frac{{{\sigma}^{2}}^{(1)}_{0,k}}{2}\right\}\,,\\
    (\tilde{\sigma}^{Y}_{0,k})^2 =& \exp\left\{\mu^{(1)}_{0,k}+\frac{{{\sigma}^{2}}^{(1)}_{0,k}}{2}\right\}+\left(e^{{{\sigma}^{2}}^{(1)}_{0,k}}-1\right)\exp\left\{2\mu^{(1)}_{0,k}+{{\sigma}^{2}}^{(1)}_{0,k}\right\}\,.
\end{align*}

\subsection{Exponential}
\begin{align*}
    \tilde{\mu}^{Y}_{0,k} =& \exp\left\{\mu^{(1)}_{0,k}+\frac{{{\sigma}^{2}}^{(1)}_{0,k}}{2}\right\}\,,\\
    (\tilde{\sigma}^{Y}_{0,k})^2 =& \left(2e^{{{\sigma}^{2}}^{(1)}_{0,k}}-1\right)\exp\left\{2\mu^{(1)}_{0,k}+{{\sigma}^{2}}^{(1)}_{0,k}\right\}\,.
\end{align*}

\subsection{Gamma}
\begin{align*}
    \tilde{\mu}^{Y}_{0,k} =& \exp\left\{\mu^{(1)}_{0,k}+\frac{{{\sigma}^{2}}^{(1)}_{0,k}}{2}\right\}\,,\\
    (\tilde{\sigma}^{Y}_{0,k})^2 =& \left(e^{{{\sigma}^{2}}^{(1)}_{0,k}}+e^{{{\sigma}^{2}}^{(1)}_{0,k}+2\mu^{(2)}_{0,k}+2{{\sigma}^{2}}^{(2)}_{0,k}}-1\right)\exp\left\{2\mu^{(1)}_{0,k}+{{\sigma}^{2}}^{(1)}_{0,k}\right\}\,.
\end{align*}

\subsection{Heteroskedastic Gaussian}
\begin{align*}
    \tilde{\mu}^{Y}_{0,k} =& \mu^{(1)}_{0,k}\,,\\
    (\tilde{\sigma}^{Y}_{0,k})^2 =& \exp\left\{\mu^{(2)}_{0,k}+\frac{{{\sigma}^{2}}^{(2)}_{0,k}}{2}\right\}+{{\sigma}^{2}}^{(1)}_{0,k}\,.
\end{align*}

\subsection{Negative Binomial}
\begin{align*}
    \tilde{\mu}^{Y}_{0,k} =& \exp\left\{\mu^{(1)}_{0,k}+\frac{{{\sigma}^{2}}^{(1)}_{0,k}}{2}\right\}\,,\\
    (\tilde{\sigma}^{Y}_{0,k})^2 =& \left(e^{{{\sigma}^{2}}^{(1)}_{0,k}}-1\right)\exp\left\{2\mu^{(1)}_{0,k}+{{\sigma}^{2}}^{(1)}_{0,k}\right\} + \exp\left\{\mu^{(1)}_{0,k}+\frac{{{\sigma}^{2}}^{(1)}_{0,k}}{2}\right\} + \exp\left\{\mu^{(2)}_{0,k} +\frac{{{\sigma}^{2}}^{(2)}_{0,k}}{2} +2\mu^{(1)}_{0,k} +2{{\sigma}^{2}}^{(1)}_{0,k}\right\}\,.
\end{align*}

\subsection{Zero-Inflated Poisson}
\begin{align*}
    \tilde{\mu}^{Y}_{0,k} =& \left(1-\bar{\pi}_{0,k}\right)
    \exp\left\{\mu^{(1)}_{0,k}+\frac{{{\sigma}^{2}}^{(1)}_{0,k}}{2}\right\}\,,\\
    (\tilde{\sigma}^{Y}_{0,k})^2 =& \left(1-\bar{\pi}_{0,k}\right)
    \left[
    \exp\left\{\mu^{(1)}_{0,k}+\frac{{{\sigma}^{2}}^{(1)}_{0,k}}{2}\right\}
    +
    \left(e^{{{\sigma}^{2}}^{(1)}_{0,k}}-1\right)
    \exp\left\{2\mu^{(1)}_{0,k}+{{\sigma}^{2}}^{(1)}_{0,k}\right\}
    \right]\\
    &\,
    +\bar{\pi}_{0,k}
    \left(1-\bar{\pi}_{0,k}\right)
    \exp\left\{2\mu^{(1)}_{0,k}+{{\sigma}^{2}}^{(1)}_{0,k}\right\}\,,
\end{align*}
where $\bar{\pi}_{0,k} =
\operatorname{logit}^{-1}\left(
\frac{\mu^{(2)}_{0,k}}
{\sqrt{1+\frac{\pi}{8}{{\sigma}^{2}}^{(2)}_{0,k}}}
\right)$, obtained using the analytical approximation, suggested by \citet{mackay1992evidence}, to the expectation of the logistic function of a Gaussian random variable.

\subsection{Zero-Inflated Negative Binomial}
\begin{align*}
    \tilde{\mu}^{Y}_{0,k} =& \left(1-\bar{\pi}_{0,k}\right)
    \exp\left\{\mu^{(1)}_{0,k}+\frac{{{\sigma}^{2}}^{(1)}_{0,k}}{2}\right\}\,,\\
    (\tilde{\sigma}^{Y}_{0,k})^2 =& \left(1-\bar{\pi}_{0,k}\right)
    \Bigg[
    \exp\left\{\mu^{(1)}_{0,k}+\frac{{{\sigma}^{2}}^{(1)}_{0,k}}{2}\right\}
    +
    \exp\left\{
    2\mu^{(1)}_{0,k}
    +2{{\sigma}^{2}}^{(1)}_{0,k}
    +\mu^{(2)}_{0,k}
    +\frac{{{\sigma}^{2}}^{(2)}_{0,k}}{2}
    \right\}\\
    & + \left(e^{{{\sigma}^{2}}^{(1)}_{0,k}}-1\right)
    \exp\left\{2\mu^{(1)}_{0,k}+{{\sigma}^{2}}^{(1)}_{0,k}\right\}
    \Bigg]+\bar{\pi}_{0,k}
    \left(1-\bar{\pi}_{0,k}\right)
    \exp\left\{2\mu^{(1)}_{0,k}+{{\sigma}^{2}}^{(1)}_{0,k}\right\}\,,
\end{align*}
where $\bar{\pi}_{0,k} =
\operatorname{logit}^{-1}\left(
\frac{\mu^{(3)}_{0,k}}
{\sqrt{1+\frac{\pi}{8}{{\sigma}^{2}}^{(3)}_{0,k}}}
\right)$.

\begin{table}[htbp] 
\footnotesize{
\caption{Different choices of $p(y|\boldsymbol{\phi})$ for which predictive mean $\tilde{\mu}^{Y}_{0,k}$ and variance $(\tilde{\sigma}^{Y}_{0,k})^2$ admit closed-form expressions.}
\label{tab:dist}	
\begin{center}
\begin{tabular}{lccc}
\toprule
& Parameter ($\boldsymbol{\phi}$) & Link Function ($\mathbf{g}$) & Probability Function ($p(y|\boldsymbol{\phi})$)  \\
\midrule
\textbf{\makecell[l]{Poisson}} & \makecell[l]{$\phi_1=\lambda\in(0,\infty)$} & log & $\frac{\lambda^y}{y!}e^{-\lambda}$\\
\addlinespace[0.3cm]
\textbf{\makecell[l]{Exponential}} & \makecell[l]{$\phi_1=\mu\in(0,\infty)$} & log & $\frac{1}{\mu}e^{-\frac{y}{\mu}}$\\
\addlinespace[0.3cm]
\textbf{\makecell[l]{Gamma}} & \makecell[l]{$\phi_1=\mu\in(0,\infty)$\\$\phi_2=\sigma\in(0,\infty)$} & \makecell[c]{log\\log} & $\frac{y^{1/\sigma^2-1}e^{-y/(\mu\sigma^2)}}{(\mu\sigma^2)^{1/\sigma^2}\Gamma(1/\sigma^2)}$\\
\addlinespace[0.3cm]
\textbf{\makecell[l]{Heteroskedastic\\Gaussian}} & \makecell[l]{$\phi_1=\mu\in\mathbb{R}$\\$\phi_2=\sigma^2\in(0,\infty)$} & \makecell[c]{identity\\log} &  $\frac{1}{\sqrt{2\pi\sigma^2}}e^{-\frac{(y-\mu)^2}{2\sigma^2}}$\\
\addlinespace[0.3cm]
\textbf{\makecell[l]{Negative\\Binomial}} 
& \makecell[l]{$\phi_1=\mu\in(0,\infty)$\\$\phi_2=\sigma\in(0,\infty)$}
& \makecell[c]{log\\log} 
& \makecell[c]{
$\frac{\Gamma(y+\frac{1}{\sigma})}{\Gamma(1/{\sigma})\Gamma(y+1)}\left(\frac{\sigma\mu}{1+\sigma\mu}\right)^y\left(\frac{1}{1+\sigma\mu}\right)^{1/{\sigma}}$}\\
\addlinespace[0.3cm]
\textbf{\makecell[l]{Zero-Inflated\\Poisson}} 
& \makecell[l]{$\phi_1=\lambda\in(0,\infty)$\\$\phi_2=\pi\in(0,1)$} 
& \makecell[c]{log\\logit} 
& \makecell[c]{
$\begin{cases}
\pi+(1-\pi)e^{-\lambda}, & y=0,\\
(1-\pi)\frac{\lambda^y}{y!}e^{-\lambda}, & y>0
\end{cases}$}\\
\addlinespace[0.3cm]
\textbf{\makecell[l]{Zero-Inflated\\Negative\\Binomial}} 
& \makecell[l]{$\phi_1=\mu\in(0,\infty)$\\$\phi_2=\sigma\in(0,\infty)$\\$\phi_3=\pi\in(0,1)$} 
& \makecell[c]{log\\log\\logit} 
& \makecell[c]{
$\begin{cases}
\pi+(1-\pi)\left(\frac{1}{1+\sigma\mu}\right)^{1/\sigma}, & y=0,\\
(1-\pi)\frac{\Gamma(y+\frac{1}{\sigma})}{\Gamma(1/\sigma)\Gamma(y+1)}
\left(\frac{\sigma\mu}{1+\sigma\mu}\right)^y
\left(\frac{1}{1+\sigma\mu}\right)^{1/\sigma}, & y>0
\end{cases}$}\\
\bottomrule
\end{tabular}
\end{center}
}
\end{table}
\end{appendices}

\vfill

\pagebreak
\begin{center}
\label{supp}
\textbf{\Large Supplementary Materials}
\end{center}
\setcounter{equation}{0}
\setcounter{section}{0}
\setcounter{figure}{0}
\makeatletter
\renewcommand{\thesection}{S.\arabic{section}}
\renewcommand{\theequation}{S\arabic{equation}} 
\renewcommand{\thefigure}{S.\arabic{figure}} 

\section{Proof of Proposition~\ref{prop:vecchia_lgp}}
\label{sec:vecchia_lgp}
Since, when $l=1$, the LGP reduces to a GP and, under the Vecchia approximation, the predictive mean and variance of a GP at a single input location $\mathbf{x}_0$ coincide with those of the nearest-neighbor GP~\citep{katzfuss2020vecchia}, it follows that, under the Vecchia approximation, the LGP mean and variance for a GP node in the first layer ($l=1$) are given by
\begin{align*}
{\mu}^{(q)}_{1\rightarrow 1,k}(\mathbf{x}_0)&=\mathbf{r}^{(q)}_{\mathcal{C}}(\mathbf{x}_0)^\top\left(\mathbf{R}^{(q)}_{1,\mathcal{C}}\right)^{-1}\mathbf{w}^{(q)}_{1,k,\mathcal{C}}\\
{{\sigma}^{2}}^{(q)}_{1\rightarrow 1,k}(\mathbf{x}_0)&=\left(\sigma^{(q)}_1\right)^2\left(1+\eta^{(q)}_1-\mathbf{r}^{(q)}_{\mathcal{C}}(\mathbf{x}_0)^\top\left(\mathbf{R}^{(q)}_{1,\mathcal{C}}\right)^{-1}\mathbf{r}^{(q)}_{\mathcal{C}}(\mathbf{x}_0)\right)\,,
\end{align*}
where $\mathcal{C}\subseteq\{1,2,\dots,N\}$ is a conditioning index set of size $|\mathcal{C}|$. This establishes equations~\eqref{eq:vecchia_gp_mean} and~\eqref{eq:vecchia_gp_var}.

Analogously, for a particular GP node $\mathcal{GP}^{(q)}_l$ with $l\geq 2$, its predictive mean $\mu^{(q)}_{l,k}$ and variance ${\sigma^2}^{(q)}_{l,k}$ under the Vecchia approximation for imputation $k$ are given by
\begin{align*}
\mu^{(q)}_{l,k}=&\mathbf{r}^{(q)}_{l,k,\mathcal{C}}(\mathbf{W}_{0,l-1,k})^\top\left(\mathbf{R}^{(q)}_{l,k,\mathcal{C}}\right)^{-1}\mathbf{w}^{(q)}_{l,k,\mathcal{C}} \\
{\sigma^2}^{(q)}_{l,k}=&\left(\sigma^{(q)}_l\right)^2\left(1+\eta^{(q)}_l-\mathbf{r}^{(q)}_{l,k,\mathcal{C}}(\mathbf{W}_{0,l-1,k})^\top\left(\mathbf{R}^{(q)}_{l,k,\mathcal{C}}\right)^{-1}\mathbf{r}^{(q)}_{\mathcal{C}}(\mathbf{W}_{0,l-1,k})\right)\,, 
\end{align*}
where $\mathbf{W}_{0,l-1,k}=(W^{(1)}_{0,l-1,k},\dots,W^{(P_{l-1})}_{0,l-1,k})$ with $W^{(q)}_{0,l-1,k}\sim\mathcal{N}\left(\mu^{(q)}_{1\rightarrow (l-1),k}(\mathbf{x}_0),{\sigma^2}^{(q)}_{1\rightarrow (l-1),k}(\mathbf{x}_0)\right)$.

Applying the laws of total expectation and variance to $W^{(q)}_{0,l,k}$ then gives
\begin{align*}
\mu^{(q)}_{1\rightarrow l,k}(\mathbf{x}_0)=&\mathbb{E}\left(\mathbf{r}^{(q)}_{l,k,\mathcal{C}}(\mathbf{W}_{0,l-1,k})\right)^\top\left(\mathbf{R}^{(q)}_{l,k,\mathcal{C}}\right)^{-1}\mathbf{w}^{(q)}_{l,k,\mathcal{C}}, \\
{\sigma^2}^{(q)}_{1\rightarrow l,k}(\mathbf{x}_0)=&\left(\mathbf{w}^{(q)}_{l,k,\mathcal{C}}\right)^\top\left(\mathbf{R}^{(q)}_{l,k,\mathcal{C}}\right)^{-1}\mathbb{E}\left(\mathbf{r}^{(q)}_{l,k,\mathcal{C}}(\mathbf{W}_{0,l-1,k})\mathbf{r}^{(q)}_{l,k,\mathcal{C}}(\mathbf{W}_{0,l-1,k})^\top\right)\left(\mathbf{R}^{(q)}_{l,k,\mathcal{C}}\right)^{-1}\mathbf{w}^{(q)}_{l,k,\mathcal{C}}\\
&\quad-\left(\mathbb{E}\left(\mathbf{r}^{(q)}_{l,k,\mathcal{C}}(\mathbf{W}_{0,l-1,k})\right)^\top\left(\mathbf{R}^{(q)}_{l,k,\mathcal{C}}\right)^{-1}\mathbf{w}^{(q)}_{l,k,\mathcal{C}}\right)^2\nonumber\\
&\quad+\left(\sigma^{(q)}_{l}\right)^2\left(1+\eta^{(q)}_{l}-\mathrm{tr}\left\{\left(\mathbf{R}^{(q)}_{l,k,\mathcal{C}}\right)^{-1}\mathbb{E}\left(\mathbf{r}^{(q)}_{l,k,\mathcal{C}}(\mathbf{W}_{0,l-1,k})\mathbf{r}^{(q)}_{l,k,\mathcal{C}}(\mathbf{W}_{0,l-1,k})^\top\right)\right\}\right)\,,
\end{align*}
where the expectations are taken with respect to $\mathbf{W}_{0,l-1,k}$. Defining $\mathbf{I}_{l,k,\mathcal{C}}^{(q)}(\mathbf{x}_0)=\mathbb{E}\left(\mathbf{r}^{(q)}_{l,k,\mathcal{C}}(\mathbf{W}_{0,l-1,k})\right)$ and $\mathbf{J}^{(q)}_{l,k,\mathcal{C}}(\mathbf{x}_0)=\mathbb{E}\left(\mathbf{r}^{(q)}_{l,k,\mathcal{C}}(\mathbf{W}_{0,l-1,k})\mathbf{r}^{(q)}_{l,k,\mathcal{C}}(\mathbf{W}_{0,l-1,k})^\top\right)$, we then establish equations~\eqref{eq:vecchia_linkgp_mean} and~\eqref{eq:vecchia_linkgp_var}, with the $i$-th element of $\mathbf{I}_{l,k,\mathcal{C}}^{(q)}(\mathbf{x}_0)$ given by
$$
\prod_{d=1}^{P_{l-1}}\mathbb{E}\left(k^{(q)}_{l,d}\left(W^{(d)}_{0,l-1,k},\,(\mathbf{w}^{(d)}_{l-1,k})_{\mathcal{C}_i}\right)\right)
$$
and the $ij$-th element of $\mathbf{J}^{(q)}_{l,k,\mathcal{C}}(\mathbf{x}_0)$ given by
$$
\prod_{d=1}^{P_{l-1}}\mathbb{E}\left(k^{(q)}_{l,d}\left(W^{(d)}_{0,l-1,k},\,(\mathbf{w}^{(d)}_{l-1,k})_{\mathcal{C}_i}\right)k^{(q)}_{l,d}\left(W^{(d)}_{0,l-1,k},\,(\mathbf{w}^{(d)}_{l-1,k})_{\mathcal{C}_j}\right)\right)\,,
$$
which can be expressed as
$$
\prod_{d=1}^{P_{l-1}}\xi^{(q)}_{l}\left({\mu}^{(d)}_{1\rightarrow(l-1),k}(\mathbf{x}_0),\,{{\sigma}^{2}}^{(d)}_{1\rightarrow(l-1),k}(\mathbf{x}_0),\,(\mathbf{w}^{(d)}_{l-1,k})_{\mathcal{C}_i}\right)
$$
and
$$
\prod_{d=1}^{P_{l-1}}\zeta^{(q)}_{l}\left({\mu}^{(d)}_{1\rightarrow(l-1),k}(\mathbf{x}_0),\,{{\sigma}^{2}}^{(d)}_{1\rightarrow(l-1),k}(\mathbf{x}_0),\,(\mathbf{w}^{(d)}_{l-1,k})_{\mathcal{C}_i},\,(\mathbf{w}^{(d)}_{l-1,k})_{\mathcal{C}_j}\right)
$$
respectively. These quantities can consequently be computed analytically, since $\xi^{(q)}_{l}(\cdot,\cdot,\cdot)$ and $\zeta^{(q)}_{l}(\cdot,\cdot,\cdot,\cdot)$ admit closed-form expressions~\citep[Appendix A]{ming2021linked} when the kernel functions $k^{(q)}_{l,d}(\cdot,\cdot)$ are squared exponential or Matérn.

\section{Proof of Proposition~\ref{prop:hetero_case1}}
\label{sec:hetero_case1}
Given $\boldsymbol{\mu}$, we have that
\begin{equation*}
\mathbf{Y}|\boldsymbol{\mu},\log\boldsymbol{\sigma}^2\sim\mathcal{N}(\mathbf{M}\boldsymbol{\mu}, \boldsymbol{\Lambda})\,,
\end{equation*}
where $\boldsymbol{\Lambda}=\mathbf{M}\boldsymbol{\Gamma}\mathbf{M}^\top$ with $\boldsymbol{\Gamma}=\text{diag}(\sigma^2_1,\dots,\sigma^2_N)$. Since $\boldsymbol{\mu}=\mathbf{F}^{(1)}$ and 
\begin{equation*}
\mathbf{F}^{(1)}|\{\mathbf{W}^{(p)}_l\},\mathbf{x} \overset{d}{=} \mathbf{F}^{(1)}|\mathbf{W}_{L-1}\sim\mathcal{N}\left(\mathbf{0},\boldsymbol{\Sigma}^{(1)}_L(\mathbf{w}_{L-1})\right)\,,
\end{equation*}
we have 
\begin{align*}
p\left(\boldsymbol{\mu}|\log\boldsymbol{\sigma}^2, \{\mathbf{w}^{(p)}_l\},\mathbf{y},\mathbf{x}\right) &\propto  p(\mathbf{y}|\boldsymbol{\mu}, \boldsymbol{\sigma}^2)\,p\left(\boldsymbol{\mu}|\{\mathbf{w}^{(p)}_l\},\mathbf{x}\right)\\
&\propto \exp\left\{-\frac{1}{2}(\mathbf{y}-\mathbf{M}\boldsymbol{\mu})^\top\boldsymbol{\Lambda}^{-1}(\mathbf{y}-\mathbf{M}\boldsymbol{\mu})\right\} \exp\left\{-\frac{1}{2}\boldsymbol{\mu}^\top{\boldsymbol{\Sigma}^{(1)}_L(\mathbf{w}_{L-1})}^{-1}\boldsymbol{\mu}\right\}\\
&\propto \exp\left\{ -\frac{1}{2}\boldsymbol{\mu}^\top\left(\mathbf{M}^\top\boldsymbol{\Lambda}^{-1}\mathbf{M} + {\boldsymbol{\Sigma}^{(1)}_L(\mathbf{w}_{L-1})}^{-1}\right)\boldsymbol{\mu} + \boldsymbol{\mu}^\top\mathbf{M}^\top\boldsymbol{\Lambda}^{-1}\mathbf{y} \right\}\,.
\end{align*}

By completing the square, we can express $p\left(\boldsymbol{\mu}|\log\boldsymbol{\sigma}^2, \{\mathbf{w}^{(p)}_l\},\mathbf{y},\mathbf{x}\right)$ in the standard form for a multivariate normal, giving
\begin{equation*}
    \boldsymbol{\mu}|\log\boldsymbol{\sigma}^2, \{\mathbf{W}^{(p)}_l\},\mathbf{Y},\mathbf{x} \sim \mathcal{N}\left(\left(\mathbf{M}^\top\boldsymbol{\Lambda}^{-1}\mathbf{M} + {\boldsymbol{\Sigma}^{(1)}_L(\mathbf{w}_{L-1})}^{-1}\right)^{-1}\mathbf{M}^\top\boldsymbol{\Lambda}^{-1}\mathbf{y},\,\left(\mathbf{M}^\top\boldsymbol{\Lambda}^{-1}\mathbf{M} + {\boldsymbol{\Sigma}^{(1)}_L(\mathbf{w}_{L-1})}^{-1}\right)^{-1}\right)\,,
\end{equation*}
which is equivalent to 
\begin{equation*}
    \mathcal{N}\left(\boldsymbol{\Sigma}_L^{(1)}(\mathbf{w}_{L-1})\left(\mathbf{I}+\mathbf{M}^\top\boldsymbol{\Lambda}^{-1}\mathbf{M}\boldsymbol{\Sigma}_L^{(1)}(\mathbf{w}_{L-1})\right)^{-1}\mathbf{M}^\top\boldsymbol{\Lambda}^{-1}\mathbf{y},\, \boldsymbol{\Sigma}_L^{(1)}(\mathbf{w}_{L-1})\left(\mathbf{I}+\mathbf{M}^\top\boldsymbol{\Lambda}^{-1}\mathbf{M}\boldsymbol{\Sigma}_L^{(1)}(\mathbf{w}_{L-1})\right)^{-1}\right)
\end{equation*}
by factoring out ${\boldsymbol{\Sigma}^{(1)}_L(\mathbf{w}_{L-1})}^{-1}$ from the expression within the brackets.

\section{Proof of Proposition~\ref{prop:hetero_case2}}
\label{sec:hetero_case2}
Let 
\begin{equation*}
    \mathbf{D}=(\mathbf{Y}|\log\boldsymbol{\sigma}^2, \{\mathbf{W}^{(p)}_l\},\mathbf{x})\;-\;(\boldsymbol{\mu}|\{\mathbf{W}^{(p)}_l\},\mathbf{x})\,.
\end{equation*}
Since 
\begin{equation*}
    \mathbf{Y}|\log\boldsymbol{\sigma}^2,\{\mathbf{W}^{(p)}_l\},\mathbf{x},\boldsymbol{\mu} \overset{d}{=} \mathbf{Y}|\log\boldsymbol{\sigma}^2,\boldsymbol{\mu} \sim\mathcal{N}(\boldsymbol{\mu},\boldsymbol{\Gamma})\,,
\end{equation*}
where $\boldsymbol{\Gamma}=\text{diag}(\sigma^2_1,\dots,\sigma^2_N)$, and \begin{equation*}
    \boldsymbol{\mu}|\{\mathbf{W}^{(p)}_l\},\mathbf{x} \overset{d}{=} \mathbf{F}^{(1)}|\mathbf{W}_{L-1}\sim\mathcal{N}\left(\mathbf{0},\boldsymbol{\Sigma}^{(1)}_L(\mathbf{w}_{L-1})\right)\,,
\end{equation*}
we have that
\begin{equation*}
    \mathbf{D}|\boldsymbol{\mu} \sim \mathcal{N}(\mathbf{0},\boldsymbol{\Gamma})\,. 
\end{equation*}
Since $\mathbf{D}|\boldsymbol{\mu}$ is fully $\mathcal{N}(\mathbf{0},\boldsymbol{\Gamma})$ for all $\boldsymbol{\mu}$, we have $\mathbf{D}$ and $\boldsymbol{\mu}$ are independent and 
\begin{equation*}
    \mathbf{D} \sim \mathcal{N}(\mathbf{0},\boldsymbol{\Gamma})\,. 
\end{equation*}
Therefore, $\begin{pmatrix}\boldsymbol{\mu}|\{\mathbf{W}^{(p)}_l\},\mathbf{x}\\ \mathbf{D}\end{pmatrix} = \begin{pmatrix}\mathbf{F}^{(1)}|\mathbf{W}_{L-1}\\ \mathbf{D}\end{pmatrix}$ follows a jointly multivariate normal distribution:
\begin{equation*}
\mathcal{N}\left(\mathbf{0},
\begin{pmatrix}
\boldsymbol{\Sigma}^{(1)}_L(\mathbf{w}_{L-1}) & \mathbf{0}\\
\mathbf{0} & \boldsymbol{\Gamma}
\end{pmatrix}\right)\,.
\end{equation*}
Define $\mathbf{Z}=\begin{pmatrix}\mathbf{Y}|\mathbf{W}_{L-1},\log\boldsymbol{\sigma}^2\\ \mathbf{F}^{(1)}|\mathbf{W}_{L-1}\end{pmatrix}$. Since
\begin{equation*}
    \mathbf{Y}|\mathbf{W}_{L-1} = \mathbf{F}^{(1)}|\mathbf{W}_{L-1}\;+\;\mathbf{D}\,,
\end{equation*}
$\mathbf{Z}$ is an affine transformation of $\begin{pmatrix}\mathbf{F}^{(1)}|\mathbf{w}_{L-1}\\ \mathbf{D}\end{pmatrix}$:
\begin{equation*}
    \mathbf{Z} = \begin{pmatrix}
    \mathbf{I},\mathbf{I}\\
    \mathbf{I}, \mathbf{0}
    \end{pmatrix}
    \begin{pmatrix}\mathbf{F}^{(1)}|\mathbf{W}_{L-1}\\ \mathbf{D}\end{pmatrix}\,
\end{equation*}
where $\mathbf{I}$ is the identity matrix. Thus, $\mathbf{Z}$ is multivariate normal:
\begin{equation*}
\mathbf{Z}\sim\mathcal{N}\left(\mathbf{0},
\boldsymbol{\Sigma}_{\mathbf{Z}}\right)\,,
\end{equation*}
where
\begin{equation*}
    \boldsymbol{\Sigma}_{\mathbf{Z}} = \begin{pmatrix}
\boldsymbol{\Sigma}^{(1)}_L(\mathbf{w}_{L-1}) + \boldsymbol{\Gamma} & \boldsymbol{\Sigma}^{(1)}_L(\mathbf{w}_{L-1})\\
\boldsymbol{\Sigma}^{(1)}_L(\mathbf{w}_{L-1}) & \boldsymbol{\Sigma}^{(1)}_L(\mathbf{w}_{L-1})
\end{pmatrix}\in\mathbb{R}^{2N\times2N}\,.
\end{equation*}
The covariance matrix $\boldsymbol{\Sigma}_{\mathbf{Z}}$ of $\mathbf{Z}$ can then be constructed by $(\sigma^{(1)}_L)^2\mathbf{R}_{\mathbf{Z}}(\mathbf{w}^{\text{stack}}_{L-1})+\boldsymbol{\Gamma}_0$, where $\mathbf{w}^{\text{stack}}_{L-1}=\begin{pmatrix}\mathbf{w}_{L-1}\\\mathbf{w}_{L-1}\end{pmatrix}$, $\boldsymbol{\Gamma}_0=\text{diag}(\sigma^2_1,\dots,\sigma^2_N, 0,\dots,0)$, and the $ij$-th element of $\mathbf{R}_{\mathbf{Z}}(\mathbf{w}^{\text{stack}}_{L-1})$ is specified by $
k_L^{(1)}(\mathbf{w}^{\text{stack}}_{L-1,i*},\,\mathbf{w}^{\text{stack}}_{L-1,j*})+\eta\mathbbm{1}_{\{i=j\}}$ for $i,j=1,\dots,2N$.

Since $\mathbf{Z}$ is multivariate normal, the distribution of $\mathbf{Z}$ under the Vecchia approximation remains multivariate normal:
\begin{equation*}
    \mathbf{Z}\overset{\text{Vec}}{\sim}\mathcal{N}(\mathbf{0},\mathbf{P})\,,
\end{equation*}
where $\mathbf{P}$ is the precision matrix that admits the upper-lower Cholesky decomposition $\mathbf{P}=\mathbf{U}\mathbf{U}^\top$, where $\mathbf{U}$ is a sparse upper-triangular matrix. Conformably with the partition of $\mathbf{Z}$ into $\begin{pmatrix}\mathbf{Y}|\mathbf{W}_{L-1},\log\boldsymbol{\sigma}^2\\ \mathbf{F}^{(1)}|\mathbf{W}_{L-1}\end{pmatrix}$, $\mathbf{U}$ can be written as
\begin{equation*}
  \mathbf{U}=\begin{bmatrix}
  \mathbf{U}_{\mathbf{Y}\mathbf{Y}} & \mathbf{U}_{\mathbf{Y}\mathbf{F}^{(1)}} \\
  \mathbf{0} & \mathbf{U}_{\mathbf{F}^{(1)}\mathbf{F}^{(1)}}
  \end{bmatrix}\,,
\end{equation*}
where $\mathbf{U}_{\mathbf{Y}\mathbf{Y}}\in\mathbb{R}^{N\times N}$,
$\mathbf{U}_{\mathbf{Y}\mathbf{F}^{(1)}}\in\mathbb{R}^{N\times N}$, and
$\mathbf{U}_{\mathbf{F}^{(1)}\mathbf{F}^{(1)}}\in\mathbb{R}^{N\times N}$ denotes sub-matrices of $\mathbf{U}$ under this partition. 

Thus, we have
\begin{equation*}
    \mathbf{P} =
\begin{bmatrix}
\mathbf{U}_{\mathbf{Y}\mathbf{Y}}\mathbf{U}_{\mathbf{Y}\mathbf{Y}}^\top
+
\mathbf{U}_{\mathbf{Y}\mathbf{F}^{(1)}}\mathbf{U}_{\mathbf{Y}\mathbf{F}^{(1)}}^\top
&
\mathbf{U}_{\mathbf{Y}\mathbf{F}^{(1)}}\mathbf{U}_{\mathbf{F}^{(1)}\mathbf{F}^{(1)}}^\top
\\[4pt]
\mathbf{U}_{\mathbf{F}^{(1)}\mathbf{F}^{(1)}}\mathbf{U}_{\mathbf{Y}\mathbf{F}^{(1)}}^\top
&
\mathbf{U}_{\mathbf{F}^{(1)}\mathbf{F}^{(1)}}\mathbf{U}_{\mathbf{F}^{(1)}\mathbf{F}^{(1)}}^\top
\end{bmatrix}\,.
\end{equation*}

According to~\citet[Theorem~12.2]{gelfand2010handbook}, we then have
\begin{equation*}
    \mathbf{F}^{(1)}|\mathbf{W}_{L-1},\log\boldsymbol{\sigma}^2, \mathbf{Y}\sim\mathcal{N}\left(-\left(\mathbf{U}_{\mathbf{F}^{(1)}\mathbf{F}^{(1)}}^\top\right)^{-1}\mathbf{U}_{\mathbf{Y}\mathbf{F}^{(1)}}^\top\mathbf{y},\,  \left(\mathbf{U}_{\mathbf{F}^{(1)}\mathbf{F}^{(1)}}\mathbf{U}_{\mathbf{F}^{(1)}\mathbf{F}^{(1)}}^\top\right)^{-1}\right)
\end{equation*}
under the Vecchia approximation. Since $\boldsymbol{\mu} \mid \log \boldsymbol{\sigma}^2, \{\mathbf{W}^{(p)}_l\}, \mathbf{Y}, \mathbf{x} \;\overset{d}{=}\; \mathbf{F}^{(1)} \mid \log\boldsymbol{\sigma}^2, \mathbf{W}_{L-1}, \mathbf{Y}$, the proposition follows.

\section{Proof of Proposition~\ref{prop:hetero_case3}}
\label{sec:hetero_case3}
Given $\boldsymbol{\mu}$, we have that
\begin{equation}
\label{eq:ygivenmu}
\mathbf{Y}|\boldsymbol{\mu},\log\boldsymbol{\sigma}^2\sim\mathcal{N}(\mathbf{M}\boldsymbol{\mu}, \boldsymbol{\Lambda})\,,
\end{equation}
where $\boldsymbol{\Lambda}=\mathbf{M}\boldsymbol{\Gamma}\mathbf{M}^\top$ with $\boldsymbol{\Gamma}=\text{diag}(\sigma^2_1,\dots,\sigma^2_N)$. The likelihood is then given by:
\begin{equation*}
    p(\mathbf{y}|\boldsymbol{\mu},\log\boldsymbol{\sigma}^2)=(2\pi)^{-\sum^{N}_{i=1}S_i/2}|\boldsymbol{\Lambda}|^{-1/2}\exp\{-\frac{1}{2}(\mathbf{y}-\mathbf{M}\boldsymbol{\mu})^\top\boldsymbol{\Lambda}^{-1}(\mathbf{y}-\mathbf{M}\boldsymbol{\mu})\}\,.
\end{equation*}
Expand the quadratic form, we have:
\begin{equation*}
(\mathbf{y}-\mathbf{M}\boldsymbol{\mu})^\top\boldsymbol{\Lambda}^{-1}(\mathbf{y}-\mathbf{M}\boldsymbol{\mu}) = \mathbf{y}^\top\boldsymbol{\Lambda}^{-1}\mathbf{y}-2\boldsymbol{\mu}^\top\mathbf{M}^\top\boldsymbol{\Lambda}^{-1}\mathbf{y}+\boldsymbol{\mu}^\top\mathbf{M}^\top\boldsymbol{\Lambda}^{-1}\mathbf{M}\boldsymbol{\mu}\,.
\end{equation*}
Define
\begin{equation*}
    s(\mathbf{y})\overset{\text{def}}{=}\mathbf{M}^\top\boldsymbol{\Lambda}^{-1}\mathbf{y}\in\mathbb{R}^N
\end{equation*}
and
\begin{equation*}
\mathbf{D}\overset{\text{def}}{=}\mathbf{M}^\top\boldsymbol{\Lambda}^{-1}\mathbf{M}\in\mathbb{R}^{N\times N}\,,
\end{equation*}
then,
\begin{equation*}
p(\mathbf{y}|\boldsymbol{\mu},\log\boldsymbol{\sigma}^2)=(2\pi)^{-\sum^{N}_{i=1}S_i/2}|\boldsymbol{\Lambda}|^{-1/2}\exp\{-\frac{1}{2}\mathbf{y}^\top\boldsymbol{\Lambda}^{-1}\mathbf{y}\}\exp\{\boldsymbol{\mu}^\top s(\mathbf{y})-\frac{1}{2}\boldsymbol{\mu}^\top\mathbf{D}\boldsymbol{\mu}\}\,.
\end{equation*}
Let 
\begin{equation*}
h(\mathbf{y})=(2\pi)^{-\sum^{N}_{i=1}S_i/2}|\boldsymbol{\Lambda}|^{-1/2}\exp\{-\frac{1}{2}\mathbf{y}^\top\boldsymbol{\Lambda}^{-1}\mathbf{y}\}
\end{equation*}
and 
\begin{equation*}
g(s(\mathbf{y}),\boldsymbol{\mu})=\exp\{\boldsymbol{\mu}^\top s(\mathbf{y})-\frac{1}{2}\boldsymbol{\mu}^\top\mathbf{D}\boldsymbol{\mu}\}\,,
\end{equation*}
we have 
\begin{equation*}
p(\mathbf{y}|\boldsymbol{\mu},\log\boldsymbol{\sigma}^2)=h(\mathbf{y})g(s(\mathbf{y}),\boldsymbol{\mu})\,.
\end{equation*}
Thus,
\begin{align*}
    p(\boldsymbol{\mu}|\log\boldsymbol{\sigma}^2,\mathbf{w}_{L-1},\mathbf{y})&\propto p(\mathbf{y}|\boldsymbol{\mu},\log\boldsymbol{\sigma}^2)p(\boldsymbol{\mu}|\mathbf{w}_{L-1})\\
&=h(\mathbf{y})g(s(\mathbf{y}),\boldsymbol{\mu})p(\boldsymbol{\mu}|\mathbf{w}_{L-1})\\
&\propto g(s(\mathbf{y}),\boldsymbol{\mu})p(\boldsymbol{\mu}|\mathbf{w}_{L-1})\,,
\end{align*}
which depends on $\mathbf{y}$ only through $s(\mathbf{y})$. Therefore,
\begin{equation*}
     p(\boldsymbol{\mu}|\log\boldsymbol{\sigma}^2,\mathbf{w}_{L-1},\mathbf{y}) =  p(\boldsymbol{\mu}|\log\boldsymbol{\sigma}^2,\mathbf{w}_{L-1},s(\mathbf{y}))\,.
\end{equation*}

Define 
\begin{equation}
\label{eq:ytilde}
\tilde{\mathbf{Y}}=(\mathbf{M}^T\boldsymbol{\Lambda}^{-1}\mathbf{M})^{-1}\mathbf{M}^T\boldsymbol{\Lambda}^{-1}\mathbf{Y}\in\mathbb{R}^{N}\,.
\end{equation}

Since $\tilde{\mathbf{Y}}=\mathbf{D}^{-1}s(\mathbf{Y})$ is a deterministic transformation of $s(\mathbf{Y})$, conditioning on $s(\mathbf{Y})$ is equivalent to conditioning on $\tilde{\mathbf{Y}}$:
\begin{equation*}
p(\boldsymbol{\mu}|\log\boldsymbol{\sigma}^2,\mathbf{w}_{L-1},\mathbf{y}) =  p(\boldsymbol{\mu}|\log\boldsymbol{\sigma}^2,\mathbf{w}_{L-1},s(\mathbf{y})) =  p(\boldsymbol{\mu}|\log\boldsymbol{\sigma}^2,\mathbf{w}_{L-1},\tilde{\mathbf{y}})\,.
\end{equation*}

As a result, finding $p(\boldsymbol{\mu}|\log\boldsymbol{\sigma}^2,\mathbf{w}_{L-1},\mathbf{y})$ is equivalent to finding $p(\boldsymbol{\mu}|\log\boldsymbol{\sigma}^2,\mathbf{w}_{L-1},\tilde{\mathbf{y}})$.

Using Equation~\eqref{eq:ygivenmu} and definition~\eqref{eq:ytilde}, we have that $\tilde{\mathbf{Y}}|\boldsymbol{\mu},\log\boldsymbol{\sigma}^2$ is a multivariate normal:
\begin{equation*}
    \tilde{\mathbf{Y}}|\boldsymbol{\mu},\log\boldsymbol{\sigma}^2\sim\mathcal{N}(\boldsymbol{\mu}, \mathbf{D}^{-1})\,.
\end{equation*}

Define $\mathbf{Z}=\begin{pmatrix}\tilde{\mathbf{Y}}|\mathbf{W}_{L-1},\log\boldsymbol{\sigma}^2\\ \mathbf{F}^{(1)}|\mathbf{W}_{L-1}\end{pmatrix}$, then following the same arguments in Section~\ref{sec:hetero_case2}, we can obtain that $\mathbf{Z}$ is multivariate normal:
\begin{equation*}
\mathbf{Z}\sim\mathcal{N}\left(\mathbf{0},
\boldsymbol{\Sigma}_{\mathbf{Z}}\right)\,,
\end{equation*}
where
\begin{equation*}
    \boldsymbol{\Sigma}_{\mathbf{Z}} = \begin{pmatrix}
\boldsymbol{\Sigma}^{(1)}_L(\mathbf{w}_{L-1}) + \mathbf{D}^{-1} & \boldsymbol{\Sigma}^{(1)}_L(\mathbf{w}_{L-1})\\
\boldsymbol{\Sigma}^{(1)}_L(\mathbf{w}_{L-1}) & \boldsymbol{\Sigma}^{(1)}_L(\mathbf{w}_{L-1})
\end{pmatrix}\in\mathbb{R}^{2N\times2N}\,.
\end{equation*}

Using the same arguments in Section~\ref{sec:hetero_case2} again, we can obtain that 
\begin{equation*}
    \mathbf{F}^{(1)}|\mathbf{W}_{L-1},\log\boldsymbol{\sigma}^2, \tilde{\mathbf{Y}}\sim\mathcal{N}\left(-\left(\mathbf{U}_{\mathbf{F}^{(1)}\mathbf{F}^{(1)}}^\top\right)^{-1}\mathbf{U}_{\tilde{\mathbf{Y}}\mathbf{F}^{(1)}}^\top\tilde{\mathbf{y}},\,  \left(\mathbf{U}_{\mathbf{F}^{(1)}\mathbf{F}^{(1)}}\mathbf{U}_{\mathbf{F}^{(1)}\mathbf{F}^{(1)}}^\top\right)^{-1}\right)
\end{equation*}
under the Vecchia approximation. Since $p(\boldsymbol{\mu} \mid \log \boldsymbol{\sigma}^2, \{\mathbf{w}^{(p)}_l\}, \mathbf{y}, \mathbf{x})=p(\boldsymbol{\mu} \mid \log\boldsymbol{\sigma}^2, \mathbf{w}_{L-1}, \mathbf{y})$ and $p(\boldsymbol{\mu} \mid \log\boldsymbol{\sigma}^2, \mathbf{w}_{L-1}, \mathbf{y}) = p(\boldsymbol{\mu} \mid \log\boldsymbol{\sigma}^2, \mathbf{w}_{L-1}, \tilde{\mathbf{y}}) = p(\mathbf{f}^{(1)} \mid \log\boldsymbol{\sigma}^2, \mathbf{w}_{L-1}, \tilde{\mathbf{y}})$, the proposition follows.

\end{document}